\newcommand{\mbs}[1]{\pmb{#1}}
\newcommand{\vect}[1]{{\lowercase{\mbs{#1}}}}
\newcommand{\mat}[1]{{\uppercase{\mbs{#1}}}}
\newcommand{\T}{{\scriptscriptstyle\mathsf{T}}}
\renewcommand{\H}{{\scriptscriptstyle\mathsf{H}}}
\renewcommand{\Re}[1][]{\ifthenelse{\isempty{#1}}{\operatorname{Re}}{\operatorname{Re}\left(#1\right)}}
\renewcommand{\Im}[1][]{\ifthenelse{\isempty{#1}}{\operatorname{Im}}{\operatorname{Im}\left(#1\right)}}
\newcommand{\SNR}{\mathsf{snr}}
\newcommand{\rv}{\vect{r}}
\newcommand{\vv}{\vect{v}}
\newcommand{\Mm}{\mat{M}}
\newcommand{\Cc}{{\mathcal C}}
\newcommand{\Nc}{{\mathcal N}}
\newcommand{\Sc}{{\mathcal S}}
\newcommand{\CC}{\mathbb{C}}
\newcommand{\Id}{\mat{\mathrm{I}}}
\newcommand{\CN}[1][]{\ifthenelse{\isempty{#1}}{\mathcal{N}_{\mathbb{C}}}{\mathcal{N}_{\mathbb{C}}\left(#1\right)}}
\renewcommand{\P}[1][]{\ifthenelse{\isempty{#1}}{\mathbb{P}}{\mathbb{P}\left(#1\right)}}
\newcommand{\E}[1][]{\ifthenelse{\isempty{#1}}{\mathbb{E}}{\mathbb{E}\left[#1\right]}}
\newcommand{\Var}[1][]{\ifthenelse{\isempty{#1}}{\mathbb{E}}{\mathsf{Var}\left[#1\right]}}
\newcommand{\I}[1][]{\ifthenelse{\isempty{#1}}{\mathbb{I}}{\mathbb{I}\left\{#1\right\}}}
\renewcommand{\det}[1][]{\ifthenelse{\isempty{#1}}{\mathrm{det}}{\text{det}\left(#1\right)}}
\newcommand{\trace}[1][]{\ifthenelse{\isempty{#1}}{\mathrm{tr}}{\text{tr}\left(#1\right)}}
\newcommand{\rank}[1][]{\ifthenelse{\isempty{#1}}{\mathrm{rank}}{\text{rank}\left(#1\right)}}
\newcommand{\diag}[1][]{\ifthenelse{\isempty{#1}}{\mathrm{diag}}{\text{diag}\left(#1\right)}}
\newcommand{\Cov}[1][]{\ifthenelse{\isempty{#1}}{\mathsf{Cov}}{\mathsf{Cov}\left(#1\right)}}
\newcommand{\defeq}{\triangleq}
\newcommand{\eqdef}{\triangleq}
\newtheorem{proposition}{Proposition}
\newtheorem{remark}{Remark}[section]
\newtheorem{definition}{Definition}
\newtheorem{theorem}{Theorem}
\newtheorem{example}{Example}
\newtheorem{lemma}{Lemma}
\newcounter{enumi_saved}
\renewcommand{\rv}[1]{{\mathrm{#1}}}
\newcommand{\rvVec}[1]{\pmb{\mathrm{#1}}}
\newcommand{\rvMat}[1]{\pmb{\mathsf{#1}}}
\newcommand{\const}{\eta}
\newcommand{\nt}{{n_{\text{t}}}}
\newcommand{\normmem}{m}
\newcommand{\Ruc}{\mathcal{R}_{\rm {uni}}}
\newcommand{\Rmc}{\mathcal{R}_{\rm {mul}}}
\renewcommand{\defeq}{:=}
\renewcommand{\eqdef}{=:}
\newcommand{\SINR}{\rv{SINR}}
\renewcommand{\SNR}{\rv{SNR}}
\newcommand{\SNRmean}{\overline{\SNR}}
\newcommand{\normh}{\frac{\|\rvVec{H}_k\|^2}{n_t}}
\newcommand{\minSNRo}{\min_{k\in[K]} \SNR_k }
\newcommand{\ntgelnK}{\nt\ge\ln(K) + O(1)}
\newcommand{\LntgelnK}{L\nt \ge \ln(K) + O(1)}
\newcommand{\approxasympt}{\sim}
\title{Scalable Content Delivery with Coded Caching in Multi-Antenna Fading Channels} 
\author{
Khac-Hoang Ngo,~\IEEEmembership{Student Member,~IEEE,} Sheng Yang,~\IEEEmembership{Member,~IEEE,}~and\\
Mari Kobayashi,~\IEEEmembership{Senior Member,~IEEE}%
\thanks{The authors are with LSS, CentraleSup\'elec,  3 rue Joliot-Curie, 91190
Gif-sur-Yvette, France.~(e-mail: \texttt{\{khachoang.ngo, sheng.yang, mari.kobayashi\}@centralesupelec.fr})}%
\thanks{This paper was presented in part at the 2016 International Symposium on Turbo Codes \& Iterative Information Processing (Brest, France)~\cite{Yang2016contentdelivery}, the 2016 International Conference on Advanced Technologies for Communications (Hanoi, Vietnam)~\cite{hoang2016complementary}, and the 2016 54th Annual Allerton Conference on Communication, Control, and Computing (IL, USA)~\cite{hoang2015cacheaided}.}%
}
\begin{document}

\maketitle

\begin{abstract}
  We consider the content delivery problem in a fading multi-input single-output channel with cache-aided users. We are interested in the scalability of the equivalent content delivery rate when the number of users, $K$, is large. Analytical results show that, using coded caching and wireless multicasting, without channel state information at the transmitter~(CSIT), linear scaling of the content delivery rate with respect to $K$ can be achieved in some different ways. First, if the multicast transmission spans over $L$ independent sub-channels, e.g., in quasi-static fading if $L = 1$, and in block fading or multi-carrier systems if $L>1$, linear scaling can be obtained when the product of the number of transmit antennas and the number of sub-channels scales logarithmically with $K$. Second, even with a fixed number of antennas, we can achieve the linear scaling with a threshold-based user selection requiring only one-bit feedbacks from the users. When CSIT is available, we propose a mixed strategy that combines spatial multiplexing and multicasting. Numerical results show that, by optimizing the power split between spatial multiplexing and multicasting, we can achieve a significant gain of the content delivery rate with moderate cache size. 
\end{abstract}

\begin{IEEEkeywords}
content delivery, coded caching, massive MIMO, broadcast channels.
\end{IEEEkeywords}

\section{Introduction}
One critical issue in future wireless network is the expansion of wireless and mobile data traffic,
which is predicted to account for two-thirds of total data traffic by 2020~\cite{index2015global}. Massive MIMO, exploiting a huge number of antennas at the base station has been considered as a promising candidate to deal with the traffic expansion (see, e.g., \cite{larsson2014massive} and references therein). By creating parallel interference-free streams via spatial precoding (e.g. zero-forcing), multiple users can be simultaneously served. If the number of transmit antennas can scale with the number of users $K$, the total transmission time to serve $K$ users shall not increase with $K$ and the throughput of the system increases linearly with $K$. 
Another emerging solution, motivated by the ever-growing cheap on-board storage memory as well as the skewness of the video traffic, is {\em edge caching} \cite{maddah2013fundamental,maddahISIT2015Tutorial,bastug2014living,paschos2016wireless}. Namely, the traffic during peak hours can be substantially offloaded if we prefetch popular contents at the edge of the network. 
Recently, it has been shown by Maddah-Ali and Niesen that {\em coded caching} enables to achieve a constant number of 
total multicast transmissions to satisfy the demand of $K$ users
when $K$ is large \cite{maddah2013fundamental}. 
In contrast to parallel streams in massive MIMO, a careful design of cache placement enables to create a single stream which is simultaneously useful to multiple users. 

A common perception is that either massive MIMO or coded caching is {\em potentially} a scalable solution alone with respect to~(w.r.t.) the number of users. 
However, the scalability of these solutions actually relies on some ideal assumptions that may not hold in real systems. On one hand, the scalability of massive MIMO hinges on: 1)~the linearly increasing number of the transmit antennas w.r.t. the number of users, and 2)~the accuracy of CSIT. On the other hand, the scalability of coded caching relies on a \emph{non-vanishing} multicast rate of the underlying channel. It should be remarked that the pioneering work~\cite{maddah2013fundamental} and many follow-up extensions, e.g.,~\cite{maddah2013decentralized,niesen2013coded,pedarsani2013online,ji2015order},  ideally assumed an error-free shared link, which obviously fulfills the latter condition. Therefore, it is of practical and theoretical interest to address the following question from the engineering perspective: {\em is it beneficial to use both technologies?} 

In this paper, we investigate the scalability of the two solutions in the following simple setting. 
We consider the content delivery network where a $\nt$-antenna base station serves $K$ single-antenna users over an independent and identically distributed~(i.i.d.) Rayleigh fading downlink channel.
We consider $L$-parallel channel, where the transmission of a codeword spans over $L\ge 1$ interference-free sub-channels, such that in each sub-channel, the signal is perturbed by independent fading coefficients and independent noise. The case $L=1$, i.e., quasi-static fading channel, corresponds to low-mobility scenario or the latency constrained applications such as the video streaming with independently coded/decoded chunks. The case $L>1$ corresponds to higher mobility scenario or delay-tolerant applications where a codeword spans over a number of fading blocks, or multi-carrier systems where a codeword spans over a number of sub-carriers. 
Under this setting, we wish to study the complementary roles of massive MIMO (with spatial multiplexing) and coded caching (with multicasting). To this end, 
we define the \textit{equivalent content delivery rate} as a unified metric of the throughput performance. Our main focus is the {\em scalability}, i.e., the linear scaling of equivalent content delivery rate of two solutions in the large $K$ regime. The main findings 
  of the current work are three-fold and summarized below: 
\begin{enumerate}
	\item We reveal two different ways that can guarantee the scalability of the content delivery system without CSIT while building on multicasting and coded caching (Theorem~\ref{theo:scalableSchemes}):
	\begin{enumerate}
		\item using a large number $\nt$ of transmit antennas and/or spanning the transmission over a large number $L$ of independent sub-channels: we show that $\LntgelnK$ is sufficient;
		\item user selection scheduling with an arbitrary number of transmit antennas in quasi-static fading channel: we show that one-bit feedback is enough; 
	\end{enumerate}
	\item We show that massive MIMO with zero-forcing precoding using asymptotically more transmit antennas than users can also achieve linear delivery rate scaling as long as the CSIT error variance is bounded (Proposition~\ref{prop:Rsym}). 
	\item In order to further improve the overall content delivery rate, we propose to combine multicasting and spatial multiplexing with the optimal power split. The analysis together with numerical examples reveals that the proposed mixed scheme coincides with multicasting if the memory size is large enough or the total power is small (Proposition~\ref{prop:mixRate}, Remark~\ref{remark:optimalP0}). 
\end{enumerate}
We remark that some results (Propositions~\ref{prop:R0}, \ref{prop:R0blockfading}, \ref{prop:Rsym}, \ref{prop:mixRate}) in the paper can be used for rate scaling analysis of communication systems with multicasting and/or spatial multiplexing in general, i.e., even outside the scope of cache-aided content delivery.

The interplay between spatial multiplexing gain and coded caching gain in MIMO channels has been studied in recent works, including our earlier works~\cite{hoang2015cacheaided,hoang2016complementary,Yang2016contentdelivery}, as well as \cite{Shariatpanahi2017multiantenna,Elia2017,Piovano2017codedcachingMISO}. The work \cite{Shariatpanahi2017multiantenna}, following the ideas of \cite{Shariatpanahi2016multiserver}, proposed to deliver multiple coded multicast packets simultaneously by multiplexing them assuming full CSIT. The works \cite{Elia2017,Piovano2017codedcachingMISO} proposed to deliver one part of the requested file by multiplexing and the other part by multicasting in parallel using rate splitting. It is remarked that these works are different from ours in their underlying assumptions, designs, and objectives. First, when combining multicasting and spatial multiplexing, we focus mainly on the regime of massive MIMO where the number of transmit antennas grows with the number of users, while \cite{Shariatpanahi2017multiantenna,Elia2017,Piovano2017codedcachingMISO} study the case of $\nt \le K$. 
Second, our performance measure is the scaling of the long-term equivalent content delivery rate in the large $K$ regime. In \cite{Shariatpanahi2017multiantenna}, the similar content delivery rate is studied but focusing rather on the large SNR regime to see the degree-of-freedom gain, while the total transmission time is used in \cite{Elia2017,Piovano2017codedcachingMISO}. 
Finally, we restrict here to the off-the-shell placement strategies and assume that two independent information flows can be delivered by multicasting and multiplexing. On the other hand, \cite{Shariatpanahi2017multiantenna} and \cite{Elia2017,Piovano2017codedcachingMISO} propose some designs reflecting the network structure and CSIT quality, respectively, and let both multicasting and multiplexing contribute to one information flow. 


The remainder of the paper is organized as follows. The system model and performance metric is presented in
Section~\ref{sec:model}. Some mathematical preliminaries are provided in Section~\ref{sec:preliminaries}. The scalability
of the content delivery system with multicasting and with spatial multiplexing is discussed in
Section~\ref{sec:scalableMulticasting} and Section~\ref{sec:scalableMultiplexing}, respectively. In
Section~\ref{sec:mixed}, we propose the mixed delivery with simultaneous multicasting and spatial multiplexing to improve
the content delivery rate, and derive the optimal power split. Relevant numerical results are inserted in
Section~\ref{sec:scalableMulticasting} and Section~\ref{sec:mixed}. The paper is concluded in
Section~\ref{sec:conclusion}. Some of the proofs are presented in the main text whereas the more technical details are deferred to the appendix. 

{\em Notations:} For random variables, we use upper case non-italic letters, e.g., $\rv{X}$, for scalars, upper case non-italic bold letters, e.g., $\rvVec{V}$, for vectors, and upper case letter with bold and sans serif fonts, e.g.,
$\rvMat{M}$, for matrices.  Deterministic quantities are denoted 
with italic letters, e.g., a scalar $x$, a vector $\pmb{v}$, and a matrix $\pmb{M}$. 
The Euclidean norm of a vector is denoted by $\|\vv\|$. The transpose, conjugate, and conjugate
transpose of $\Mm$ are denoted $\Mm^\T$, $\Mm^*$, and $\Mm^\H$, respectively. We let $x^+ := \max\left\{x, 0 \right\}$. 
The indicator $\mathbbm{1}_{\{A\}}$ takes value $1$ if $A$ is true and $0$ otherwise. 
We use $[K]$ to denote the set of integers~$\{1,\ldots,K\}$. 
The convergence in distribution, in probability, and almost sure convergence are denoted $\xrightarrow{\text{d}}, \xrightarrow{\text{p}}$, and $\xrightarrow{\text{a.s.}}$, respectively.
${\rm Gamma}(k, \theta)$
denotes the Gamma distribution with shape $k$ and scale $\theta$, while ${\rm Exp}(\lambda)$ the exponential distribution with rate parameter
$\lambda$. The Gamma function is denoted by $\Gamma(x) = \int_{0}^{\infty}z^{x-1}e^{-z}dz$, while $\Gamma(x,t) = \int_{t}^{\infty}z^{x-1}e^{-z}dz$ and $\gamma(x,t)=\int_{0}^{t}z^{x-1}e^{-z}dz$ are the upper and lower incomplete Gamma
functions, respectively. 

The asymptotic notations $O, o, \Omega, \Theta, \approxasympt$ are w.r.t.~$K$, unless stated otherwise. Specifically, given two functions $f$ and $g$, we say: 1) $f(K) = O(g(K))$ if there exists a positive constant $c$ and an integer $K_0$ such that $|f(K)| \le c |g(K)|, \forall K \ge K_0$; 2) $f(K) = o(g(K))$ if $\lim\limits_{K\to\infty}\frac{f(K)}{g(K)} = 0$; 3) $f(K) = \Omega(g(K))$ if $g(K) = O(f(K))$; 4) $f(K) = \Theta(g(K))$ if both $f(K) = O(g(K))$ and $g(K) = O(f(K))$; and 5) $f(K)\approxasympt g(K)$ if ${\displaystyle\lim_{K\to\infty}} \frac{f(K)}{g(K)} = 1$.

\section{System model}
\label{sec:model}

\subsection{Content delivery model}
We consider a content delivery system 
where a content server is connected to $K$ users through a wireless downlink channel. This server has access to a library of $N$
files, assumed to be equally popular and with equal size $F$ bits for simplicity. 
Each user $k$ is equipped with a cache of size $MF$~bits, where $M \ge 1$ denotes the cache size measured in files. 
Prior to the actual request, each user can pre-fill their cache during off-peak hours, with supposedly negligible cost. 
We assume that each user $k\in [K]$ has a sequence of demands $d_k^{(1)},d_k^{(2)},\dots$, and one demand from each user is served at a time.
Upon the reception of a collection of $K$ requests from the users, and based on the cached contents
available to each user, the server encodes and sends the requested files through the delivery channel. 
In~\cite{maddah2013fundamental,maddah2013decentralized}, Maddah-Ali and Niesen proposed a caching/delivery scheme for error-free
multicast delivery channels. With such a scheme, known as \emph{coded caching}, the number of \emph{multicast} transmissions, normalized by the file size, needed to satisfy the demands of $K$ users is
\begin{multline} \label{eq:defT}
T(m,K) \defeq \\ \begin{cases}  
\left(1-m\right)\frac{1}{1/K+m},&\text{for centralized caching}, \\ 
\left(1-m\right)\frac{1-(1-m)^K}{m},&\text{for decentralized caching},
\end{cases} 
\end{multline}%
where $m \defeq \frac{M}{N}$ is the normalized cache memory.
The striking result is that the number of required transmissions converges to a constant as $K$ grows, i.e., $T(m,K) \xrightarrow{K\to\infty} \frac{1-m}{m}$, for both centralized and
decentralized caching. In other words, 
coded caching is
scalable in a system when the delivery channel is an error-free multicast channel.  

\subsection{Delivery channel model}
In this work, we consider a multi-antenna downlink channel, in which the content server is placed in a base station
with $\nt$ transmit antennas, and each of the $K$ users is equipped with a single antenna. We consider $L$-parallel channel in which the
transmission spans over $L$ independent coherence resource blocks (so-called sub-channels), with an emphasis on the case $L=1$.
When $L=1$, the channel is quasi-static fading
such that the channel coefficients remain unchanged during
the transmission of a whole coded block. In this case, receiver~$k$ at time $t$ has the observation 
\begin{align}
\rv{Y}_{k}[t] &= \rvVec{H}_k^\T  \,
\pmb{x}[t] + \rv{Z}_k[t], \quad t=1,2,\ldots,n, \label{eq:model}
\end{align}%
where $\pmb{x}[t]\in\mathbb{C}^{\nt\times1}$ is the input vector at time~$t$, with the average power
constraint $\frac{1}{n} \sum_{t=1}^n \|\pmb{x}[t]\|^2 \le P$; the additive noise process
$\{\rv{Z}_k[t]\}$ is assumed to be spatially and temporally white with normalized variance, i.e.,
$\rv{Z}_k[t] \sim \mathcal{CN}(0,1)$, $k\in [K]$. Since the additive noise power is normalized, the
transmit power $P$ is identified with the total signal-to-noise ratio (SNR) throughout the paper.
Hereafter, we omit the time index for simplicity. 
 For tractability, we assume that the channel is independent and symmetric across users with Rayleigh
fading, i.e., $\rvVec{H}_k \sim \mathcal{CN}(0, \Id_{\nt})$, $k\in [K]$. The whole channel matrix is denoted by $\rvMat{H} :=
[\rvVec{H}_1\ \cdots\ \rvVec{H}_K]^\T$. Then, in the general case $L\ge 1$, a codeword can span over $L$ sub-channels, such that both the fading coefficients and additive noise are independent but have the same statistics across sub-channels.

In practice, the channel state information (CSI) is not perfectly known at the transmitter, typically due to limited resource for uplink channel training in TDD~(time division duplex) or limited channel feedback bandwidth in FDD~(frequency division duplex). A common model for imperfect CSIT, modeling the minimum-mean-square-error (MMSE) channel estimation, is 
\begin{align}\label{eq:imperfectCSIT}
\rvMat{H} &= \hat{\rvMat{H}} + \tilde{\rvMat{H}},
\end{align}%
where $\hat{\rvMat{H}}$ and $\tilde{\rvMat{H}}$ are the mutually uncorrelated estimate and estimation error, with each entry of variance $1-\sigma^2$ and
$\sigma^2$, respectively. Since we assume Rayleigh fading, $\hat{\rvMat{H}}$ and $\tilde{\rvMat{H}}$ are independent and circularly symmetric Gaussian distributed. We assume that CSI is perfect at the receivers. 

\subsection{Equivalent content delivery rate and scalability}

In practice, we are interested in how fast the requested content can be available to the users. To that end, we formally define the long-term performance metric:
\begin{definition}
	The {\em equivalent content delivery rate}~(or, simply, content delivery rate or sum rate) is the number of total demanded information
	bits, including those already in the cache, that can be delivered per unit of time in average.
\end{definition}
For example, when $M=N$, the equivalent content delivery rate is $\infty$, since each user can have any content instantly.
Let $\bar{R}_0$ be the average multicast rate of the delivery channel in
 bits/second/Hz. To satisfy the demands of $K$ users, i.e., to \emph{complete} in total $K F$ demanded bits, we
need to send $T(\normmem,K) F$ bits, which takes $T(\normmem,K) F / \bar{R}_0$ units of time. It means that the equivalent content delivery rate of the system with coded caching is
\begin{align} \label{eq:Rmul}
  \Rmc = \frac{K}{T(\normmem,K)}\bar{R}_0(K,P) \quad \text{bits/second/Hz}. 
\end{align}%
Since the natural logarithm is more convenient for our purposes, we shall use ``nats'' instead of ``bits'' in the rest of
the paper, unless otherwise specified. Note that the formula~\eqref{eq:Rmul}, however, remains the same with a simple change of unit.

The system is {\em scalable} with the number $K$ of users if the equivalent content delivery rate scales at least linearly with $K$ when $K$ grows. With coded caching, it is enough to have a non-vanishing average multicast rate $\bar{R}_0(K,P)$.

\section{Some mathematical preliminaries}
\label{sec:preliminaries}

In this section, we provide some mathematical preliminaries that will be useful to
prove the main results. Sketches of proof will be provided in Appendix~\ref{app:preliminaries}.

\begin{lemma}[The Chernoff bound]
  For $N$ independent random variables $\rv{X}_n,~n = 1,\dots,N$, 
  \begin{align}
    \P[\sum_{n=1}^N \rv{X}_n \le x ] &\le e^{\nu x} \prod_{n=1}^N \E[e^{- \nu \rv{X}_n}], \label{eq:Chernoff} \\
    \P[\sum_{n=1}^N \rv{X}_n \ge x ] &\le e^{-\nu x} \prod_{n=1}^N \E[e^{\nu \rv{X}_n}], \quad \forall\,\nu>0.  \label{eq:Chernoff2}
  \end{align}%
\end{lemma}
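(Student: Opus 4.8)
The plan is to derive both inequalities from Markov's inequality applied to a suitable exponential transform of the sum, and then to invoke independence to factor the resulting expectation. Since $\nu > 0$ is fixed throughout, the map $t \mapsto e^{\nu t}$ is strictly increasing and positive; this monotonicity is precisely the feature that lets the exponential act as a ``soft'' indicator of the tail event while keeping the transformed variable nonnegative, as required by Markov's inequality.

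For the upper-tail bound \eqref{eq:Chernoff2}, I would first observe that the event $\{\sum_{n=1}^N \rv{X}_n \ge x\}$ coincides with $\{e^{\nu \sum_{n=1}^N \rv{X}_n} \ge e^{\nu x}\}$, because exponentiating by the positive factor $\nu$ preserves the direction of the inequality. Applying Markov's inequality to the nonnegative random variable $e^{\nu \sum_{n=1}^N \rv{X}_n}$ then yields $\P[\sum_{n=1}^N \rv{X}_n \ge x] \le e^{-\nu x}\, \E[e^{\nu \sum_{n=1}^N \rv{X}_n}]$. The last step is to factor the moment generating function: since the $\rv{X}_n$ are independent, so are the $e^{\nu \rv{X}_n}$, whence $\E[e^{\nu \sum_{n=1}^N \rv{X}_n}] = \prod_{n=1}^N \E[e^{\nu \rv{X}_n}]$, which is exactly \eqref{eq:Chernoff2}.

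The lower-tail bound \eqref{eq:Chernoff} follows the same template with the sign of the exponent flipped. Because $\nu > 0$, the map $t \mapsto e^{-\nu t}$ is decreasing, so $\{\sum_{n=1}^N \rv{X}_n \le x\}$ is the same event as $\{e^{-\nu \sum_{n=1}^N \rv{X}_n} \ge e^{-\nu x}\}$. Markov's inequality on $e^{-\nu \sum_{n=1}^N \rv{X}_n}$ gives $\P[\sum_{n=1}^N \rv{X}_n \le x] \le e^{\nu x}\, \E[e^{-\nu \sum_{n=1}^N \rv{X}_n}]$, and independence again factors the expectation into $\prod_{n=1}^N \E[e^{-\nu \rv{X}_n}]$, yielding \eqref{eq:Chernoff}.

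There is no substantive obstacle here: this is the textbook Chernoff argument, valid for every $\nu > 0$ regardless of whether the moment generating functions are finite (if some $\E[e^{\nu \rv{X}_n}]$ is infinite, the right-hand side is $+\infty$ and the bound holds vacuously). The only points demanding minimal care are keeping the orientation of the exponential monotonicity correct in each of the two cases, and noting that it is genuine \emph{independence}, not merely pairwise uncorrelatedness, that licenses factoring the expectation of the product into the product of expectations.
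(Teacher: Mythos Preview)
Your proof is correct and is precisely the standard textbook derivation of the Chernoff bound via Markov's inequality and independence. The paper itself does not supply a proof of this lemma at all (it is stated in Section~\ref{sec:preliminaries} as a known preliminary, and Appendix~\ref{app:preliminaries} begins with Lemma~\ref{lemma:CDFgamma}), so there is nothing to compare against; your argument is exactly what one would expect to fill the gap.
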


\begin{lemma}\label{lemma:CDFgamma}
  Let $F_\rv{X}(x)$ be the cumulative distribution function~(CDF) of random variable $\rv{X}\sim{\rm Gamma}(\nt,\frac{1}{\nt})$,\footnote{We recall that if $\rv{X}\sim \text{Gamma}(n,a)$, then
  $\rv{X}$ is equivalent to the sum of $n$ i.i.d.~exponential random variables ${\rm Exp}(\frac{1}{a})$. }  then
  \begin{align}
    F_{\rv{X}}(\const) &\le e^{-\nt}, \quad\text{with }\const \approx 0.1586. \label{eq:CDFgamma}
  \end{align}%
\end{lemma}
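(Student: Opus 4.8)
The plan is to exploit the representation of the Gamma variable as a sum of independent exponentials and then apply the lower-tail Chernoff bound \eqref{eq:Chernoff}. By the identity recalled in the footnote, $\rv{X}\sim\text{Gamma}(\nt,\tfrac{1}{\nt})$ is distributed as $\sum_{n=1}^{\nt}\rv{X}_n$ with $\rv{X}_n$ i.i.d.\ and $\rv{X}_n\sim\text{Exp}(\nt)$ (rate $\nt$, mean $1/\nt$, so $\E[\rv{X}]=1$). A direct integration gives the moment generating function $\E[e^{-\nu\rv{X}_n}]=\frac{\nt}{\nt+\nu}$ for every $\nu>-\nt$, which is the only distributional input I need.

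First I would apply \eqref{eq:Chernoff} with $x=\const$ and use independence to factor the product, obtaining for every $\nu>0$
\begin{align}
F_{\rv{X}}(\const)=\P[\sum_{n=1}^{\nt}\rv{X}_n\le\const]\le e^{\nu\const}\,\Bigl(\frac{\nt}{\nt+\nu}\Bigr)^{\nt}.
\end{align}
The key simplification is the reparametrization $\nu=\nt s$ with $s>0$, which turns the right-hand side into $\bigl(e^{s\const}/(1+s)\bigr)^{\nt}$, cleanly separating the exponent $\nt$ from a base that no longer depends on $\nt$. I can therefore optimize the base over $s$ once and for all, and the resulting bound will be tight in the exponent simultaneously for every $\nt$.

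Next I would minimize the base by minimizing $s\const-\ln(1+s)$ over $s>0$. Its derivative $\const-\tfrac{1}{1+s}$ vanishes at $s^\star=(1-\const)/\const>0$ (using $\const<1$), and the minimal value of the exponent is $1-\const+\ln\const$. This yields $F_{\rv{X}}(\const)\le\exp\bigl(\nt(1-\const+\ln\const)\bigr)$, so the claimed inequality $F_{\rv{X}}(\const)\le e^{-\nt}$ holds precisely when $1-\const+\ln\const\le-1$, equivalently $\ln\const-\const\le-2$.

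The remaining quantitative step is to pin down the admissible range of $\const$. Since $\const\mapsto\ln\const-\const$ has derivative $\tfrac{1}{\const}-1>0$ on $(0,1)$, it is strictly increasing there, so $\ln\const-\const\le-2$ holds for all $\const$ up to the unique root of $\ln\const-\const+2=0$ in $(0,1)$; a numerical check places this root at $\const\approx0.1586$ (indeed $\ln(0.1586)-0.1586\approx-2.0002$). I do not anticipate a real obstacle: the argument is a textbook Chernoff calculation, and the only mild subtlety is verifying that the transcendental equation $\ln\const-\const+2=0$ has its relevant root at the stated value, which certifies the constant in the lemma.
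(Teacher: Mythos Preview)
Your proposal is correct and follows essentially the same approach as the paper: both apply the Chernoff bound \eqref{eq:Chernoff} to the sum of $\nt$ i.i.d.\ ${\rm Exp}(\nt)$ variables, optimize over the Chernoff parameter (your $s^\star=(1-\const)/\const$ is exactly the paper's $\nu^*=\nt(\const^{-1}-1)$ after the substitution $\nu=\nt s$), and arrive at the identical bound $F_{\rv{X}}(\const)\le e^{-\nt(\const-1-\ln\const)}$, whence $\const\approx0.1586$ solves $\const-\ln\const=2$. The only cosmetic difference is that you reparametrize before optimizing, which makes the $\nt$-independence of the optimal base explicit.
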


\begin{lemma}\label{lemma:expectation}
  For $K$ i.i.d.~non-negative random variables $\rv{X}_k,~k = 1, \dots, K$, with the common CDF $F_{\rv{X}}(x)$,  
  \begin{align}
    \E[ \min_{k\in[K]}\rv{X}_k] \ge x_0 [1-F_{\rv{X}}(x_0)]^K,~\forall x_0
    \ge 0. \label{eq:Markov1}
  \end{align}%
  If  $F_{\rv{X}}(x)$ is strictly increasing, then for any $c>0$,
  \begin{align}
    \frac{\E[ \min_{k\in[K]}\rv{X}_k]}{F^{-1}_{\rv{X}}(\frac{c}{K})} \ge
    e^{-c} + o(1), \quad \text{when } K\to\infty. \label{eq:Markov2}
  \end{align}%
\end{lemma}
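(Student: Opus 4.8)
The plan is to prove the two bounds in sequence, obtaining the second as a corollary of the first through an optimized choice of $x_0$. Throughout, write $\rv{Y} := \min_{k\in[K]} \rv{X}_k$, which is non-negative since each $\rv{X}_k$ is.

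For \eqref{eq:Markov1}, I would use a one-sided truncated first-moment bound. Because $\rv{Y}\ge 0$, for any $x_0\ge 0$ we have the pointwise inequality $\rv{Y} \ge x_0\,\mathbbm{1}_{\{\rv{Y}\ge x_0\}}$, and taking expectations gives $\E[\rv{Y}] \ge x_0\,\P[\rv{Y}\ge x_0]$. The event $\{\rv{Y}\ge x_0\}$ is exactly $\bigcap_{k}\{\rv{X}_k\ge x_0\}$, so independence and identical distribution yield $\P[\rv{Y}\ge x_0] = \prod_{k=1}^K \P[\rv{X}_k\ge x_0] = \P[\rv{X}_1\ge x_0]^K$. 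Finally $\P[\rv{X}_1\ge x_0] = 1-\P[\rv{X}_1< x_0] \ge 1-F_{\rv{X}}(x_0)$, which gives the claimed bound. The only point requiring care is the distinction between $\P[\rv{X}_1< x_0]$ and $F_{\rv{X}}(x_0)=\P[\rv{X}_1\le x_0]$ at atoms of the distribution; the inequality points the right way, so \eqref{eq:Markov1} holds with no continuity assumption whatsoever.

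For \eqref{eq:Markov2}, I would specialize \eqref{eq:Markov1} to $x_0 = F^{-1}_{\rv{X}}(c/K)$. Since $F_{\rv{X}}$ is strictly increasing, the inverse is well defined and $F_{\rv{X}}(x_0)=c/K$ exactly, so \eqref{eq:Markov1} becomes $\E[\rv{Y}] \ge F^{-1}_{\rv{X}}(c/K)\,(1-c/K)^K$. For $K$ large enough that $c/K$ lies in the range of $F_{\rv{X}}$, strict monotonicity together with non-negativity of $\rv{X}_1$ forces $F^{-1}_{\rv{X}}(c/K)>0$, so dividing both sides by it preserves the inequality and leaves $\E[\rv{Y}]/F^{-1}_{\rv{X}}(c/K) \ge (1-c/K)^K$. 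It then remains to expand $(1-c/K)^K = \exp(K\ln(1-c/K)) = \exp(-c + O(1/K)) = e^{-c} + o(1)$, which delivers \eqref{eq:Markov2}.

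I do not expect a serious obstacle: the result is essentially Markov's inequality dressed up with an order-statistic computation and a standard exponential limit. The two subtleties worth stating explicitly are (i) that $\P[\rv{X}_1\ge x_0]\ge 1-F_{\rv{X}}(x_0)$ rather than an equality, so that \eqref{eq:Markov1} needs no continuity hypothesis, and (ii) the positivity of $F^{-1}_{\rv{X}}(c/K)$ for large $K$, which is exactly what allows dividing through in \eqref{eq:Markov2} while keeping the inequality direction intact.
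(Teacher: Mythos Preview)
Your proposal is correct and follows essentially the same route as the paper: define $\rv{Y}=\min_k\rv{X}_k$, apply the tail bound $\E[\rv{Y}]\ge x_0\,\P[\rv{Y}\ge x_0]$ (which the paper calls Markov's inequality), compute the tail via independence, and then specialize $x_0=F_{\rv{X}}^{-1}(c/K)$ together with $(1-c/K)^K\to e^{-c}$. Your explicit remarks on the atom issue and the positivity of $F_{\rv{X}}^{-1}(c/K)$ are small clarifications the paper omits, but the argument is otherwise identical.
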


\begin{lemma} \label{lemma:log-asympt}
  For a sequence of nonnegative random variables $\{ \rv{X}_K \}$, when $K\to\infty$
  \begin{enumerate}
    \item if $\E[\rv{X}_K] = \Theta(1)$, then $\E[\ln(1+\rv{X}_K)]
      = \Theta(1)$;
    \item if $\E[\rv{X}_K] = o(1)$ and $\E[\rv{X}_K^2] = o(\E[\rv{X}_K])$, then $\E[\ln(1+\rv{X}_K)]
      \approxasympt \E[\rv{X}_K]$;
    \item if $\E[\rv{X}_K] = \Theta(g(K))$ with $g(K) \to \infty$ and $\Pr\left(\rv{X}_K \ge g(K)\right) \ge \rho$ for some constant $\rho > 0$, then $\E[\ln(1+\rv{X}_K)] = \Theta(\ln(1+g(K)))$.   
  \end{enumerate}
\end{lemma}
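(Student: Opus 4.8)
The plan is to establish each of the three claims by sandwiching $\E[\ln(1+\rv{X}_K)]$ between a matching upper and lower bound, relying on three elementary facts about the concave map $x \mapsto \ln(1+x)$ on $[0,\infty)$: the bound $\ln(1+x) \le x$, Jensen's inequality $\E[\ln(1+\rv{X}_K)] \le \ln(1+\E[\rv{X}_K])$, and the quadratic lower bound $\ln(1+x) \ge x - \tfrac{1}{2}x^2$ (valid for $x\ge 0$). In every part the upper bound comes from concavity, and it is the lower bound that invokes the part-specific hypothesis.

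For part~2, write $\mu_K := \E[\rv{X}_K] = o(1)$. The inequality $\ln(1+x) \le x$ yields $\E[\ln(1+\rv{X}_K)] \le \mu_K$, while $\ln(1+x) \ge x - \tfrac{1}{2}x^2$ gives $\E[\ln(1+\rv{X}_K)] \ge \mu_K - \tfrac{1}{2}\E[\rv{X}_K^2] = \mu_K\bigl(1 - o(1)\bigr)$, the last step using $\E[\rv{X}_K^2] = o(\mu_K)$. Dividing by $\mu_K$ shows the ratio tends to $1$, i.e. $\E[\ln(1+\rv{X}_K)] \approxasympt \mu_K$.

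For part~3, the upper bound follows from Jensen together with $\E[\rv{X}_K] = \Theta(g(K))$ and $g(K)\to\infty$, since then $\ln\bigl(1+\Theta(g(K))\bigr) = \Theta(\ln(1+g(K)))$. For the lower bound I would discard the mass off the event $\mathcal{A}_K := \{\rv{X}_K \ge g(K)\}$ and use monotonicity of $\ln(1+\cdot)$:
\begin{align}
\E[\ln(1+\rv{X}_K)] \ge \E[\ln(1+\rv{X}_K)\,\mathbbm{1}_{\mathcal{A}_K}] \ge \ln(1+g(K))\,\Pr(\mathcal{A}_K) \ge \rho\,\ln(1+g(K)),
\end{align}
which matches the upper bound up to the constant $\rho$, giving the claimed $\Theta(\ln(1+g(K)))$.

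For part~1 the upper bound is immediate from Jensen, as $\E[\rv{X}_K] = \Theta(1)$ forces $\ln(1+\E[\rv{X}_K]) = O(1)$. The lower bound is the delicate point and, I expect, the main obstacle: the hypothesis $\E[\rv{X}_K]=\Theta(1)$ alone does \emph{not} imply $\E[\ln(1+\rv{X}_K)]=\Omega(1)$, because the mean can be carried by a vanishing-probability atom at a diverging value (taking $\rv{X}_K = K$ with probability $1/K$ gives unit mean yet $\E[\ln(1+\rv{X}_K)] \to 0$). To close the argument one must rule out this escape of mass, i.e. establish uniform integrability of $\{\rv{X}_K\}$; in the applications this is supplied by a uniform second-moment bound $\E[\rv{X}_K^2] = O(1)$, since Cauchy--Schwarz gives $\E[\rv{X}_K\,\mathbbm{1}_{\{\rv{X}_K > B\}}] \le \sqrt{\E[\rv{X}_K^2]}\,\sqrt{\Pr(\rv{X}_K > B)} \to 0$ as $B\to\infty$. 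With such control, I would fix $B$ large enough that $\E[\rv{X}_K\,\mathbbm{1}_{\{\rv{X}_K \le B\}}]$ stays bounded below by a positive constant, and then combine this with the bound $\ln(1+x) \ge \frac{\ln(1+B)}{B}\,x$ on $[0,B]$ (valid because $\ln(1+x)/x$ is decreasing) to conclude $\E[\ln(1+\rv{X}_K)] = \Omega(1)$, hence $\Theta(1)$.
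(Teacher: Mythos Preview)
Your arguments for parts~2 and~3 are essentially identical to the paper's: the same quadratic lower bound $\ln(1+x)\ge x-\tfrac12 x^2$ for part~2, and the same Jensen-upper/tail-truncation-lower sandwich for part~3.

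For part~1 you have actually put your finger on a genuine defect in the \emph{statement}, and your counterexample ($\rv{X}_K=K$ with probability $1/K$) is exactly the right one. The paper's own proof of part~1 proceeds by asserting that $\E[\rv{X}_K]=\Theta(1)$ forces the existence of constants $c,\rho>0$ with $\Pr(\rv{X}_K\ge c)\ge\rho$ for all large $K$, ``otherwise we would have $\E[\rv{X}_K]=o(1)$'', and then bounds $\E[\ln(1+\rv{X}_K)]\ge\rho\ln(1+c)$. Your counterexample shows this implication is false: one can have $\E[\rv{X}_K]\equiv 1$ while $\Pr(\rv{X}_K\ge c)\to 0$ for every fixed $c>0$. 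So the paper's proof of part~1 contains precisely the gap you anticipated, and your proposed repair via uniform integrability (supplied in the paper's applications by the companion second-moment bounds in Lemmas~\ref{lemma:SNR0} and~\ref{lemma:SNR01}) is the correct way to close it. Your chord bound $\ln(1+x)\ge\tfrac{\ln(1+B)}{B}x$ on $[0,B]$ then gives a clean lower bound once $\E[\rv{X}_K\mathbbm{1}_{\{\rv{X}_K\le B\}}]$ is controlled from below.
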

The set of random variables $\normh$, $k\in[K]$ are
i.i.d.~${\rm Gamma}(\nt,\frac{1}{\nt}$) with mean $1$. The following lemmas
describe the asymptotic behavior of the minimum value when $K$ is large.\footnote{Explicitly, $\E \left[\displaystyle\min_{k\in[K]} \normh\right] = \frac{1}{\nt}\sum_{i=0}^{K(\nt-1)} c_i i!K^{-i-1}$ where $c_i$ is defined recursively: $c_0 = 1$ and $c_i = \frac{1}{i} \sum_{j = 1}^{\min\{i,\nt-1\}} \frac{(K+1)j-i}{j!}c_{j-i}, \forall i\ge 1$. This close-form expression, however, does not bring further insights on the asymptotic behavior of $\E \left[\displaystyle\min_{k\in[K]} \normh\right]$.} 
\begin{lemma} \label{lemma:SNR0}
      When $\nt$ is fixed, as $K\to\infty$, the random variable $a_K{\displaystyle\min_{k\in[K]}} \normh$ with $a_K \defeq \nt\left(\frac{K}{\nt!}\right)^{1/\nt}$
  converges \emph{of mean}\footnote{The convergence {\em of mean} of a sequence of random variables $\{\rv{Y}_K\}_K$ to a given random variable $\rv{Y}$ is defined as $\lim\limits_{K\to \infty}\E [\rv{Y}_K] = \E [\rv{Y}]$. It implies the convergence in distribution but is weaker than the convergence \emph{in mean} $\lim\limits_{K\to \infty} \E[{|\rv{Y}_K-\rv{Y}|}^r] = 0, ~r \ge 1$.} to a random variable $\rv{Y}$ with CDF $F_\rv{Y}(y) = 1-e^{-y^\nt}$, i.e.,
  \begin{align}
    \lim\limits_{K\to \infty} \E \left[a_K{\displaystyle\min_{k\in[K]}} \normh\right] &= \E[\rv{Y}] = \Gamma
\left(1+\frac{1}{\nt}\right), \label{eq:conv-mean}\\
\hspace{-.7cm}\lim\limits_{K\to\infty} \E \left[\left(a_K{\displaystyle\min_{k\in[K]}} \normh\right)^2\right] &=
\E[\rv{Y}^2] = \Gamma \left(1+\frac{2}{\nt}\right). \label{eq:conv-mean2}
  \end{align}%
\end{lemma}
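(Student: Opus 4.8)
The plan is to work with the survival function of the scaled minimum and reduce both moment limits to limits of one-dimensional integrals. Writing $\rv{X} := \normh \sim {\rm Gamma}(\nt,1/\nt)$ with CDF $F_{\rv{X}}$, and $\rv{Z}_K := a_K \min_{k\in[K]} \normh$, independence of the $\rvVec{H}_k$ gives $\P[\rv{Z}_K > z] = (1-F_{\rv{X}}(z/a_K))^K$ for $z\ge 0$. Since $\rv{Z}_K \ge 0$, the tail formula for moments yields
\begin{align}
\E[\rv{Z}_K] &= \int_0^{\infty} (1-F_{\rv{X}}(z/a_K))^K \, dz, \\
\E[\rv{Z}_K^2] &= \int_0^{\infty} 2z\,(1-F_{\rv{X}}(z/a_K))^K \, dz.
\end{align}
The target values are the matching moments of $\rv{Y}$: with $F_{\rv{Y}}(y)=1-e^{-y^{\nt}}$, the substitution $u=y^{\nt}$ gives $\E[\rv{Y}]=\int_0^{\infty} e^{-y^{\nt}}\,dy=\Gamma(1+1/\nt)$ and $\E[\rv{Y}^2]=\int_0^{\infty} 2y\,e^{-y^{\nt}}\,dy=\Gamma(1+2/\nt)$, which already identifies the claimed limits. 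It therefore suffices to show the two integrands converge, pointwise and in a dominated sense, to $e^{-z^{\nt}}$ and $2z\,e^{-z^{\nt}}$.

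First I would establish the pointwise limit. The density of $\rv{X}$ behaves like $\frac{\nt^{\nt}}{(\nt-1)!}x^{\nt-1}$ as $x\to 0$, so $F_{\rv{X}}(x)=\frac{\nt^{\nt}}{\nt!}x^{\nt}(1+o(1))$ near the origin. Using $a_K^{\nt}=\nt^{\nt}K/\nt!$, for fixed $z$ the argument $z/a_K\to 0$ and hence $F_{\rv{X}}(z/a_K)=\frac{z^{\nt}}{K}(1+o(1))$; the standard limit $(1-\frac{z^{\nt}}{K}(1+o(1)))^K\to e^{-z^{\nt}}$ then gives $\P[\rv{Z}_K>z]\to e^{-z^{\nt}}$ pointwise, which is already the convergence in distribution to $\rv{Y}$.

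The main work, and the main obstacle, is to upgrade pointwise convergence of the integrands to convergence of the integrals, i.e.\ to justify interchanging the limit with an integral over an unbounded, $K$-dependent domain. I would split at $z=x_0 a_K$ for a small fixed $x_0>0$. By the origin asymptotics there exist $x_0>0$ and a constant $c_0$ arbitrarily close to $\frac{\nt^{\nt}}{\nt!}$ with $F_{\rv{X}}(x)\ge c_0 x^{\nt}$ on $[0,x_0]$; taking $c_0=\frac{\nt^{\nt}}{\nt!}(1-\epsilon)$ and using $1-u\le e^{-u}$ yields the uniform domination $(1-F_{\rv{X}}(z/a_K))^K\le e^{-(1-\epsilon)z^{\nt}}$ for all $z\le x_0 a_K$. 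Since $e^{-(1-\epsilon)z^{\nt}}$ and $2z\,e^{-(1-\epsilon)z^{\nt}}$ are integrable on $[0,\infty)$, dominated convergence applied to the integrands restricted to $[0,x_0 a_K]$ (a region that expands to all of $[0,\infty)$) gives convergence of these truncated integrals to $\Gamma(1+1/\nt)$ and $\Gamma(1+2/\nt)$.

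It then remains to show the tails over $(x_0 a_K,\infty)$ are negligible. There $F_{\rv{X}}(z/a_K)\ge F_{\rv{X}}(x_0)>0$, so factoring one power and integrating the remaining survival function,
\begin{align}
\int_{x_0 a_K}^{\infty} (1-F_{\rv{X}}(z/a_K))^K \, dz &\le (1-F_{\rv{X}}(x_0))^{K-1}\, a_K \int_{x_0}^{\infty}(1-F_{\rv{X}}(u))\,du \le (1-F_{\rv{X}}(x_0))^{K-1} a_K,
\end{align}
using $\int_0^{\infty}(1-F_{\rv{X}})=\E[\rv{X}]=1$; the analogous second-moment bound carries an extra factor $a_K$ and the finite constant $\E[\rv{X}^2]=1+1/\nt$. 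Since $a_K=\Theta(K^{1/\nt})$ grows polynomially while $(1-F_{\rv{X}}(x_0))^{K-1}$ decays exponentially, both tails vanish, and combining them with the truncated limits gives \eqref{eq:conv-mean} and \eqref{eq:conv-mean2}. The point needing care is the choice of $c_0$ strictly below the true leading coefficient: the parameter $\epsilon$ enters only the dominating function (to secure integrability) and not the pointwise limit, so the limit value is still $\int_0^{\infty} e^{-z^{\nt}}\,dz$, and one lets the truncation region fill out $[0,\infty)$ only after the domination is in place.
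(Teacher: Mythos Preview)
Your proposal is correct and reaches the same conclusion as the paper, but by a genuinely different route. The paper argues in two steps: first convergence in distribution of $a_K\min_k\normh$ to $\rv{Y}$ via a Taylor expansion of the Gamma CDF at the origin (essentially your pointwise-limit step), and then uniform integrability of $\{\rv{U}_K\}$ and $\{\rv{U}_K^2\}$, invoking Billingsley's theorem that convergence in distribution together with uniform integrability implies convergence of means. Their uniform integrability argument rests on the monotonicity in $K$ of the survival function $(1-F_{\rv{X}}(x/a_K))^K$, so that the tail expectation $\E[\rv{U}_K\mathbbm{1}_{\{\rv{U}_K\ge\omega\}}]$ is maximized at $K=1$ and hence vanishes uniformly as $\omega\to\infty$. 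You instead write the moments directly as tail integrals and pass to the limit via dominated convergence: a lower bound $F_{\rv{X}}(x)\ge c_0 x^{\nt}$ near the origin furnishes an integrable envelope $e^{-(1-\epsilon)z^{\nt}}$ on the expanding window $[0,x_0 a_K]$, and the residual tail is killed by the geometric factor $(1-F_{\rv{X}}(x_0))^{K-1}$ against the polynomial growth of $a_K$. Your argument is more elementary and fully self-contained (no appeal to an external convergence theorem), while the paper's is shorter once the Billingsley machinery and the monotonicity observation are in hand; the two are equivalent in strength, and your tail-splitting is in effect a concrete verification of uniform integrability.
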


\begin{lemma} \label{lemma:SNR01}
  When $\nt$ grows at least logarithmically with $K$ such that $\ntgelnK$, we have
  \begin{align}
    \E \left[\min_{k\in[K]} \normh\right] = \Theta(1), \label{eq:tmp888} \\
    \E \left[\left(\min_{k\in[K]} \normh\right)^2\right] = \Theta(1). \label{eq:tmp889} 
  \end{align}%
  Further, if $\nt$ grows faster than $\ln(K)$ such that $\ln(K) = o(\nt)$, we have
  \begin{align}
    \min_{k\in[K]} \normh \xrightarrow{\text{p}} 1. \label{eq:fast}
  \end{align}%
\end{lemma}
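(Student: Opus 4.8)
\emph{Proof plan.} The plan is to write $\rv{X}_k := \normh$, $k\in[K]$, so that the $\rv{X}_k$ are i.i.d.\ ${\rm Gamma}(\nt,1/\nt)$ with mean $1$ and variance $1/\nt$, and to let $F$ denote their common CDF. Each moment estimate \eqref{eq:tmp888}--\eqref{eq:tmp889} then splits into a matching upper and lower bound. The upper bounds are immediate: since $\min_{k\in[K]}\rv{X}_k \le \rv{X}_1$ and the variables are nonnegative, $\E[\min_{k\in[K]}\rv{X}_k] \le \E[\rv{X}_1] = 1$ and $\E[(\min_{k\in[K]}\rv{X}_k)^2] \le \E[\rv{X}_1^2] = 1 + 1/\nt$, both $O(1)$.

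For the lower bound I would invoke \eqref{eq:Markov1} of Lemma~\ref{lemma:expectation} at the \emph{fixed} threshold $x_0 = \const$ supplied by Lemma~\ref{lemma:CDFgamma}. That lemma gives $F(\const) \le e^{-\nt}$, and the hypothesis $\ntgelnK$ upgrades this to $F(\const) \le e^{-\nt} \le c/K$ for some constant $c>0$. Hence $[1-F(\const)]^K \ge (1-c/K)^K \to e^{-c} > 0$, so $\E[\min_{k\in[K]}\rv{X}_k] \ge \const\,[1-F(\const)]^K = \Omega(1)$, which together with the upper bound proves \eqref{eq:tmp888}. The second-moment lower bound then follows from Jensen's inequality, $\E[(\min_{k\in[K]}\rv{X}_k)^2] \ge (\E[\min_{k\in[K]}\rv{X}_k])^2 = \Omega(1)$, giving \eqref{eq:tmp889}.

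For \eqref{eq:fast}, I would fix $\epsilon\in(0,1)$ and bound the two tails under the stronger hypothesis $\ln(K)=o(\nt)$ (which forces $\nt\to\infty$). The upper tail is controlled by a single variable: $\{\min_{k\in[K]}\rv{X}_k > 1+\epsilon\}\subseteq\{\rv{X}_1>1+\epsilon\}$, and since $\mathrm{Var}(\rv{X}_1)=1/\nt\to0$, Chebyshev gives $\rv{X}_1\xrightarrow{\text{p}}1$, so $\P[\min_{k\in[K]}\rv{X}_k > 1+\epsilon]\to0$. For the lower tail I would use a union bound, $\P[\min_{k\in[K]}\rv{X}_k < 1-\epsilon]\le K\,F(1-\epsilon)$, and estimate $F(1-\epsilon)$ by the Chernoff bound \eqref{eq:Chernoff}: writing $\rv{X}_1=\frac1\nt\sum_{i=1}^{\nt}\rv{E}_i$ with $\rv{E}_i$ i.i.d.\ ${\rm Exp}(1)$ and $\E[e^{-\nu\rv{E}_i}]=1/(1+\nu)$, optimizing over $\nu>0$ (optimum $\nu=\epsilon/(1-\epsilon)$) yields $F(1-\epsilon)\le e^{-\nt I(\epsilon)}$ with rate $I(\epsilon):=-\epsilon-\ln(1-\epsilon)>0$. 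Then $K\,F(1-\epsilon)\le \exp(\ln(K)-\nt I(\epsilon))\to0$ exactly because $\ln(K)=o(\nt)$. Combining the two tails gives $\min_{k\in[K]}\rv{X}_k\xrightarrow{\text{p}}1$.

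The conceptual crux, and the only genuinely delicate point, is calibrating the decay of the Gamma CDF at the fixed point $\const$ against the $1/K$ scale: the $\Omega(1)$ bound works precisely because Lemma~\ref{lemma:CDFgamma} supplies the exponential decay $e^{-\nt}$ there, and $\ntgelnK$ is exactly the threshold making $Ke^{-\nt}=O(1)$. I expect the main effort in \eqref{eq:fast} to be the large-deviation lower-tail exponent $I(\epsilon)$ and verifying that $\nt I(\epsilon)$ dominates $\ln(K)$; this is also where one sees why the weaker hypothesis $\ntgelnK$ yields only $\Theta(1)$ and not convergence to the mean $1$.
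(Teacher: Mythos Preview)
Your proof is correct and follows essentially the same approach as the paper: both use Lemma~\ref{lemma:CDFgamma} at the fixed threshold $\const$ combined with Lemma~\ref{lemma:expectation} for the $\Theta(1)$ lower bounds, Jensen for the second moment, and the Chernoff lower-tail exponent $I(\epsilon)=-\epsilon-\ln(1-\epsilon)$ for~\eqref{eq:fast}. The only cosmetic differences are that you invoke \eqref{eq:Markov1} directly rather than routing through \eqref{eq:Markov2} and $F^{-1}$, you use Chebyshev instead of Chernoff for the upper tail, and you use a union bound rather than the exact expression $1-(1-F(1-\epsilon))^K$ for the lower tail---all harmless simplifications.
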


\section{Scalable content delivery with wireless multicasting}
\label{sec:scalableMulticasting}

In this section, we focus on content delivery via wireless multicasting. Unlike in the original
works~\cite{maddah2013fundamental,maddah2013decentralized} on coded caching where the multicast link is perfect and has constant rate,
here the multicasting is performed over a multi-antenna wireless channel. Therefore, the multicast rate depends on the system parameters
such as the number of users and the number of transmit antennas. 
We summarize the main results of this section with a fixed transmit power in the following theorem.
\begin{theorem} \label{theo:scalableSchemes}
	Let us consider a content delivery system with a $\nt$-antenna base station 
	and $K$ single-antenna users. We assume no CSIT and coded caching is used with wireless multicasting. 
        Then, linear scaling of the content delivery rate w.r.t.~$K$ can be achieved with a fixed transmit power in the following cases:
	\begin{enumerate}
		\item with a large array of transmit antennas and/or when the multicast transmission can span over $L\ge 1$ independent sub-channels such that $\LntgelnK$;
		\item with a threshold-based user selection using one-bit feedbacks in a quasi-static fading channel, for an arbitrary number of transmit
                  antennas.
	\end{enumerate} 
\end{theorem}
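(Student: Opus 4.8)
\emph{Proof proposal.} The plan is to reduce all three parts to a single statement about the average multicast rate. By~\eqref{eq:Rmul} we have $\Rmc = \frac{K}{T(\normmem,K)}\bar{R}_0(K,P)$, and since $T(\normmem,K)\to\frac{1-\normmem}{\normmem}$ as $K\to\infty$, the prefactor obeys $\frac{K}{T(\normmem,K)} = \Theta(K)$; hence linear scaling holds as soon as the average multicast rate is non-vanishing, i.e.\ $\bar{R}_0(K,P) = \Theta(1)$. Without CSIT the transmitter uses an isotropic input $\pmb{x}\sim\mathcal{CN}(0,\frac{P}{\nt}\Id_{\nt})$, so a common (multicast) message is decodable by every user at the worst-user rate, and because $x\mapsto\ln(1+Px)$ is increasing the minimising user is the one with the smallest gain; this gives $\bar{R}_0 = \E[\ln(1+P\min_{k\in[K]}\normh)]$. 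It therefore suffices to show this quantity is $\Theta(1)$ in each case.

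For part~1) I would set $\rv{X}_K := P\min_{k\in[K]}\normh$. Lemma~\ref{lemma:SNR01} gives $\E[\min_{k\in[K]}\normh] = \Theta(1)$ whenever $\ntgelnK$, so $\E[\rv{X}_K] = \Theta(1)$, and the first part of Lemma~\ref{lemma:log-asympt} then yields $\bar{R}_0 = \E[\ln(1+\rv{X}_K)] = \Theta(1)$, as required. Part~3) reduces to the same computation after aggregating the $L$ sub-channels: the combined normalised gain $\frac{1}{L\nt}\sum_{\ell=1}^{L}\|\rvVec{H}_k^{(\ell)}\|^2$ is ${\rm Gamma}(L\nt,\frac{1}{L\nt})$ with mean~$1$, i.e.\ exactly the law of $\normh$ with $\nt$ replaced by $L\nt$. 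Applying Lemma~\ref{lemma:SNR01} with $L\nt$ in place of $\nt$ makes this aggregate gain $\Theta(1)$ as soon as $\LntgelnK$, and the log-asymptotic step again delivers $\bar{R}_0 = \Theta(1)$. The one point needing care here is that the per-channel-use rate is a sum of logarithms $\frac{1}{L}\sum_{\ell}\ln(1+P\,\cdot)$ rather than a single logarithm; I would handle it either through a repetition/combining scheme, whose effective rate is a single $\ln(1+P\,\cdot\,\text{aggregate})$, or by a concavity bound relating the sum of logs to the aggregate gain.

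Part~2) is genuinely different, since $\nt$ is fixed and the lemmas above no longer force a non-vanishing worst-user gain. Here I would fix a threshold $\gamma_0>0$ with $p := \P[\normh\ge\gamma_0]$ a positive constant independent of $K$ (taking $\gamma_0$ to be the median of the ${\rm Gamma}(\nt,1/\nt)$ law gives $p=\tfrac12$), and let each user feed back the single bit $\mathbbm{1}_{\{\normh\ge\gamma_0\}}$. The base station selects the set $\Sc$ of users above threshold and runs coded caching for $\Sc$ alone. Every selected user then has $\SNR_k\ge P\gamma_0$, so the multicast rate to $\Sc$ is at least the constant $\ln(1+P\gamma_0)$, independent of $K$, while $|\Sc|$ is ${\rm Binomial}(K,p)$ and concentrates around $pK$, so $|\Sc| = \Theta(K)$ with high probability. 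Substituting the served population $|\Sc|$ for $K$ in~\eqref{eq:Rmul} gives an effective rate $\frac{|\Sc|}{T(\normmem,|\Sc|)}\ln(1+P\gamma_0) = \Theta(|\Sc|) = \Theta(K)$, the claimed linear scaling for any fixed $\nt$.

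The hard part will be part~2) rather than the two ``more antennas'' cases: parts~1) and~3) are direct applications of the preliminary Lemmas~\ref{lemma:SNR01} and~\ref{lemma:log-asympt}, whereas part~2) requires me to specify the scheduling and to pass from a per-block statement to the long-term average in the definition of the equivalent content delivery rate. In particular I must argue that, because the fading is independent across coded blocks, every user lies above $\gamma_0$ in a fraction~$p$ of blocks and is therefore served infinitely often so that no demand is starved, and I must confirm via concentration of $|\Sc|$ about $pK$ (e.g.\ through the Chernoff bound of the first preliminary lemma) that the $\Theta(K)$ scaling survives the long-term averaging and does not hold merely in expectation. A secondary point common to all parts is to check that $T(\normmem,\cdot)$ stays bounded away from $0$ so that $\frac{|\Sc|}{T(\normmem,|\Sc|)}$ is genuinely $\Theta(|\Sc|)$; this is immediate from $T(\normmem,\cdot)\to\frac{1-\normmem}{\normmem}>0$.
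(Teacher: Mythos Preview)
Your treatment of parts~1) and~2) is essentially the paper's: for part~1) the paper also invokes Lemma~\ref{lemma:SNR01} together with Lemma~\ref{lemma:log-asympt}, and for part~2) it fixes an SNR threshold, uses one-bit feedback to select the users above it, multicasts at the deterministic rate $\ln(1+s)$, and applies the law of large numbers to $|\Sc|/K$ (with dominated convergence for the long-term average). Your median choice $p=\tfrac12$ is a special case of the paper's argument, which in fact shows that \emph{any} positive threshold works and then optimises it.

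The gap is in part~3). Neither of your two proposed reductions yields the required \emph{lower} bound on the per-channel-use multicast rate
\[
\bar{R}_0 \;=\; \E\!\left[\min_{k\in[K]}\frac{1}{L}\sum_{\ell=1}^{L}\ln\!\Bigl(1+\tfrac{P}{\nt}\|\rvVec{H}_{k,\ell}\|^2\Bigr)\right].
\]
Concavity across the $L$ sub-channels gives $\frac{1}{L}\sum_\ell\ln(1+x_\ell)\le\ln\bigl(1+\frac{1}{L}\sum_\ell x_\ell\bigr)$, i.e.\ only an \emph{upper} bound in terms of the aggregate gain, so it cannot certify $\bar{R}_0=\Theta(1)$. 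Repetition with maximum-ratio combining does not give $\ln(1+P\cdot\text{aggregate})$ either: one coded symbol occupies $L$ channel uses, so the achievable per-channel-use rate is $\frac{1}{L}\ln(1+LP\cdot G_k)$ with $G_k$ the aggregate gain; when $L=\Theta(\ln K)$ and $\nt$ is fixed this behaves like $\frac{\ln\ln K}{\ln K}\to 0$, and the argument collapses exactly in the regime of interest. In short, the reduction ``replace $\nt$ by $L\nt$ and reuse Lemma~\ref{lemma:SNR01}'' only furnishes the easy direction.

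The paper's fix is to apply concavity \emph{the other way}, within each sub-channel, to obtain the lower bound
\[
\bar{R}_0 \;\ge\; \E\!\left[\min_{k\in[K]}\frac{1}{L\nt}\sum_{\ell=1}^{L}\sum_{j=1}^{\nt}\ln\bigl(1+P|\rv{H}_{j,k,\ell}|^2\bigr)\right],
\]
which is a minimum over $K$ i.i.d.\ averages of $L\nt$ i.i.d.\ terms $\ln(1+\SNR)$. One then bounds the left tail of each such average by the Chernoff bound \eqref{eq:Chernoff} (not via Lemma~\ref{lemma:SNR01}, which is tailored to Gamma variables), finds $r_0,\nu_0=\Theta(1)$ with tail at most $e^{-L\nt}$, and concludes via Lemma~\ref{lemma:expectation} that the minimum stays bounded away from zero when $\LntgelnK$. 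So part~3), not part~2), is where the new idea is needed.
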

In the rest of the section, we shall show the scalability of each case. For Case~1, we first investigate the extreme case $L = 1$ and provide some extra results for different setting of transmit power, then we extend to the general case $L\ge 1$ and prove the scalability with a fixed power.

\subsection{MISO multicasting in quasi-static channels ($L=1$)} \label{sec:quasistatic}
We first consider the case where all the $K$ users are served with MISO multicasting in a 
quasi-static Rayleigh fading channel ($L=1$). For simplicity, we assume that Gaussian signaling is used to send the multicast message~(also
called the common message), i.e., $\rvVec{X} = \rvVec{X}_0\sim\mathcal{CN}(0,\pmb{Q}_0)$ where $\pmb{Q}_0$ is the input covariance matrix. In this case, it follows that the maximum instantaneous multicast rate for a channel realization $\rvMat{H} = \pmb{H} = [\pmb{h}_1\ \cdots\ \pmb{h}_K]^\T$ is 
\begin{align}
R_0(\pmb{H}) = \underset{\pmb{Q}_0: \trace(\pmb{Q}_0)\le P}{\max} \min\limits_{k\in[K]} \ln(1+ \pmb{h}_k^\T \pmb{Q}_0 \pmb{h}_k^*).
\label{eq:R00}
\end{align}%
The input covariance matrix~$\pmb{Q}_0$ can be regarded as a precoding and spatial power allocation strategy. The inner
minimization in \eqref{eq:R00} is the achievable rate of the worst user for a given strategy $\pmb{Q}_0$, and is thus the maximum
multicast rate so that every user can decode the common message.\footnote{Alternatively, we can multicast at the rate of the worst user among those interested in decoding the message. For example, in centralized coded caching with $Km \eqdef t \in \mathbb{N}^+$, each coded packet is useful for a set $\Sc$ of $t+1$ users. Therefore, the packet can be transmitted at the rate of the worst user in $\Sc$, as considered in~\cite{Shariatpanahi2017multiantenna}. This improves the transmission rate when $\Sc$ does not contain the globally worst user. The occurrence rate of this event out of all possible sets of $t+1$ users is $1-\frac{t+1}{K} \xrightarrow{K\to\infty} 1-m$. Thus, in the large $K$ regime, this improvement is less significant when the user cache size grows.}
The outer maximization means that the transmitter can choose a strategy
that maximizes the multicast rate. Since we assume that the channel is not known at the transmitter and the channel is isotropic with
i.i.d.~Rayleigh fading, it is reasonable to use isotropic signaling $\rvVec{X}_0\sim\mathcal{CN}(0,\frac{P}{\nt}\Id_\nt)$, then $R_0(\pmb{H}) = \ln\bigl(1+\frac{P}{\nt} \min_{k\in[K]}\|\pmb{h}_k\|^2 \bigr)$. 
Let us define the SNR at user $k$ as $\SNR_k(\rvMat{H}) \defeq \frac{P}{\nt} \|\rvVec{H}_k\|^2$.
Then, the long-term average multicast rate is
\begin{align}\label{eq:defR0}
  \bar{R}_0 := \E [R_0] = \E\left[ \ln\biggl(1+ \minSNRo  \biggr) \right], 
\end{align}%
From \eqref{eq:Rmul} and \eqref{eq:defR0}, the equivalent content delivery rate is
\begin{align}
\Rmc = \frac{K}{T(m,K)}\E\left[ \ln\left(1+ \minSNRo  \right) \right].
\end{align}%



	\begin{proposition} \label{prop:R0}

          In the large $K$ regime, the asymptotic behavior of the long-term average multicast rate depends on the size of the transmit
          antenna array, as described in Table~\ref{tab:R0}.  
	\end{proposition}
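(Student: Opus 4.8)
The plan is to reduce the whole statement to the behaviour of the single nonnegative random variable $\rv{X}_K := \minSNRo = P\min_{k\in[K]}\normh$, since $\bar{R}_0 = \E[\ln(1+\rv{X}_K)]$; the columns of Table~\ref{tab:R0} then follow by feeding the moment estimates of Lemmas~\ref{lemma:SNR0} and~\ref{lemma:SNR01} into Lemma~\ref{lemma:log-asympt}. For fixed $\nt$, Lemma~\ref{lemma:SNR0} gives $\E[\rv{X}_K]\approxasympt\frac{P}{a_K}\Gamma(1+\frac{1}{\nt})$ and $\E[\rv{X}_K^2]\approxasympt\frac{P^2}{a_K^2}\Gamma(1+\frac{2}{\nt})$ with $a_K=\nt(K/\nt!)^{1/\nt}\to\infty$, so that $\E[\rv{X}_K]=\Theta(K^{-1/\nt})=o(1)$ and $\E[\rv{X}_K^2]=\Theta(K^{-2/\nt})=o(\E[\rv{X}_K])$ --- exactly the hypotheses of part~2 of Lemma~\ref{lemma:log-asympt}. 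Hence $\bar{R}_0\approxasympt\E[\rv{X}_K]$, and the rate vanishes at the polynomial order $K^{-1/\nt}$.

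When $\ntgelnK$, Lemma~\ref{lemma:SNR01} gives $\E[\rv{X}_K]=\Theta(1)$ and $\E[\rv{X}_K^2]=\Theta(1)$; the bounded mean, together with the bounded second moment that supplies the anti-concentration needed for the lower bound, places us in part~1 of Lemma~\ref{lemma:log-asympt}, so $\bar{R}_0=\Theta(1)$ --- the non-vanishing rate required for scalability. When $\nt$ grows strictly faster, $\ln K=o(\nt)$, I would instead invoke \eqref{eq:fast}: $\min_{k\in[K]}\normh\xrightarrow{\text{p}}1$, hence $\ln(1+\rv{X}_K)\xrightarrow{\text{p}}\ln(1+P)$, and pass to the limit of the expectation to obtain $\bar{R}_0\to\ln(1+P)$.

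The main obstacle is precisely this last passage, from convergence in probability (or in distribution) to convergence of $\E[\ln(1+\rv{X}_K)]$, since the logarithm is unbounded and the minimum could a priori place its mass where $\ln(1+x)$ is small. I would settle it by uniform integrability: pointwise $0\le\ln(1+\rv{X}_K)\le\rv{X}_K\le\frac{P}{\nt}\|\rvVec{H}_1\|^2$, and the dominating family $\{\frac{P}{\nt}\|\rvVec{H}_1\|^2\}_K$ is bounded in $L^2$ (its second moment equals $P^2\frac{\nt+1}{\nt}\le 2P^2$), hence uniformly integrable, so the smaller family $\{\ln(1+\rv{X}_K)\}_K$ is too; uniform integrability together with convergence in probability then yields $L^1$-convergence, i.e.\ convergence of $\bar{R}_0$. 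The same second-moment bookkeeping is what makes all the columns rigorous: part~2 of Lemma~\ref{lemma:log-asympt} needs the sharp ratio $\E[\rv{X}_K^2]=o(\E[\rv{X}_K])$ so that the linearisation $\ln(1+x)\approx x$ is valid in mean, while the $\Theta(1)$ and $\ln(1+P)$ entries rest on the bounded second moments of $\min_{k\in[K]}\normh$ and of $\SNR_1$, respectively. Verifying that these moment hypotheses are met in each regime, rather than any single inequality, is where the real work lies.
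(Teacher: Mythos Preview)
Your reduction to the moments of $\rv{X}_K=\minSNRo$ and the appeal to Lemma~\ref{lemma:log-asympt} is exactly the paper's strategy, and the two cells you actually treat (small array with $P=o(K^{1/\nt})$, large array with $P=\Theta(1)$) are handled correctly. But Table~\ref{tab:R0} has \emph{six} cells, indexed by both the antenna regime and the scaling of $P$ with $K$, and you have implicitly frozen $P$ throughout. Writing $\E[\rv{X}_K]=\Theta(K^{-1/\nt})$ already assumes $P=\Theta(1)$; the table also asks what happens when $P=\Theta(K^{1/\nt})$ and when $PK^{-1/\nt}\to\infty$ in the small-array column, and when $P=o(1)$ or $P\to\infty$ in the large-array column. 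Four of the six entries are therefore not addressed.

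The cases you are missing with growing mean ($PK^{-1/\nt}\to\infty$ for small array, $P\to\infty$ for large array) cannot be obtained from parts~1 or~2 of Lemma~\ref{lemma:log-asympt}; they require part~3, which in turn needs the anti-concentration estimate $\Pr(\rv{X}_K\ge g(K))\ge\rho>0$ for a suitable $g(K)$. The paper supplies this by two different devices: for fixed $\nt$ it uses the convergence in distribution of $a_K\min_k\normh$ from Lemma~\ref{lemma:SNR0} to get $\Pr(\rv{X}_K\ge\SNRmean)\to\exp(-\Gamma(1+1/\nt)^{\nt})>0$; for $\ntgelnK$ it uses the Chernoff-type tail bound of Lemma~\ref{lemma:CDFgamma} to get $\Pr(\rv{X}_K\ge\const P)\ge(1-e^{-\nt})^K$, which stays bounded away from zero precisely because $\nt\ge\ln K+O(1)$. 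Your uniform-integrability paragraph, while correct, proves something off-target: it shows $\bar{R}_0\to\ln(1+P)$ for \emph{fixed} $P$ in the sub-regime $\ln K=o(\nt)$, which is a sharpening of the $\Theta(1)$ cell, not the $P\to\infty$ row of the table. The remaining two cells ($P=\Theta(K^{1/\nt})$ small array; $P=o(1)$ large array) follow immediately from parts~1 and~2 of Lemma~\ref{lemma:log-asympt} once you allow $P$ to scale, but you need to say so.
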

        \begin{table*}[t!]
          \caption{Asymptotic behavior of the average multicast rate when $K\to\infty$}
          \centering
          \begin{tabular}{|>{\centering\arraybackslash}p{0.12\textwidth}||c||>{\centering\arraybackslash}p{0.12\textwidth}||c|}
            \hline
            \multicolumn{2}{|>{\centering\arraybackslash}p{0.4\textwidth}||}{small antenna array$^*$: $\nt = \Theta(1)$} &
            \multicolumn{2}{>{\centering\arraybackslash}p{0.4\textwidth}|}{large antenna array: $\ntgelnK$} \\ \hline 
            \hline
            $P = o(K^{\frac{1}{\nt}})$ & $\bar{R}_0 \approxasympt \frac{P}{a_K} \Gamma(1+\frac{1}{\nt}) = o(1)$
            & $P = o(1)$ & $\bar{R}_0 = \Theta(P)  = o(1)$  \\
            \hline
            $P = \Theta(K^{\frac{1}{\nt}})$ & $\bar{R}_0 = \Theta(1)$ & $P = \Theta(1)$ & $\bar{R}_0 = \Theta(1)$ \\
            \hline
            $P K^{-\frac{1}{\nt}} \to \infty$ 
             & $\bar{R}_0 = \Theta\left(\ln\left(1+\frac{P}{a_K}
            \Gamma(1+\frac{1}{\nt})\right)\right)$ & $P\to\infty$ & $\bar{R}_0 = \Theta\left(\ln(1+P)\right)$  \\
            \hline
          \end{tabular}
        \label{tab:R0} \\
        \vspace{-.15cm}
        \begin{flushleft}
		\hspace{1.1cm}\footnotesize {$^*$Recall that $a_K \defeq \nt\left(\frac{K}{\nt!}\right)^{1/\nt}$}.
		\end{flushleft}
        \end{table*}
        Before proving the proposition, some comments on the asymptotic results are in place.  In the small antenna array regime where the
        $\nt$ does not scale up with $K$, the multicast rate vanishes when $K\to\infty$ if the total transmit power scales with the number of users slower than $K^{\frac{1}{\nt}}$, i.e., $P = o(K^{\frac{1}{\nt}})$. If $P$ increases with $K$ as fast as $K^{\frac{1}{\nt}}$, a fixed multicast rate can be maintained. Further, if $P$ increases with
        $K$ faster than $K^{\frac{1}{\nt}}$, the multicast rate can also grow with $K$. Intuitively, for a fixed number of transmit antennas, the channel
        quality of the worst user degrades with the total number of users. A remedy for this is to increase the transmit power with $K$,
        which is however not desirable (if not impossible) in many practical situations. Another solution is to increase the number of transmit antennas with $K$. According to the right-hand side of Table~\ref{tab:R0}, in the large antenna array regime where $\nt$ is asymptotically larger than $\ln(K)$, a
        constant amount of transmit power suffices to maintain the non-vanishing multicast rate. The interpretation behind this is the \emph{channel
        hardening} effect that decreases the variance of the individual SNR with $K$ so that the worst user can still have a constant rate.\footnote{Note that the rate scaling in Table~\ref{tab:R0} agrees with the capacity scaling derived in~\cite{jindal2006capacity} for the case of a fixed total power. While~\cite{jindal2006capacity} proves that the multicast capacity is non-vanishing when the number of antennas scales \textit{linearly} with the number of users, we relax this condition by showing that a \textit{logarithmic} scaling is sufficient.}

        \begin{remark}
        Interestingly, to see the sufficiency of the logarithmic scaling of $\nt$, a heuristic way is to let $P$ grow in the small array
        regime to maintain the multicast rate, i.e., let $P = K^{\frac{1}{\nt}}$ as suggested above.  Now we see that if $\nt = \ln K$,
        then $P=K^{\frac{1}{\ln K}}\to e$ which is bounded. In general, it is enough to have that the product $PK^{-\frac{1}{\nt}}$ is non-vanishing.
        \end{remark}
        
        We provide a formal proof of Proposition~\ref{prop:R0} in the following.


	\begin{proof}
          Essentially, the proof relies on Lemma~\ref{lemma:log-asympt}, according to the
          asymptotic behavior of $\E[\displaystyle\min_{k\in[K]} \SNR_k ]$. For convenience, let us define
          $\rv{S}_K := \displaystyle\min_{k\in[K]} \SNR_k = P\displaystyle\min_{k\in[K]} \normh$. 
          
          First, we consider the case of small antenna array with
          $\nt = \Theta(1)$. From \eqref{eq:conv-mean}, we have 
          \begin{align}
            \SNRmean^{-1} {\E[\rv{S}_K]} \to 1,
          \end{align}%
          where $\SNRmean := {\frac{P}{a_K} \Gamma\left(1+\frac{1}{\nt}\right)} = \Theta\bigl(P K^{-\frac{1}{\nt}}\bigr)$, since
          $a_K = \Theta\bigl( K^{\frac{1}{\nt}} \bigr)$. When $P = \Theta(K^{\frac{1}{\nt}})$, $\E[\rv{S}_K] =
          \Theta(1)$, and from case~1 of Lemma~\ref{lemma:log-asympt}, we have $\bar{R}_0 = \Theta(1)$. When 
          $P = o(K^{\frac{1}{\nt}})$, we have $\E[\rv{S}_K] = o(1)$. Since $\E[\rv{S}_K^2] = \frac{P^2}{
          a_K^2} \E[\left(a_K\displaystyle\min_{k\in[K]} \left\{ \normh \right\}\right)^2 ]$ which is
          $\Theta(P^2 K^{-\frac{2}{\nt}})$ according to \eqref{eq:conv-mean2}, we obtain $\E[\rv{S}_K^2] = o(\E[\rv{S}_K])$,
          and, from case~2 of Lemma~\ref{lemma:log-asympt}, we have $\bar{R}_0 \approxasympt \E[\rv{S}_K] \approxasympt
          \overline{\SNR}$.   
          When $P K^{-\frac{1}{\nt}} \to \infty$, we have 
          \begin{align}
          \Pr\left(\rv{S}_K \ge \SNRmean \right) &= \Pr\left(a_K \min_{k\in[K]} \normh \ge \Gamma\left(1+\frac{1}{\nt}\right)\right) \\
          &= \exp\left(-\left[\Gamma\left(1+\frac{1}{\nt}\right)\right]^\nt\right) + o(1),
          \end{align}
          which is bounded away from zero since $\nt$ is fixed, where the last equality is due to Lemma~\ref{lemma:SNR0}. 
We just verified that the condition required in case~3 of Lemma~\ref{lemma:log-asympt} is also met (with $g(K) = \SNRmean$), thus $\bar{R}_0 
= \Theta\left(\ln\left(1+\overline{\SNR}\right)\right)$.
          
          Next, let us consider the case of large antenna array with $\ntgelnK$. From \eqref{eq:tmp888}, we have
          $\E[\rv{S}_K] = \Theta(P)$. 
          The case $P = \Theta(1)$ follows readily from case~1 of Lemma~\ref{lemma:log-asympt}. 
          When $P = o(1)$, we have $\E[\rv{S}_K] = o(1)$. We also have $\E[\rv{S}_K^2] = \Theta(P^2)$ according to
          \eqref{eq:tmp889}, and thus $\E[\rv{S}_K^2] = o(\E[\rv{S}_K])$. From case~2 of Lemma~\ref{lemma:log-asympt}, we
          have $\bar{R}_0 \approxasympt \E[\rv{S}_K] = \Theta(P)$. 
          When $P\to\infty$, we use Lemma~\ref{lemma:CDFgamma} to have: \begin{align}
          \Pr\left(\rv{S_K} \ge \eta P\right) &= \left[\Pr\left(\normh \ge \eta \right)\right]^K \\
          &\ge (1 - e^{-\nt})^K \\&\ge (1 - e^{-\ln(K)-c})^K \\&= e^{-e^{-c}} + o(1),
          \end{align}
          for some $c > -\infty$.
          We just verified that the condition required in case~3 of Lemma~\ref{lemma:log-asympt} is also met (with $g(K) = \eta P$), thus $\bar{R}_0 
          = \Theta\left(\ln\left(1+\eta P\right)\right) = \Theta\left(\ln\left(1+P\right)\right)$.
	\end{proof}

\subsection{Multicasting over $L$-parallel channel}
\label{subsec:blockFading}


In this subsection, we consider the general $L$-parallel channel model when a codeword can span over $L\ge 1$ interference-free sub-channels, such that in each sub-channel, the signal is perturbed by independent fading coefficients and independent noise. 
It includes the block fading and multi-carrier systems, such as OFDM, as special cases, where the sub-channels correspond to coherence intervals and sub-carriers, respectively.

With isotropic signaling, the instantaneous multicast rate for a given realization $(\pmb{H}_1,\dots,\pmb{H}_L)$ of $L$ sub-channels is 
\begin{align}
R_0(\pmb{H}_1,\dots,\pmb{H}_L) = \min_{k\in[K]}\frac{1}{L}\sum_{l=1}^L\ln\biggl(1+\frac{P}{\nt} \|\pmb{h}_{k,l}\|^2  \biggr),
\end{align} 
where $\pmb{h}_{k,l}$ is the channel realization of user $k$ in sub-channel $l$. The SNR at user $k$ is now defined for each sub-channel $l$ as $\SNR_{k,l}(\rvMat{H}_l) \defeq \frac{P}{\nt} \|\rvVec{H}_{k,l}\|^2$ and the average multicast rate is
\begin{align}
  \bar{R}_0 = \E\left[\min_{k\in[K]}\frac{1}{L}\sum_{l=1}^L \ln\biggl(1+ \SNR_{k,l} \biggr) \right]. \label{eq:R0_L}
\end{align}%
The equivalent content delivery rate is given by plugging this multicast rate into \eqref{eq:Rmul}. Intuitively, when the number of
sub-channels $L$ grows to infinity fast enough w.r.t.~the number of users $K$, each user should have a constant rate and the multicast rate is
non-vanishing with $K$. 
Our goal is to find out the sufficient scaling of $L$ to guarantee a non-vanishing multicast rate. In the following, we focus on the case with a
constant power~$P$, i.e., $P=\Theta(1)$ when $K\to\infty$. Since the direct analysis of the rate~\eqref{eq:R0_L} is non-trivial, we
resort to the analysis of upper and lower bounds of this rate. Let us define $\SNR_{j,k,l}:=P |\rv{H}_{j,k,l}|^2$ where
$\rv{H}_{j,k,l}$ is the channel coefficient from the $j$-th transmit antenna to the $k$-th user in the $l$-th sub-channel. Then, we can write $\SNR_{k,l}= \frac{1}{\nt} \sum_{j=1}^{\nt} \SNR_{j,k,l}$, $\forall\,k,l$. From the concavity of the
logarithm function, we have the following upper and lower bounds:
\begin{align}
  \bar{R}_0 &\le \E[\ln\left( 1 + \min_{k\in[K]} \frac{1}{L\nt} \sum_{l=1}^L \sum_{j=1}^{\nt} \SNR_{j,k,l}\right)]
  \label{eq:R0L_ub}, \\
  \bar{R}_0 &\ge \E[\min_{k\in[K]} \frac{1}{L\nt} \sum_{l=1}^L\sum_{j=1}^{\nt}\ln\left( 1 +  \SNR_{j,k,l}\right)].
  \label{eq:R0L_lb} 
\end{align}%
It turns out that the above bounds are enough to establish the sufficient scaling of both $L$ and $\nt$ needed to maintain a non-vanishing multicast rate. 
\begin{proposition} \label{prop:R0blockfading}
  If $\LntgelnK$ and $P$ is fixed, then $\bar{R}_0 = \Theta(1)$ when $K\to\infty$.
\end{proposition}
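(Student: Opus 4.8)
The plan is to sandwich $\bar{R}_0$ between two positive constants, establishing $\bar{R}_0 = O(1)$ and $\bar{R}_0 = \Omega(1)$ separately from the upper and lower bounds \eqref{eq:R0L_ub} and \eqref{eq:R0L_lb}. The upper bound is the easy direction: in \eqref{eq:R0L_ub} the inner minimum over $k$ is no larger than the single term for $k=1$, and since $\ln(1+\cdot)$ is increasing and concave while each $\SNR_{j,k,l}$ has mean $P$, Jensen's inequality yields
\begin{align}
\bar{R}_0 \le \E\left[\ln\left(1+\frac{1}{L\nt}\sum_{l=1}^{L}\sum_{j=1}^{\nt}\SNR_{j,1,l}\right)\right] \le \ln\left(1+\E\left[\frac{1}{L\nt}\sum_{l,j}\SNR_{j,1,l}\right]\right) = \ln(1+P),
\end{align}
which is a constant since $P = \Theta(1)$; hence $\bar{R}_0 = O(1)$.

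For the lower bound I would work from \eqref{eq:R0L_lb}. Writing $\rv{Y}_{j,k,l} := \ln(1+\SNR_{j,k,l})$ and $\bar{\rv{Y}}_k := \frac{1}{L\nt}\sum_{l,j}\rv{Y}_{j,k,l}$, the variables $\{\bar{\rv{Y}}_k\}_{k\in[K]}$ are nonnegative and i.i.d.\ across users, each being an average of $L\nt$ i.i.d.\ summands because the coefficients $\rv{H}_{j,k,l}$ are mutually independent. Applying \eqref{eq:Markov1} of Lemma~\ref{lemma:expectation} gives
\begin{align}
\bar{R}_0 \ge \E\left[\min_{k\in[K]}\bar{\rv{Y}}_k\right] \ge x_0\left[1-\Pr\left(\bar{\rv{Y}}_1 \le x_0\right)\right]^K, \quad \forall\, x_0 \ge 0,
\end{align}
so it remains to control the left tail $\Pr(\bar{\rv{Y}}_1 \le x_0)$. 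Applying the Chernoff bound \eqref{eq:Chernoff} with tilt $\nu = 1/x_0$ and using $\E[e^{-\nu \rv{Y}_{j,k,l}}] = \E[(1+\SNR)^{-\nu}]$ with $\SNR \sim {\rm Exp}(1/P)$, I obtain
\begin{align}
\Pr\left(\bar{\rv{Y}}_1 \le x_0\right) \le \left(e\,\E\left[(1+\SNR)^{-1/x_0}\right]\right)^{L\nt}.
\end{align}

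The crux is then to choose $x_0$ so that the base of this exponential is at most $e^{-1}$. Since $(1+\SNR)^{-1/x_0}\to 0$ almost surely as $x_0\to 0^+$, dominated convergence gives $\E[(1+\SNR)^{-1/x_0}]\to 0$, so there exists a fixed constant $x_0>0$ (depending only on the fixed $P$) with $e\,\E[(1+\SNR)^{-1/x_0}]\le e^{-1}$. For this $x_0$ we have $\Pr(\bar{\rv{Y}}_1 \le x_0)\le e^{-L\nt}$, and the hypothesis $\LntgelnK$ then gives $\Pr(\bar{\rv{Y}}_1\le x_0)\le c_1/K$ for a constant $c_1$. Substituting back,
\begin{align}
\bar{R}_0 \ge x_0\left[1-\frac{c_1}{K}\right]^K \xrightarrow{K\to\infty} x_0\,e^{-c_1} > 0,
\end{align}
so $\bar{R}_0 = \Omega(1)$, and combined with the upper bound this proves $\bar{R}_0 = \Theta(1)$.

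The step I expect to be the genuine obstacle is exactly this choice of threshold, i.e.\ matching the Chernoff exponent to the budget $L\nt$. The natural choice of $x_0$ near the per-term mean $\mu := \E[\ln(1+\SNR)]$ produces a decay exponent strictly below $1$, and then the $K$-fold product $[1-\Pr(\bar{\rv{Y}}_1\le x_0)]^K$ in \eqref{eq:Markov1} collapses to $0$: the penalty of minimizing over $K$ users overwhelms a slowly-decaying tail when only $L\nt\approx\ln K$ degrees of freedom are available. What rescues the argument is the freedom to shrink $x_0$ to a small constant, since the left-tail rate of $\bar{\rv{Y}}_1$ diverges as $x_0\to0$ (forcing the average of $L\nt$ strictly positive log-SNRs below a vanishing level is exponentially improbable); the exponent can thus be pushed to at least $1$ while $x_0$ remains a fixed constant, which is precisely the margin needed for $\LntgelnK$ to suffice.
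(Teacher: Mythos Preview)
Your proof is correct and follows essentially the same route as the paper's: both lower-bound \eqref{eq:R0L_lb} via the Chernoff bound on the per-user i.i.d.\ log-SNR average and then invoke \eqref{eq:Markov1}, choosing the threshold so that the tail is at most $e^{-L\nt}\le c_1/K$. The only cosmetic differences are that the paper handles the upper bound by appealing to Proposition~\ref{prop:R0} (with $\nt$ replaced by $L\nt$) rather than your direct Jensen step, and keeps the Chernoff tilt $\nu$ and threshold $r$ as two free parameters instead of coupling them via $\nu=1/x_0$; your dominated-convergence argument makes explicit what the paper leaves as ``it can be verified that there exist $\nu_0,r_0=\Theta(1)$ with $g(\nu_0,r_0)=e$.''
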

The above result demonstrates an interesting trade-off between the number of transmit antennas and the number of independent sub-channels for a scalable
multicast rate. A large number of sub-channels can compensate for the limited number of transmit antennas, and vice versa.   
\begin{remark}
Since $\nt$ and $L$ are respectively the spatial and temporal/frequency diversity per user, the product $L\nt$ can be interpreted as the \emph{total
diversity} that can be exploited by each user. Proposition~\ref{prop:R0blockfading} says that as long as the {\em total diversity} is asymptotically
larger than $\ln(K)$, the multicast rate is not vanishing. As shown in subsection~\ref{sec:quasistatic}, an extreme case is $L=1$, in which a large array of antennas should be exploited to obtain scalability. The other extreme case is $\nt = 1$, in which the server uses single antenna, as assumed in~\cite{maddah2013fundamental,maddah2013decentralized}, but spans the multicast transmission over $L$ independent sub-channels with $L$ asymptotically larger than $\ln(K)$.
\end{remark}
\begin{proof}[Proof of Proposition~\ref{prop:R0blockfading}]
  Following Proposition~\ref{prop:R0}, we can readily show that the upper bound~\eqref{eq:R0L_ub} is
  $\Theta(1)$ when $\LntgelnK$. This is because, due to the i.i.d.~property
  across both blocks and antennas, the upper bound is exactly the same as \eqref{eq:defR0} if we replace $\nt$ by $L\nt$.
  We can therefore focus on the lower bound~\eqref{eq:R0L_lb}. Let us consider the following CDF
  \begin{align}
F(r) := \Pr\left(\frac{1}{L\nt} \sum_{l=1}^L\sum_{j=1}^{\nt}\ln\left( 1 +
\SNR_{j,k,l}\right) \le r\right). 
  \end{align}%
  Using the Chernoff bound~\eqref{eq:Chernoff}, we have, for any $\nu>0$, 
  \begin{align}
    F(r) &\le e^{L\nt \nu r } \E[(1+\SNR_{j,k,l})^{-\nu}]^{L\nt}\\
    &= \Biggl(\frac{e^{-\nu r }}{
    \E[(1+\SNR_{j,k,l})^{-\nu}]}\Biggr)^{-L\nt} \\
    &\le g(\nu,r)^{-\ln K - c_K},  \label{eq:tmp222}
  \end{align}%
  where we define $g(\nu,r):=\frac{e^{-\nu r }}{
    \E[(1+\SNR_{j,k,l})^{-\nu}]} = \frac{\exp\left(-\nu r -\frac{1}{P}\right)}{
    \Gamma\left(1-\nu,\frac{1}{P}\right)}$; in the last inequality $c_K = O(1)$ from the assumption that
  $\LntgelnK$. 
  It can be verified that there exist $\nu_0 = \Theta(1)$ and $r_0 = \Theta(1)$ such that $g(\nu_0,r_0) = e$. Therefore, from \eqref{eq:tmp222} we
  obtain $F(r_0) \le \frac{e^{-c_K}}{K}$. Now, by applying
  \eqref{eq:Markov1} on \eqref{eq:R0L_lb},
  \begin{align}
    \bar{R}_0 &\ge r_0 (1-F(r_0))^K \\
    &\ge r_0 \Bigl(1-\frac{e^{-c_K}}{K}\Bigr)^{K} \\
    &= r_0(e^{-e^{-c_K}}+o(1)), \quad \text{when $K$ is large}, 
  \end{align}%
  which is $\Theta(1)$ since $c_K = O(1)$. 
\end{proof}


\subsection{Multicasting with user selection}
\label{subsec:SISO}

Since the bottleneck of multicast transmission is the channel quality of the worst users, the transmission rate can be improved if we
only serve users with better quality. In other words, we eliminate users with ``\emph{unacceptable}'' channel qualities. 
For instance, if we transmit at the average~(median) rate over the channel gain, then the number
of users being able to decode is roughly $K/2$, and we can guarantee a linear sum rate scaling. The trade-off between the multicast rate
and number of users served should be balanced so as to maximize the sum rate. 
In order to achieve linear scaling with the total number of users, a non-negligible fraction of the $K$ users should be selected. 
In this work, we propose a threshold-based user selection scheme.

Here is how the scheme works. Let us first focus on the single transmit antenna and quasi-static fading case, i.e., $\nt= L=1$.
We assume that the base station fixes a SNR threshold $s$ and reveals it to all the users prior to the actual data transmission. 
Then, each user sends back an one-bit feedback indicating whether the instantaneous received SNR is above the threshold. 
Let the random variable $\rv{K}^*(s)$ be the number of users with SNR above the threshold, i.e., $\rv{K}^*(s) \defeq |\{k: \SNR_k \ge
s\}|$.  Recall that $\SNR_1, \dots, \SNR_K$ are $K$ i.i.d.~exponential random variables ${\rm Exp}(\frac{1}{P})$, then $\E[\rv{K}^*(s)] = K\Pr(\SNR \ge s)=
Ke^{-s/P}$. The base station then starts the multicast transmission at rate $\ln(1+s)$
so that every selected user is able to decode the common message. 

Since the set of active users changes frequently under user selection, it is more reasonable to assume that decentralized
placement~\cite{maddah2013decentralized} is used. From \eqref{eq:defT} and \eqref{eq:Rmul}, we have the equivalent content delivery rate 
\begin{align}
\Rmc &= \E \left[\frac{\frac{m}{1-m}\rv{K}^*(s)}{1-(1-m)^{\rv{K}^*(s)}} \ln(1+s)\right].
\end{align}
From the strong law of large numbers, we know that $\frac{\rv{K}^*(s)}{K} \xrightarrow{\text{a.s.}} \Pr(\SNR \ge s) = e^{-s/P}$, which means that 
\begin{align}
\frac{\frac{m}{1-m}\frac{\rv{K}^*(s)}{K}}{1-(1-m)^{\rv{K}^*(s)}} \ln(1+s)\xrightarrow{\text{a.s.}}\frac{m}{1-m} e^{-s/P} \ln(1+s).
\end{align} 
Therefore,
using the dominated convergence theorem, we obtain
\begin{align}
\frac{\Rmc}{K} &\approxasympt \frac{m}{1-m} e^{-s/P} \ln(1+s), 
\end{align}
which shows that for {\em any non-zero threshold $s$}, linear scaling can be achieved. In practice, however, it is desirable 
to find a threshold that maximizes the scaling factor $e^{-s/P} \ln(1+s)$. 
Since this factor is zero when $s=0$ and $s=\infty$, due to the continuity, $e^{-s/P} \ln(1+s)$ is maximized by some $0<s^*<\infty$ that
satisfies $\frac{\text{d} \left(e^{-s/P} \ln(1+s) \right)}{\text{d} s} \Bigr|_{s=s^*} = 0$. It follows that  
\begin{align} \label{eq:optimalS_cond}
\ln(1+s^*) = W\left(P\right),
\end{align}
where $W(\cdot)$ is the Lambert-W function such that $W(x) e^{W(x)} = x$.  
Therefore, when $K$ is large, we should choose a SNR threshold 
\begin{align} \label{eq:optimalS}
s^* = e^{W(P)} - 1 = \frac{P}{W(P)}-1.
\end{align}
The corresponding optimal content delivery rate is 
\begin{align} \label{eq:optRateScheduling}
\Rmc \approxasympt \frac{m}{1-m} K e^{\frac{1}{P}-\frac{1}{W(P)}}W(P),
\end{align}
scaling linearly with $K$. The expected number of selected users is $K^*(s^*) = K e^{\frac{1}{P}-\frac{1}{W(P)}}$. In~\cite{Asma2017opportunistic}, it is shown that the rate~\eqref{eq:optRateScheduling} is indeed the optimal rate of any scheme that transmits opportunistically to the group of users with the highest sum content delivery rate at each channel realization. 

The above result can be readily extended to any i.i.d.~SNR distribution with differentiable CDF $F_{\SNR}(s)$, e.g., the case with
multiple transmit antennas. Specifically, the optimal SNR threshold $0<s^*<\infty$ should satisfy $\frac{\text{d} \left( (1-F_{\SNR}(s))
	\ln(1+s) \right)}{\text{d} s} \Bigr|_{s=s^*} = 0$. We readily obtain the following result. 
\begin{proposition} 
	Let us consider a multicast channel with i.i.d.~SNR distribution with differentiable CDF $F_{\SNR}(s)$. 
	Define $f(s):=\frac{1-F_{\SNR}(s)}{F'_{\SNR}(s)}$ for $s>0$, then 
	the optimal SNR threshold $s^*$ for user selection is such that
	\begin{align} 
	\ln(1+s^*) = W\left(f(s^*)\right).
	\end{align}
	The corresponding optimal content delivery rate is 
	\begin{align}
	\Rmc &\approxasympt \frac{m}{1-m} K (1-F_{\SNR}(s^*)) W(f(s^*)).
	\end{align}%
\end{proposition}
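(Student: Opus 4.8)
The plan is to generalize the scalar ($\nt=1$) argument just carried out, replacing the exponential tail $\Pr(\SNR\ge s)=e^{-s/P}$ by the generic survival function $1-F_{\SNR}(s)$. Writing $\rv{K}^*(s)$ for the number of selected users, the equivalent content delivery rate is again $\Rmc=\E[\frac{\frac{m}{1-m}\rv{K}^*(s)}{1-(1-m)^{\rv{K}^*(s)}}\ln(1+s)]$, now with $\E[\rv{K}^*(s)]=K(1-F_{\SNR}(s))$ since the $\SNR_k$ are i.i.d.\ with CDF $F_{\SNR}$. First I would invoke the strong law of large numbers to get $\rv{K}^*(s)/K\xrightarrow{\text{a.s.}}1-F_{\SNR}(s)$, and then the dominated convergence theorem to pass the limit inside the expectation. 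Domination is immediate after normalizing by $K$: since $\rv{K}^*(s)/K\le 1$ and $1-(1-m)^{\rv{K}^*(s)}\ge m$ whenever $\rv{K}^*(s)\ge 1$, the normalized integrand is bounded by the constant $\frac{\ln(1+s)}{1-m}$. This yields $\frac{\Rmc}{K}\approxasympt\frac{m}{1-m}(1-F_{\SNR}(s))\ln(1+s)$, exactly as in the scalar case.

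Second, I would maximize the scaling factor $g(s):=(1-F_{\SNR}(s))\ln(1+s)$ over $s>0$. Since $g(0)=0$ (because $\ln 1=0$) and $g(s)\to 0$ as $s\to\infty$ (the survival function of a proper CDF decays faster than $1/\ln(1+s)$ under the mild tail behavior implicitly assumed), continuity forces an interior maximizer $0<s^*<\infty$ with $g'(s^*)=0$. Differentiating gives $g'(s)=-F'_{\SNR}(s)\ln(1+s)+\frac{1-F_{\SNR}(s)}{1+s}$, and setting this to zero at $s^*$ rearranges to $(1+s^*)\ln(1+s^*)=\frac{1-F_{\SNR}(s^*)}{F'_{\SNR}(s^*)}=f(s^*)$.

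The key step is to read this stationarity condition through the Lambert-W function. Setting $u:=\ln(1+s^*)$ so that $1+s^*=e^u$, the left-hand side becomes $u\,e^u$, whence $u\,e^u=f(s^*)$ and therefore $u=W(f(s^*))$ by definition of $W$; that is, $\ln(1+s^*)=W(f(s^*))$, the claimed characterization. Substituting this back into the asymptotic expression for $\Rmc/K$ immediately gives $\Rmc\approxasympt\frac{m}{1-m}K(1-F_{\SNR}(s^*))W(f(s^*))$, completing both parts.

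The hard part is not the calculus or the Lambert-W substitution, which are routine, but rather justifying the two facts that guarantee an \emph{interior} maximizer: the vanishing of $g(s)$ as $s\to\infty$ and, more subtly, the legitimacy of the dominated-convergence interchange for a completely arbitrary differentiable CDF. Both hold automatically for the exponential law of the scalar case, but in full generality one must impose a mild tail condition (e.g.\ $(1-F_{\SNR}(s))\ln(1+s)\to 0$) to exclude degenerate maximizers at infinity. Under the proposition's implicit hypothesis that such a finite $s^*$ exists, the remainder of the argument is precisely the generalization sketched above.
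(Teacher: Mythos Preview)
Your proposal is correct and follows essentially the same route as the paper, which simply states that the single-antenna argument ``can be readily extended'' by replacing $e^{-s/P}$ with $1-F_{\SNR}(s)$ and setting the derivative of $(1-F_{\SNR}(s))\ln(1+s)$ to zero. If anything, you are more careful than the paper: your explicit domination bound, the Lambert-$W$ substitution, and your remarks on the tail condition needed for an interior maximizer all spell out steps the paper leaves implicit.
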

In general, an explicit expression of the optimal threshold is hard to derive. Nevertheless, such value can be 
obtained numerically. 

\subsection{Numerical results}
To validate Theorem~\ref{theo:scalableSchemes}, we calculate numerically the equivalent content delivery rate $\Rmc$ and
observe its behavior when $K$ increases. In Fig.~\ref{fig:Rmul}, we plot the rate achieved with the scalable schemes listed in
Theorem~\ref{theo:scalableSchemes} as a function of the number of users $K$ for normalized cache size $m = 5\%$ and a fixed total power. Specifically, we consider multicasting with 1) $\nt = \lfloor\ln(K)\rfloor$ antennas, 2) single antenna and transmission spanning over $L = \lfloor\ln(K)\rfloor$ sub-channels, and 3) single antenna and threshold-based user selection scheduling (Case 1 and Case 2 are the extreme cases of $L\nt = \lfloor\ln(K)\rfloor$). It can be seen clearly that the sum rate scales linearly with $K$ in these cases. For a baseline, we also plot the rate achieved with single-antenna and without user selection in quasi-static fading channel. In this case, the sum rate saturates when $K$ is large and hence the system is not scalable.

Another observation from Fig.~\ref{fig:Rmul} is that higher total power yields higher content delivery rate. This gain due to power is more pronounced in the baseline case $\nt = L = 1$, no scheduling. In this case, from Table~\ref{tab:R0}, the multicast rate scales as $\bar{R}_0 \sim \frac{P}{K}$, and hence the content delivery rate scales as $\Rmc = \frac{K}{T(m,K)} \bar{R}_0 \sim \frac{mP}{1-m}$, linearly in $P$, but constantly w.r.t.~$K$.
\begin{figure}[!h]
	\centering
	\subfigure[$P = 30$~dB]{
		\includegraphics[width=.5\textwidth]{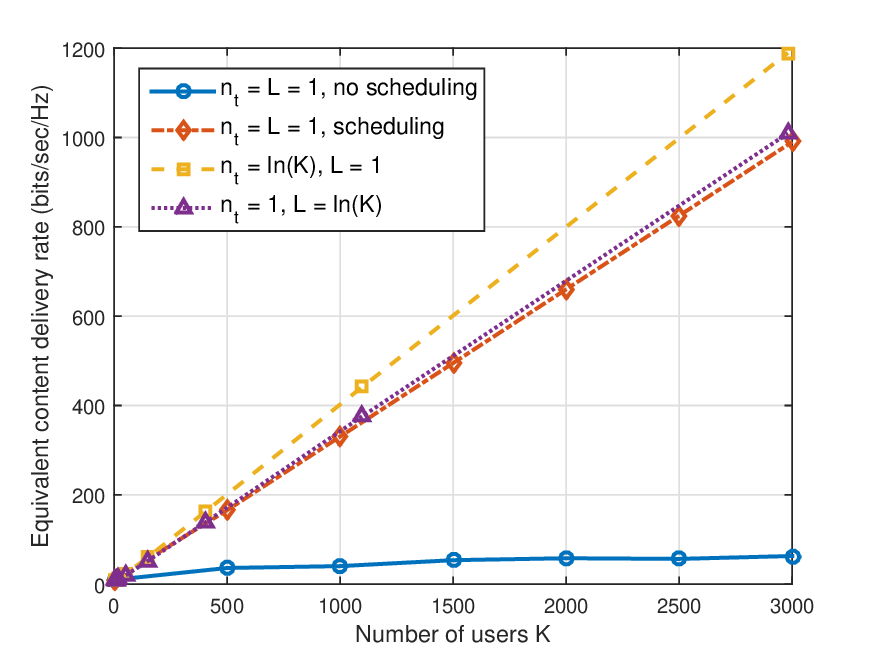}}
	\subfigure[$P = 40$~dB]{
		\includegraphics[width=.5\textwidth]{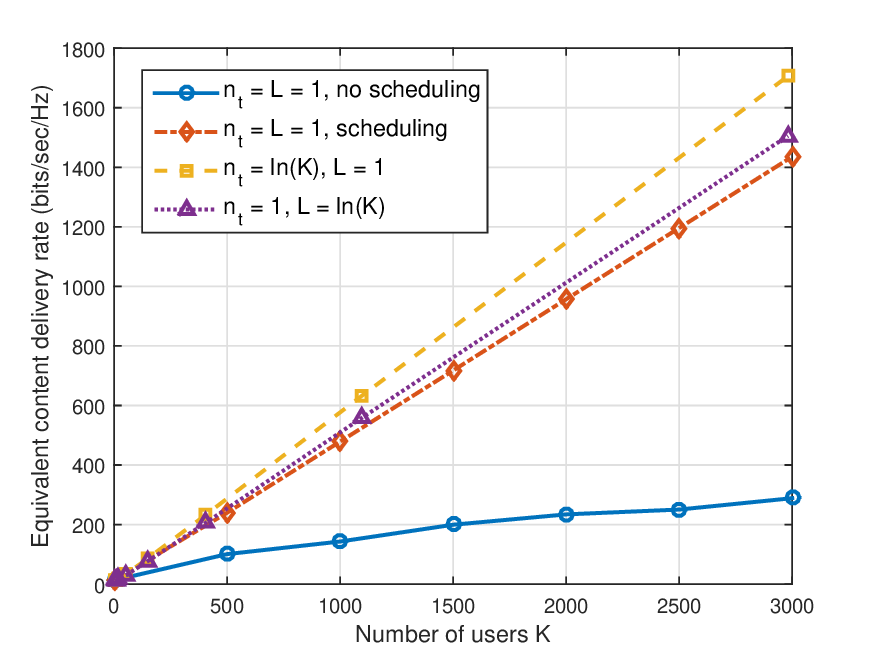}}
	\caption{The equivalent content delivery rate achieved with different multicasting schemes, namely, 1) $\nt=1$
		without scheduling in quasi-static fading, 2) $\nt = 1$ with scheduling in quasi-static fading, 3) $\nt = \lfloor\ln(K)\rfloor$ in quasi-static fading, 4) $\nt = 1$ and transmit over $L = \lfloor\ln(K)\rfloor$ independent sub-channels, as a function of $K$ for $m = 5\%$ and $P = 30, 40$~dB.}
	\label{fig:Rmul}
\end{figure}

In Fig.~\ref{fig:scheduling}, we plot the asymptotically optimal SNR threshold $s^*$ given in~\eqref{eq:optimalS} for user selection and the exact optimal solution from simulation. We observe that the analytical solution converges to the exact optimal one when $K$ is large. Since the analytical optimal SNR threshold~\eqref{eq:optimalS} only depends on the total power, it can be predefined easily by the base station. 
\begin{figure}[!h]
	\centering
	\includegraphics[width=.5\textwidth]{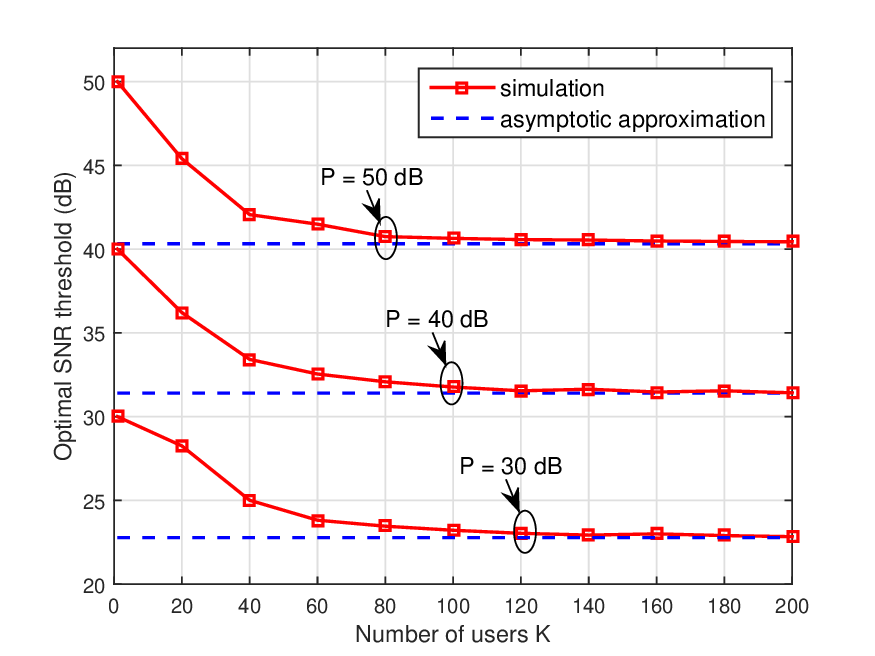}
	\caption{The optimal SNR threshold for user selection scheduling with $m = 5\%$, $P = 30, 40, 50$~dB and single antenna: asymptotic approximation vs. simulation.}
	\label{fig:scheduling}
\end{figure}

\section{Scalable content delivery with spatial multiplexing with CSIT}
\label{sec:scalableMultiplexing}
\newcommand{\Rsym}{\bar{R}_{\rm sym}}
\newcommand{\hest}{\hat{\rvVec{h}}}
\newcommand{\Hest}{\hat{\rvVec{H}}}
\newcommand{\SINRzf}{\SINR_k}
\newcommand{\SINRmf}{\SINR_k^{\rm {mrc}}}
\newcommand{\Rsymzf}{\Rsym^{\rm {zf}}}
\newcommand{\Rsymmf}{\Rsym^{\rm {mf}}}
\renewcommand{\diag}[1]{{\rm diag}\left(#1\right)}

Instead of using wireless multicasting and coded caching, a more conventional content delivery scheme is spatial
multiplexing. Specifically, simultaneous unicast transmissions can be realized with spatial precoding based on the available CSIT. With
spatial multiplexing, the required content is delivered directly to the user. In this section, we study the content delivery
rate of this scheme. For simplicity, we consider the quasi-static fading channels ($L=1$) in the rest of the paper.

With linear precoding, the transmitted signal is 
\begin{align}
\rvVec{X} = \sum_{k=1}^K \rvVec{W}_k \rv{X}_k,
\end{align}
where for user
$k\in[K]$, $\rv{X}_k$ is the {\it private} signal and $\rvVec{W}_k$ is the precoder of unit norm that depends only on the estimated
channel matrix~$\hat{\rvMat{H}}$ as defined in \eqref{eq:imperfectCSIT}. In this
work, we assume that $\nt \ge K$ and focus on zero-forcing~(ZF) precoder. The precoding vector $\{\rvVec{W}_k\}$ for user~$k$ is
\begin{align}
	\rvVec{W}_k &= \alpha_k \rvMat{U}_k \rvMat{U}_k^\H \hat{\rvVec{H}}_k^*,
\end{align}%
where the columns of $\rvMat{U}_k$ form an orthonormal basis of the null space of
$\text{span}(\{\hat{\rvVec{H}}^*_l\}_{l\ne k})$ and $\rvMat{U}_k$ is assumed to be independent of
$\hat{\rvVec{H}}_k$; $\alpha_k \defeq \frac{1}{\|\hat{\rvVec{H}}_k^\T \rvMat{U}_k\|}$ is the normalization
factor such that $\|\rvVec{W}_k\| = 1$. Intuitively, we project the signal of user $k$ onto the null space
of all other users' channels to eliminate the interference and then align with its own channel to maximize
the received signal power. Note that each precoding vector here is normalized so that each stream can
have the same power.  We use i.i.d.~Gaussian signaling for tractability, i.e., $\{\rv{X}_k\}$ are
i.i.d.~$\mathcal{CN}(0, P_k)$ with sum power constraint $\sum_{k=1}^{K}P_k = P$. User $k$ receives the signal
\begin{align} \label{eq:ZFrxsignal}
	\rv{Y}_k 
	&= \rv{G}_k \rv{X}_k + \sum_{l\ne k} \tilde{\rv{G}}_{k,l} \rv{X}_l + \rv{Z}_k,
\end{align}%
where $\rv{G}_{k} \defeq {\rvVec{H}}_k^\T \rvVec{W}_k$ and $\tilde{\rv{G}}_{k,l} \defeq \tilde{\rvVec{H}}_k^\T \rvVec{W}_l \sim
\mathcal{CN}(0,	\sigma^2)$. It is worth mentioning that the above equivalent channel coefficients are not independent between each other. Let us
define the signal-to-interference-plus-noise ratio~(SINR) at receiver~$k\in[K]$ as
\begin{align}
	\SINRzf(\rvMat{H}) &\defeq
	\frac{{|\rv{G}_k|}^2 P_k}{1+ \sum_{l\ne k} |\tilde{\rv{G}}_{k,l}|^2 P_l }. \label{eq:SINRzf}
\end{align}%
For any realization $\rvMat{H} = \pmb{H}$, we obtain the instantaneous rate $R_k(\pmb{H}) = \ln \left(1 + \SINR_k(\pmb{H}) \right)$ for user~$k\in [K]$.
The long-term average unicast rate of user~$k$ is
\begin{align}
\bar{R}_k &:= \mathbb{E}\left[\ln \left(1 + \SINR_k(\rvMat{H}) \right) \right].
\end{align}%
For simplicity, we consider uniform power allocation, i.e., $P_k=\frac{P}{K} =:p,\forall\,k\in[K]$. Then $\SINRzf = \frac{{|\rv{G}_k|}^2}{p^{-1}+
\sum_{l\ne k} |\tilde{\rv{G}}_{k,l}|^2}$. Due to the symmetry of the problem, the marginal distribution of $\SINRzf$ does not depend on $k$, and
can be described as follows.
\newcommand{\SINRsym}{\SINR_{\rm sym}}
\begin{lemma} \label{lemma:SINRzf}
	With uniform power allocation~($p=P/K$), $\SINR_k$ can be written, in distribution, as 
        \begin{align}
          \SINR_k &\overset{d}{=} \SINRsym \defeq \frac{\left| \sigma \rv{A}_K + \sqrt{(\nt-K+1)(1-\sigma^2)
          \rv{B}_K}
          \right|^2}{p^{-1} + (K-1) \sigma^2 \rv{C}_K }, \label{eq:SINRsym}
        \end{align}%
        for some joint distribution of~$(\rv{A}_K, \rv{B}_K, \rv{C}_K)$
        such that $\rv{A}_K\sim\mathcal{CN}(0,1)$ and $\rv{B}_K \sim \text{\rm Gamma}(\nt-K+1,\frac{1}{\nt-K+1})$
        are independent, and $\E[\rv{C}_K] = 1$. In addition, when $\liminf\limits_{K\to\infty}\frac{\nt}{K} > 1$, we
        have 
        \begin{align}
          \rv{B}_K &\xrightarrow{\text{a.s.}} 1 \quad \text{and} \quad
          \rv{C}_K \xrightarrow{\text{a.s.}} 1. \label{eq:tmpas} 
        \end{align}%
\end{lemma}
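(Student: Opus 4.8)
The plan is to condition throughout on the estimated channel $\hat{\rvMat{H}}$ (equivalently on the beamformers $\{\rvVec{W}_l\}_{l=1}^K$ and the bases $\{\rvMat{U}_l\}$, all of which are $\hat{\rvMat{H}}$-measurable) and to exploit that the estimation error $\tilde{\rvMat{H}}$ is independent of $\hat{\rvMat{H}}$. First I would decompose the useful coefficient. Writing $\rvVec{H}_k=\hat{\rvVec{H}}_k+\tilde{\rvVec{H}}_k$ and using the zero-forcing property $\hat{\rvVec{H}}_k^\T\rvVec{W}_l=0$ for $l\ne k$ (which also explains why $\tilde{\rv{G}}_{k,l}=\rvVec{H}_k^\T\rvVec{W}_l$ depends only on the error), the signal term splits as $\rv{G}_k=\hat{\rvVec{H}}_k^\T\rvVec{W}_k+\tilde{\rvVec{H}}_k^\T\rvVec{W}_k$. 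Since $\rvVec{W}_k=\alpha_k\rvMat{U}_k\rvMat{U}_k^\H\hat{\rvVec{H}}_k^*$ with $\alpha_k=1/\|\rvMat{U}_k^\H\hat{\rvVec{H}}_k^*\|$, the first part collapses to the nonnegative real quantity $\hat{\rvVec{H}}_k^\T\rvVec{W}_k=\|\rvMat{U}_k^\H\hat{\rvVec{H}}_k^*\|$, while the second is the unit-norm projection $\tilde{\rvVec{H}}_k^\T\rvVec{W}_k$.

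I then identify the laws of the two pieces. Because $\rvMat{U}_k$, of rank $\nt-K+1$ almost surely (the $K-1$ vectors $\{\hat{\rvVec{H}}_l^*\}_{l\ne k}$ are a.s.\ linearly independent since $\nt\ge K$), is independent of $\hat{\rvVec{H}}_k^*\sim\mathcal{CN}(0,(1-\sigma^2)\Id)$, the projection $\rvMat{U}_k^\H\hat{\rvVec{H}}_k^*$ is $\mathcal{CN}(0,(1-\sigma^2)\Id_{\nt-K+1})$; hence $\|\rvMat{U}_k^\H\hat{\rvVec{H}}_k^*\|^2=(1-\sigma^2)(\nt-K+1)\rv{B}_K$ with $\rv{B}_K\sim{\rm Gamma}(\nt-K+1,\frac{1}{\nt-K+1})$, and the first term equals $\sqrt{(\nt-K+1)(1-\sigma^2)\rv{B}_K}$. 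For the second term, conditioning on $\hat{\rvMat{H}}$ and using that $\tilde{\rvVec{H}}_k\sim\mathcal{CN}(0,\sigma^2\Id)$ is independent of the unit vector $\rvVec{W}_k$ gives $\tilde{\rvVec{H}}_k^\T\rvVec{W}_k\overset{d}{=}\sigma\rv{A}_K$ with $\rv{A}_K\sim\mathcal{CN}(0,1)$; crucially this conditional law does not depend on $\hat{\rvMat{H}}$, so $\rv{A}_K$ is independent of $\hat{\rvMat{H}}$ and therefore of $\rv{B}_K$. This reproduces the numerator of $\SINRsym$. For the denominator I set $\rv{C}_K:=\frac{1}{(K-1)\sigma^2}\sum_{l\ne k}|\tilde{\rv{G}}_{k,l}|^2$; since each $\tilde{\rv{G}}_{k,l}=\tilde{\rvVec{H}}_k^\T\rvVec{W}_l\sim\mathcal{CN}(0,\sigma^2)$ conditionally, $\E[\rv{C}_K]=1$, which yields \eqref{eq:SINRsym}.

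It remains to prove the a.s.\ limits under $\liminf_{K\to\infty}\nt/K>1$. For $\rv{B}_K$ this is routine: $\nt-K+1$ then grows at least linearly in $K$, and $\rv{B}_K$ is the normalized sum of $\nt-K+1$ i.i.d.\ unit-mean ${\rm Exp}(1)$ terms, so a Gamma (Chernoff) tail bound of the form $\Pr(|\rv{B}_K-1|>\epsilon)\le 2e^{-c(\nt-K+1)\epsilon^2}$ is summable in $K$, and Borel--Cantelli gives $\rv{B}_K\xrightarrow{\text{a.s.}}1$. For $\rv{C}_K$ I would write $\sum_{l\ne k}|\tilde{\rv{G}}_{k,l}|^2=\tilde{\rvVec{H}}_k^\T\pmb{M}\tilde{\rvVec{H}}_k^*$ with $\pmb{M}:=\sum_{l\ne k}\rvVec{W}_l\rvVec{W}_l^\H$, a random Hermitian matrix with $\trace(\pmb{M})=K-1$. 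Conditioned on $\hat{\rvMat{H}}$, $\rv{C}_K$ is a Gaussian quadratic form with conditional mean $1$ and conditional variance $\trace(\pmb{M}^2)/(K-1)^2\le\|\pmb{M}\|/(K-1)$, where $\|\pmb{M}\|$ denotes the largest eigenvalue.

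The main obstacle is exactly this last variance bound: I must show $\|\pmb{M}\|=O(1)$ almost surely, i.e.\ that the normalized ZF beamformers stay well-conditioned so that their Gram matrix has bounded operator norm. This is where $\liminf\nt/K>1$ enters, since the beamformers are built from $\hat{\rvMat{H}}^\H(\hat{\rvMat{H}}\hat{\rvMat{H}}^\H)^{-1}$, whose mutual coherence is controlled by the extreme eigenvalues of the Wishart matrix $\frac{1}{\nt}\hat{\rvMat{H}}\hat{\rvMat{H}}^\H$; by Bai--Yin/Marchenko--Pastur these are a.s.\ bounded away from $0$ and $\infty$ once $K/\nt$ stays below $1$. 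Granting $\|\pmb{M}\|=O(1)$ a.s., the conditional variance is $O(1/K)$, so a conditional Chebyshev inequality together with Borel--Cantelli delivers $\rv{C}_K\xrightarrow{\text{a.s.}}1$, completing the proof.
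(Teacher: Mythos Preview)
Your approach coincides with the paper's: decompose $\rv{G}_k$ into $\hat{\rvVec{H}}_k^\T\rvVec{W}_k=\|\rvMat{U}_k^\H\hat{\rvVec{H}}_k^*\|$ and $\tilde{\rvVec{H}}_k^\T\rvVec{W}_k$, identify the Gamma and complex-Gaussian laws, and for $\rv{C}_K$ control the operator norm of $\pmb{M}=\sum_{l\ne k}\rvVec{W}_l\rvVec{W}_l^\H$ through the Wishart condition number of $\hat{\rvMat{H}}\hat{\rvMat{H}}^\H$. The paper makes the last step explicit by writing the precoder as $\rvMat{W}=\hat{\rvMat{H}}^\dagger\rvMat{D}$ with a diagonal normalizer $\rvMat{D}$, whence $\lambda_{\max}(\rvMat{W}^\H\rvMat{W})\le\lambda_{\max}(\hat{\rvMat{H}}\hat{\rvMat{H}}^\H)/\lambda_{\min}(\hat{\rvMat{H}}\hat{\rvMat{H}}^\H)$, and then invokes a trace lemma for quadratic forms to conclude $\rv{C}_K\xrightarrow{\text{a.s.}}1$; your conditioning argument for the independence of $\rv{A}_K$ and $\rv{B}_K$ is slightly cleaner than the paper's norm--direction decoupling, but equivalent.

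There is, however, a real slip in your final step. A conditional Chebyshev bound using only the variance estimate $\mathrm{Var}(\rv{C}_K\mid\hat{\rvMat{H}})\le\|\pmb{M}\|/(K-1)=O(1/K)$ yields $\Pr(|\rv{C}_K-1|>\epsilon)=O(1/K)$, which is \emph{not} summable, so Borel--Cantelli does not apply and you cannot conclude almost-sure convergence from this alone. You need stronger concentration for the Gaussian quadratic form: either a Hanson--Wright exponential tail, or simply the fourth conditional moment, which satisfies $\E\bigl[(\rv{C}_K-1)^4\mid\hat{\rvMat{H}}\bigr]=O\bigl(\trace(\pmb{M}^2)^2+\trace(\pmb{M}^4)\bigr)/(K-1)^4=O(1/K^2)$ once $\|\pmb{M}\|=O(1)$ a.s.; Markov's inequality then gives a summable $O(1/K^2)$ tail. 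This higher-moment step is exactly what the trace lemma the paper cites packages; with that upgrade your argument is complete.
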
 
\begin{proof}
	The proof is provided in Appendix~\ref{proof:lemmaSINRzf}.
\end{proof}
In this work, we focus exclusively on the case with $\liminf\limits_{K\to\infty}\frac{\nt}{K} > 1$ to
gain some insight on the behavior of ZF precoding.
The case with $\nt = K$ is too involved\footnote{When $\nt = K$, the almost
sure convergence of the sum $\frac{1}{K-1} \sum_{l\ne k} |\tilde{\rv{G}}_{k,l}|^2$ does not hold. We need
to establish upper and lower bounds to derive the scaling of $\Rsym$.} for our purposes here and is not considered. With uniform power allocation, the long-term unicast rate is also symmetric, i.e., $\bar{R}_k = \Rsym, ~\forall\,k\in[K]$.
Using Lemma~\ref{lemma:SINRzf}, we can derive the asymptotic behavior of $\Rsym$ in the large
$K$ regime. 
\begin{proposition} \label{prop:Rsym}
With uniform power allocation ($p = P/K$) and $\liminf\limits_{K\to\infty}\frac{\nt}{K} > 1$, we have 
\begin{align}
  \Rsym &\approxasympt \begin{cases}
    \frac{1 + (\nt-K+1)(1-\sigma^2)}{p^{-1} + K-1 }, \\
    \hspace{1cm}\text{when {\small$(\nt-K+1)(1-\sigma^2) =   O(1)$}},\\
\ln\left( 1 +  \frac{(\nt-K+1)(1-\sigma^2) }{p^{-1} + (K-1) \sigma^2 } \right), \\
\hspace{1cm}\text{when {\small$(\nt-K+1)(1-\sigma^2) \to \infty$}}. \\
  \end{cases}
  \label{eq:Rsym}
\end{align}%
\end{proposition}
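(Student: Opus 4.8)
The plan is to study $\Rsym=\E[\ln(1+\SINRsym)]$ directly from the representation in Lemma~\ref{lemma:SINRzf}, leaning on the almost-sure limits $\rv{B}_K\xrightarrow{\text{a.s.}}1$ and $\rv{C}_K\xrightarrow{\text{a.s.}}1$ of \eqref{eq:tmpas}. Throughout I abbreviate $D_K\defeq(\nt-K+1)(1-\sigma^2)$ and $d_K\defeq p^{-1}+(K-1)\sigma^2$, and write $\SINRsym=N_K/M_K$ with $N_K\defeq|\sigma\rv{A}_K+\sqrt{D_K\rv{B}_K}|^2$ and $M_K\defeq p^{-1}+(K-1)\sigma^2\rv{C}_K$. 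Expanding $N_K=\sigma^2|\rv{A}_K|^2+D_K\rv{B}_K+2\sigma\sqrt{D_K\rv{B}_K}\,\Re(\rv{A}_K)$ and using that $\rv{A}_K,\rv{B}_K$ are independent with $\E[|\rv{A}_K|^2]=1$, $\E[\Re(\rv{A}_K)]=0$, $\E[\rv{B}_K]=1$, I get the exact identity $\E[N_K]=\sigma^2+D_K$. Since $\rv{C}_K\ge 0$ gives $M_K\ge p^{-1}$ always, while $\rv{C}_K\to1$ a.s.\ gives $|M_K/d_K-1|\le|\rv{C}_K-1|\to 0$ a.s. The key reduction, used in both regimes, is to replace the random denominator by its mean, i.e.\ to show
\[ \E[\SINRsym]=\E[\tfrac{N_K}{M_K}]\approxasympt\frac{\E[N_K]}{d_K}=\frac{\sigma^2+D_K}{p^{-1}+(K-1)\sigma^2}. \]

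For the first regime I assume $D_K=O(1)$. Because $\liminf_K\nt/K>1$ forces $\nt-K+1=\Omega(K)$, the bound $D_K=O(1)$ forces $1-\sigma^2=O(1/K)$, hence $\sigma^2\to1$; therefore $\E[N_K]=\sigma^2+D_K\approxasympt 1+D_K$ and $d_K\approxasympt p^{-1}+K-1$. The reduction above then gives $\E[\SINRsym]\approxasympt(1+D_K)/(p^{-1}+K-1)=\Theta(1/K)=o(1)$. Next I would verify the second-moment hypothesis of case~2 of Lemma~\ref{lemma:log-asympt}: from $M_K\ge p^{-1}=K/P$ and the fact that $N_K$ has uniformly bounded mean-square (its ingredients $|\rv{A}_K|^2$, $\rv{B}_K$, and $D_K$ are all $O(1)$ in mean-square), one obtains $\E[\SINRsym^2]\le(P/K)^2\E[N_K^2]=O(1/K^2)=o(\E[\SINRsym])$. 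Case~2 of Lemma~\ref{lemma:log-asympt} then yields $\Rsym\approxasympt\E[\SINRsym]\approxasympt(1+D_K)/(p^{-1}+K-1)$, which is the first line of \eqref{eq:Rsym}.

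For the second regime $D_K\to\infty$, set $s_K\defeq D_K/d_K$. Here $\E[N_K]=\sigma^2+D_K\approxasympt D_K$, so the reduction gives $\E[\SINRsym]\approxasympt s_K$; moreover $N_K/D_K=|\sigma\rv{A}_K/\sqrt{D_K}+\sqrt{\rv{B}_K}|^2\to 1$ a.s., whence $\SINRsym/s_K=(N_K/D_K)(d_K/M_K)\to 1$ a.s. I would then sandwich $\Rsym$. For the upper bound, Jensen's inequality gives $\Rsym\le\ln(1+\E[\SINRsym])$, and since $\E[\SINRsym]\approxasympt s_K$ one checks $\ln(1+\E[\SINRsym])\approxasympt\ln(1+s_K)$ (immediate in each subcase $s_K\to0$, $s_K\to c\in(0,\infty)$, $s_K\to\infty$). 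For the lower bound I avoid any domination argument: concavity of $r\mapsto\ln(1+r)$ with $\ln(1)=0$ gives $\ln(1+(1-\epsilon)s_K)\ge(1-\epsilon)\ln(1+s_K)$, so restricting the expectation to the event $\{\SINRsym\ge(1-\epsilon)s_K\}$ and using $\ln(1+\SINRsym)\ge 0$ yields $\Rsym\ge(1-\epsilon)\ln(1+s_K)\,\Pr(\SINRsym\ge(1-\epsilon)s_K)$; since $\SINRsym/s_K\to1$ in probability the last probability tends to $1$, and letting $K\to\infty$ then $\epsilon\downarrow0$ gives $\liminf_K\Rsym/\ln(1+s_K)\ge1$. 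Together these give $\Rsym\approxasympt\ln(1+s_K)$, the second line of \eqref{eq:Rsym}.

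The hard part is the displayed reduction $\E[N_K/M_K]\approxasympt\E[N_K]/d_K$: Lemma~\ref{lemma:SINRzf} only supplies a joint law of $(\rv{A}_K,\rv{B}_K,\rv{C}_K)$ with possible dependence, so I cannot turn the expectation of the ratio into a ratio of expectations by independence. The route I would take is to write $N_K/M_K=(N_K/d_K)(d_K/M_K)$, note that $0\le d_K/M_K\le d_K/p^{-1}=1+(K-1)\sigma^2 p\le 1+\sigma^2 P$ is bounded and converges a.s.\ to $1$, and bound the error $\E[(N_K/d_K)(d_K/M_K-1)]$ by Cauchy--Schwarz together with bounded convergence for $\E[(d_K/M_K-1)^2]\to0$; the surviving factor $\E[(N_K/d_K)^2]$ is then a routine second-moment computation on $N_K$. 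This argument is clean for fixed power $P=\Theta(1)$, which is the regime of interest; accommodating a growing $P$ would require tracking the bound $1+\sigma^2 P$ more carefully.
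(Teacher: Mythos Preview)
Your proof is correct and follows the same core strategy as the paper: invoke the representation of $\SINRsym$ from Lemma~\ref{lemma:SINRzf} and the almost-sure limits $\rv{B}_K\to 1$, $\rv{C}_K\to 1$. The paper's own proof is considerably terser than yours: in the first regime it simply observes $\SINRsym\xrightarrow{\text{a.s.}}0$ (hence $\ln(1+\SINRsym)\approxasympt\SINRsym$ pointwise) and writes $\E[\SINRsym]\approxasympt(\sigma^2+D_K)/d_K$ without discussing the random denominator; in the second regime it asserts $\SINRsym\,d_K/D_K\xrightarrow{\text{a.s.}}1$ and immediately concludes $\Rsym\approxasympt\ln(1+s_K)$. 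Your Cauchy--Schwarz reduction for $\E[N_K/M_K]$, your appeal to case~2 of Lemma~\ref{lemma:log-asympt} in the first regime, and your Jensen/event-restriction sandwich in the second regime are exactly the ingredients the paper leaves implicit when passing from almost-sure convergence to convergence of expectations; your version is more rigorous but not a different route. Your flagged restriction to $P=\Theta(1)$ is likewise implicit in the paper's argument.
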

Before proving the proposition, we provide some observations. 
The asymptotic behavior of $\Rsym$ depends on the channel estimation error $\sigma^2$. If the channel estimation fails when $K\to\infty$ in such a
way\footnote{This can happen when the resources for channel estimation saturate with a large number of users.}
that $(\nt-K+1)(1-\sigma^2) = O(1)$, we see from \eqref{eq:Rsym} that the symmetric rate decays with $K$ as $1/K$ for given total power $P$. Otherwise, the
symmetric rate depends on $(\nt,K,p,\sigma^2)$ in a non-trivial way. The case of particular interest is when the estimation error variance $\sigma^2$ is
fixed and strictly smaller than $1$, in this case the symmetric rate does not vanish with $K$ for fixed total power $P$. Indeed, according to
\eqref{eq:Rsym}, $\Rsym$ can even grow unboundedly with $\frac{\nt}{K}$ thanks to the \emph{beamforming gain}. We shall have more discussion on this
assumption at the end of this section. 
\begin{proof}[Proof of Proposition~\ref{prop:Rsym}]
When $(\nt-K+1)(1-\sigma^2) = O(1)$, we have $1-\sigma^2 \to 0$ since $\nt-K+1 \to \infty$. 
From \eqref{eq:SINRsym}, we notice
that $\SINRsym\xrightarrow{\text{a.s.}}0$. Thus, $\ln(1+\SINRsym) \approxasympt \SINRsym $ when
$K$ is large, and $\Rsym = \E[\ln(1+\SINRsym)]\approxasympt \E[\SINRsym] $ becomes
\begin{align}
  \Rsym 
  &\approxasympt \frac{\sigma^2 + (\nt-K+1)(1-\sigma^2)}{p^{-1} + (K-1) \sigma^2 } \\
  &\approxasympt \frac{1 + (\nt-K+1)(1-\sigma^2)}{p^{-1} + K-1 }. 
\end{align}%
When $(\nt-K+1)(1-\sigma^2) \to \infty$, 
from \eqref{eq:SINRsym} and \eqref{eq:tmpas}, it follows that 
\begin{align}
 \SINRsym \frac{p^{-1} + (K-1) \sigma^2 }{{(\nt-K+1)(1-\sigma^2) }
 }  &\xrightarrow{\text{a.s.}} 1
\end{align}%
and thus
  $\Rsym \approxasympt \ln\left( 1 +  \frac{(\nt-K+1)(1-\sigma^2) }{p^{-1} + (K-1) \sigma^2 } \right)$. 
\end{proof}

For content delivery, since we assume that each user already caches in average a fraction $m$ of the requested file, 
to \emph{complete} the file of $F$ bits, the base station needs to send $(1-\normmem) F$ bits. With spatial multiplexing, this transmission takes
$(1-\normmem) F / \Rsym$ units of time in average. It follows that the equivalent content delivery rate of the system is simply
\begin{align} \label{eq:Runi}
\Ruc = \frac{K}{1-\normmem}\Rsym.
\end{align}%
\begin{example}
  Let us consider a commonly used, albeit simplified, MMSE channel estimation model with $\sigma^2 =
  \frac{1}{1+p}$. Then, it follows that 
  \begin{align}
  \begin{cases}
  \sigma^2 = \Theta(p^{-1}), &\text{when $p\to\infty$}, \\
  1-\sigma^2 = \Theta(p), &\text{when $p\to 0$}, \\
  \sigma^2 = \Theta(1), 1-\sigma^2 = \Theta(1), &\text{when $p$ is fixed}.
  \end{cases}
  \end{align}
  Further, we assume that $\lim\limits_{K\to\infty}\frac{\nt}{K} = \beta > 1$. From \eqref{eq:Rsym}, on one hand, we see that
  if the per-user power $p \to 0$ when $K\to \infty$, the symmetric transmission rate vanishes
  as $\Rsym = (\beta-1)\Theta(p)$, thus $\Ruc = (\beta-1)\Theta(K p)$. On the other hand, if the per-user power is not
  vanishing with $K$, i.e., $p = \Omega(1)$, then $\Rsym = \Omega(1)$ and thus $\Ruc = \Omega(K)$. 
\end{example}

\begin{remark}
The above example shows that, when CSIT error is inversely proportional to the per-user power $p$, content delivery with spatial multiplexing
requires at least a linearly increasing total transmit power ($P = \Omega(K)$) and linearly increasing 
number of transmit antennas to achieve scalability. In contrast, all the scalable multicast-based schemes
listed in Theorem~\ref{theo:scalableSchemes} require only a fixed total power and a reduced number of
transmit antennas.
\end{remark}

\section{Further improvement with simultaneous multicasting and multiplexing}
\label{sec:mixed}
\newcommand{\Rmx}{\mathcal{R}_{\rm {mix}}}
\newcommand{\whp}{\emph{w.h.p.}}
\newcommand{\Rmix}{\bar{R}^{\rm {mix}}}
\newcommand{\Rsymmix}{\Rsym^{\rm {mix}}}

In previous sections, we have investigated two extreme uses of multiple antennas: (coded) multicasting and spatial multiplexing. The gains of these two techniques are pronounced in different regimes. Spatial multiplexing achieves good performance with precise CSIT and at high power. Whereas, at a fixed total power and without CSIT, multicasting can still achieve the scalability of the system. 
Therefore, it is favorable to perform simultaneous spatial multiplexing and multicasting to further benefit from both multiplexing gain and global caching gain. The synergy of multicasting and spatial multiplexing in coded caching was observed in \cite{zhang2015coded,Elia2017} for minimizing the transmission time at high SNR regime. It was shown that when CSIT is perfectly known, ZF with uncoded caching is optimal for that purpose, since ZF can eliminate inter-user interference and create parallel links using perfect channel knowledge. When the CSIT is imperfect, however, the interference is inevitable, and coded caching is needed to retrieve the minimal transmission time. In our setting, the simultaneous multicasting and multiplexing can be done with rate splitting\footnote{The combination of multicast and spatial multiplexing in the presence of CSIT error was first proposed in \cite{yang2013degrees} and then investigated in \cite{dai2015rate}~(and the references therein). This technique was first applied to coded caching in~\cite{zhang2015coded}.} as follows.


\subsection{Simultaneous multicasting and multiplexing}
We consider the transmission of signal carrying both the {\em common} information coded in $\rvVec{X}_0$
interested by all the users, and a set of \emph{private} information coded in $\{\rv{X}_k\}$ where $\rv{X}_k$ is intended exclusively for user $k$, $k \in [K]$. We still consider quasi-static fading channels. 
The transmitted signal is
\begin{align}
\rvVec{X} = \rvVec{X}_0 + \sum_{k=1}^K \rvVec{W}_k X_k,
\end{align}
where $\rv{X}_0, \rv{X}_k, \rvVec{W}_k$, $k \in [K]$ are defined as before, except for the new total power constraint $\sum_{k=0}^K P_k \leq P$. Obviously, this general setting includes the two extreme cases $P_0 = 0$ for spatial multiplexing and $P_0 = P$ for multicasting. 
We use the same assumption $\nt\ge K$ as in the previous section. 
The received signal at user~$k$ is
\begin{align}
\rv{Y}_k &= \rvVec{H}_k^\T \rvVec{X}_0 + \rv{G}_k \rv{X}_k + \sum_{l\ne k} \tilde{\rv{G}}_{k,l} \rv{X}_l + \rv{Z}_k,
\end{align}%
where $\rv{G}_{k}$ and $\tilde{\rv{G}}_{k,l}$ are defined as for \eqref{eq:ZFrxsignal}. Each receiver is
interested in decoding the common message and its own private message. For simplicity, we consider
successive decoding so that each user decodes the common message first and then the private message.
Therefore, the private signals are seen as interference while decoding the common message. The SINR of the common signal at receiver~$k$ is
\begin{align}
\SINR_k^{(0)}(\rvMat{H}) &\defeq
\frac{\frac{P_0}{n_t}\|\rvVec{H}_k\|^2}{1+ {|\rv{G}_k|}^2 P_k + \sum_{l\ne k} |\tilde{\rv{G}}_{k,l}|^2 P_l }, \label{eq:SINR0}
\end{align}
and the long-term average common (multicast) rate is 
\begin{align}
\Rmix_0 = \E\left[ \ln\left(1 + \min_{k\in[K]} \SINR_k^{(0)}  \right) \right].
\end{align}
Then, the private messages are decoded as before after removing the decoded common signal, with the
same $\SINRzf$ as defined in \eqref{eq:SINRzf}, except that the power is reduced from $P$ to $P-P_0$. Let us consider uniform private power allocation $P_k = \frac{P-P_0}{K}, ~\forall k\in [K]$, then average symmetric private rate $\Rsymmix$ is defined similarly to $\Rsym$ accordingly. While the asymptotic behavior of $\Rsymmix$ is easy to characterize by following the
same steps as in the spatial multiplexing case, the analysis of $\Rmix_0$ is not trivial due to the interference terms in \eqref{eq:SINR0}. 


\begin{proposition} \label{prop:mixRate}
	Let us consider uniform private power allocation $P_k = \frac{P-P_0}{K}, \forall\,k\in[K]$ and assume that
	$\liminf\limits_{K\to\infty}\frac{\nt}{K} > 1$. When $(\nt-K+1)(1-\sigma^2) \to \infty$, the common rate $\Rmix_0$ and symmetric private rate $\Rsymmix$ scale as~\eqref{eq:R0mix} and~\eqref{eq:Rsymmix}, respectively.
	\begin{figure*}[!t]
	\begin{align} 
	\Rmix_0 &\sim \E[\ln\left(1+\frac{P_0}{1+\frac{P-P_0}{K}\left[(\nt-K+1)(1-\sigma^2)+
		\max_{k\in[K]} \left\{\sum_{l\ne k} |\tilde{\rv{G}}_{k,l}|^2 \right\}
		\right]}\right)], \label{eq:R0mix} \\
	\Rsymmix &\sim \ln\left( 1 +  \frac{(\nt-K+1)(1-\sigma^2) }{\frac{K}{P-P_0} + (K-1) \sigma^2 } \right). 
	\label{eq:Rsymmix}
	\end{align}%
		\setlength{\arraycolsep}{1pt}
	\hrulefill \setlength{\arraycolsep}{0.0em}
	\vspace*{1pt}
	\end{figure*}
\end{proposition}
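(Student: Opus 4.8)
The plan is to prove the two displays separately, since the private rate reduces immediately to the spatial-multiplexing analysis already carried out, while the common rate requires controlling the $\min_{k}$ over correlated interference terms. For the private rate \eqref{eq:Rsymmix}: once the common signal has been decoded and subtracted, every user sees exactly the SINR $\SINRzf$ of \eqref{eq:SINRzf}, the only difference being that the private sum-power is $P-P_0$ rather than $P$, i.e.\ the per-user power is $p=(P-P_0)/K$. Hence Lemma~\ref{lemma:SINRzf} applies verbatim, and in the regime $(\nt-K+1)(1-\sigma^2)\to\infty$ the second branch of Proposition~\ref{prop:Rsym} with $p^{-1}=K/(P-P_0)$ is precisely \eqref{eq:Rsymmix}. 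This part is a direct substitution and needs no new argument.

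For the common rate, write $q:=(P-P_0)/K$ and $D:=(\nt-K+1)(1-\sigma^2)$, so that by \eqref{eq:SINR0} $\SINR_k^{(0)}=\frac{P_0\|\rvVec{H}_k\|^2/\nt}{1+q|\rv{G}_k|^2+q\sum_{l\ne k}|\tilde{\rv{G}}_{k,l}|^2}$. The idea is that both the numerator factor $\|\rvVec{H}_k\|^2/\nt$ and the own-signal term $|\rv{G}_k|^2/D$ concentrate to $1$ \emph{uniformly} over $k$, while the interference sum is left intact; minimizing over $k$ then amounts to maximizing the interference, which yields the $\max_{k}\{\sum_{l\ne k}|\tilde{\rv{G}}_{k,l}|^2\}$ of \eqref{eq:R0mix}. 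First I would invoke Lemma~\ref{lemma:SNR01}: since $\liminf_{K}\nt/K>1$ forces $\ln(K)=o(\nt)$, \eqref{eq:fast} gives $\min_{k}\normh\xrightarrow{\text{p}}1$, and an upper-tail Chernoff bound plus a union bound over the $K$ users give the matching $\max_{k}\normh\xrightarrow{\text{p}}1$, hence $\max_{k}|\,\|\rvVec{H}_k\|^2/\nt-1|\xrightarrow{\text{p}}0$. Next, using the representation $|\rv{G}_k|^2\overset{d}{=}|\sigma\rv{A}_K+\sqrt{D\,\rv{B}_K}|^2$ read off from the numerator of \eqref{eq:SINRsym}, together with $\rv{B}_K\xrightarrow{\text{a.s.}}1$ from \eqref{eq:tmpas} and $D\to\infty$, the term $\sigma\rv{A}_K$ is negligible against $\sqrt{D}$ and $|\rv{G}_k|^2/D\to1$; I would upgrade this to the uniform bound $\max_{k}|\,|\rv{G}_k|^2/D-1|\xrightarrow{\text{p}}0$ by combining the exponential tails of the $\mathrm{Gamma}(\nt-K+1,\cdot)$ variable $\rv{B}_K$ (in the spirit of Lemma~\ref{lemma:CDFgamma}) and of the Gaussian $\rv{A}_K$ with a union bound, $\Pr(\max_{k}|\,|\rv{G}_k|^2/D-1|>\epsilon)\le Ke^{-c(\epsilon)(\nt-K+1)}\to0$, which holds because $\nt-K+1=\Omega(K)$ dominates $\ln K$.

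With these two uniform estimates, on the event $\{\max_{k}|\,\|\rvVec{H}_k\|^2/\nt-1|\le\epsilon,\ \max_{k}|\,|\rv{G}_k|^2/D-1|\le\epsilon\}$ (whose probability tends to $1$) one has $P_0\|\rvVec{H}_k\|^2/\nt\in[P_0(1-\epsilon),P_0(1+\epsilon)]$ and $q|\rv{G}_k|^2\in[qD(1-\epsilon),qD(1+\epsilon)]$ for \emph{every} $k$, so that
\begin{align}
\frac{P_0(1-\epsilon)}{1+qD(1+\epsilon)+q\max_{k}\sum_{l\ne k}|\tilde{\rv{G}}_{k,l}|^2}\le\min_{k}\SINR_k^{(0)}\le\frac{P_0(1+\epsilon)}{1+qD(1-\epsilon)+q\max_{k}\sum_{l\ne k}|\tilde{\rv{G}}_{k,l}|^2}.
\end{align}
Since $qD=\Theta(1)$, the $\epsilon$-perturbations vanish as $\epsilon\to0$, so both bounds share the limit $Y_K:=P_0\big/\big(1+q[D+\max_{k}\sum_{l\ne k}|\tilde{\rv{G}}_{k,l}|^2]\big)$ and $\min_{k}\SINR_k^{(0)}/Y_K\xrightarrow{\text{p}}1$, with both quantities $\Theta(1)$. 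Finally, to pass from the random variable to $\Rmix_0=\E[\ln(1+\min_{k}\SINR_k^{(0)})]$, I would note $\ln(1+Y_K)\le\ln(1+P_0)$ is bounded and $\ln(1+\min_{k}\SINR_k^{(0)})\le\ln(1+\frac{P_0}{\nt}\|\rvVec{H}_1\|^2)$ is uniformly integrable, so the in-probability equivalence upgrades to $\Rmix_0\approxasympt\E[\ln(1+Y_K)]$, which is exactly \eqref{eq:R0mix}; since $\E[\ln(1+Y_K)]=\Theta(1)$ is bounded away from $0$, the $\approxasympt$ is genuine.

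The hard part will be the uniform control of the own-signal term $|\rv{G}_k|^2$ in the second step: the variables $\{|\rv{G}_k|^2\}_k$ are statistically dependent across users through the shared zero-forcing geometry, so Lemma~\ref{lemma:SINRzf} only supplies marginals and one must add concentration together with a union bound over the $K$ users. This succeeds provided $D$ (equivalently $\nt-K+1$, since $1-\sigma^2\le1$) outgrows $\ln K$, which holds comfortably in the regime of primary interest where $\sigma^2$ is bounded away from $1$ so that $D=\Theta(K)$; the numerator concentration, the sandwich, and the dominated-convergence passage are then routine.
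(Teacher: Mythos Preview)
Your proposal is correct and follows essentially the same approach as the paper's proof: reduce \eqref{eq:Rsymmix} to Proposition~\ref{prop:Rsym} with $p=(P-P_0)/K$, then for \eqref{eq:R0mix} sandwich $\min_{k}\SINR_k^{(0)}$ by showing that $\|\rvVec{H}_k\|^2/\nt$ and $|\rv{G}_k|^2/[(\nt-K+1)(1-\sigma^2)]$ concentrate to $1$ uniformly over $k$ (via Lemma~\ref{lemma:SNR01}, the marginal representation of Lemma~\ref{lemma:SINRzf}, and a Chernoff/union bound exploiting $\nt-K+1=\Omega(K)$), leaving only the $\max_k$ of the residual interference in the denominator, and finally pass from convergence in probability to convergence of mean by uniform integrability. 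Your write-up is in fact somewhat more explicit than the paper's sketch, and you correctly flag the delicate point (the dependence among the $|\rv{G}_k|^2$ and the need for $\nt-K+1$, and for the Gaussian cross-term also $D$, to dominate $\ln K$), which the paper handles at the same heuristic level.
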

Since the proof of this proposition does not provide additional insight in the problem, it is deferred to Appendix~\ref{app:mix}.

\subsection{Delivery scheme and equivalent content delivery rate}
The delivery scheme exploiting simultaneous multicasting and multiplexing operates as follows. For \textit{placement phase}, we use the same centralized or decentralized placement as in \cite{maddah2013fundamental} and \cite{maddah2013decentralized}, respectively. We still assume that each user $k$ has a sequence of demands $d_k^{(1)}, d_k^{(2)}, \dots, k\in[K]$, which are revealed to the server whenever a user requests a file. Next, each demand sequence is split into two subsequences to be delivered simultaneously in the \textit{delivery phase}. Assume that at the beginning, the server receives two different demands from each user $k$: $d^{(1)}_k$ in subsequence 1 and $d^{(2)}_k$ in subsequence 2. To deliver $\{d^{(1)}_k\}$, the server forms a multicast codeword containing $T(m,K)F$~bits following the centralized/decentralized coded caching scheme, and encodes it in the common signal $\rvVec{X}_0$. Meanwhile, it encodes the uncached fraction of $d^{(2)}_k$ containing $(1-m)F$~bits in the private signal $\rv{X}_k$, $k\in[K]$.\footnote{The encoding is across time, but we omitted the time index for simplicity.} The server then transmits $\rvVec{X} = \rvVec{X}_0 + \sum_{k=1}^K \rvVec{W}_k X_k$. Each user $k$ can get $d^{(1)}_k$ using its cache content and the multicast codeword decoded from the common signal, following the centralized/decentralized coded caching scheme. Next, since user $k$ already cache $mF$~bits of the file $d^{(2)}_k$, it can also get $d^{(2)}_k$ from its cache content and the message decoded from the private signal. The succeeding demands in the two subsequences of each user are served similarly.

Thus, each user can receive two independent flows. In the first flow carried in the common message, to deliver a file to each user with coded caching, we need to send $T(m,K)F$ bits, which takes $T(m,K)F/\Rmix_0$ units of time, so the equivalent content delivery rate is $\frac{1}{T(m,K)}\Rmix_0$. In the second flow carried in the private message, to deliver another file to each user, we need to send $(1-m)F$~bits, which takes $(1-m)F/\Rsymmix$ units of time, so the equivalent content delivery rate is $\frac{1}{1-m}\Rsymmix$. Since the two flows are independent, requested files can be delivered in parallel between two flows and consecutively within each flow.
Thus, the aggregated content delivery rate with the proposed scheme is simply the sum of the rates achieved with two flows
\begin{align} \label{eq:Rmix}
\Rmx = \frac{K}{T(m,K)}\Rmix_0 + \frac{K}{1-m}\Rsymmix.
\end{align}
The asymptotic behavior of $\Rmx$ depends on that of $\Rmix_0$ and $\Rsymmix$, which was provided in Proposition~\ref{prop:mixRate}. A practically relevant question is to find out the optimal power split $(P_0, P-P_0)$ that maximizes the content delivery rate
$\Rmx$. This problem is not trivial to solve, even in the large $K$ regime, due to the expectation in \eqref{eq:R0mix}. Let us relax this problem in the following example to understand the behavior of the optimal power split. 
\begin{example} \label{example:Popt}
  Let us assume that $\lim\limits_{K\to\infty}\frac{\nt}{K} = \beta > 1$, $(\nt-K+1)(1-\sigma^2) \to \infty$, and, for simplicity, remove the maximization in \eqref{eq:R0mix}. In this
  case, we can write the content delivery rate as $\Rmx \sim G(P,P_0)$ with 
\begin{multline}
G(P,P_0) \defeq \frac{K}{T(m,K)}\ln\left(1+\frac{ P_0}{1+(P-P_0)I_c}\right) \\+ \frac{K}{1-m}\ln\left( 1 +  \frac{I_c-I_p}{(P-P_0)^{-1} +
I_p} \right),
\end{multline}
where $I_c = \frac{(\nt - K +1)(1-\sigma^2) + (K-1)\sigma^2}{K} = \Theta(1)$ and $I_p = \frac{(K-1)\sigma^2}{K} = \Theta(\sigma^2)$. It follows that the optimal power
split should satisfy
	\begin{align} \label{eq:optimalP0}
	\hspace{-.7cm}P - P_0 \sim \left(\frac{-\frac{1-m}{T(m,K)}(1+I_cP)+(I_c-I_p)(1+ P)}{\frac{1-m}{T(m,K)}I_p(1+I_cP)-I_c(I_c-I_p)}\right)^+.
	\end{align}
\end{example}
\begin{remark} \label{remark:optimalP0}
Some properties of the optimal power split can be observed from \eqref{eq:optimalP0}. First, the optimal private power fraction $\frac{P-P_0}{P}$ is decreasing with total power $P$. 
That is, when the total power is low, spending more power to multicast is beneficial, and on the other hand, when the total
power is high, spatial multiplexing should be favored. 
Second, $\frac{P-P_0}{P}$ is decreasing with $m$. That is, more power should be allocated to multicast when the users' cache memory grows. 
This is reasonable since the global caching gain, which comes with multicasting and not with spatial multiplexing, scales up with user cache size. 
\end{remark}

When $(\nt-K+1)(1-\sigma^2) = O(1)$, the CSIT error $\sigma^2 \to 1$. Under this extremely low quality of channel estimate, it is rather clear that multicasting should be even more favored w.r.t. spatial multiplexing than in the case $(\nt-K+1)(1-\sigma^2) \to \infty$. 

\subsection{Numerical results}
In the rest of the section, we show some numerical results to illustrate the equivalent content delivery rate of the mixed delivery and the optimal power split. We consider the system having as many antennas as users, i.e., $\nt = K$. Note that, although we assumed asymptotically more antennas than users in Proposition~\ref{prop:mixRate} and Example~\ref{example:Popt}, the behavior of optimal power split when $\nt = K$ follows the same line of these analytical analysis, as can be observed shortly. Moreover, we consider a fixed per-user power $\frac{P}{K}$, and fixed CSIT error $\sigma^2 = \left(\frac{P}{K}\right)^{-1}$. 

%
First, in Fig.~\ref{fig:Rmul_uni_mix}, we compare the content delivery rate of mixed transmission with optimal power
split, spatial multiplexing alone, and coded multicasting (coded caching with multicasting) alone. 
We observe that optimal mixed transmission is always better than either scheme
alone. For example, about $50\%$ gain is achieved by mixed transmission w.r.t.~either scheme when
$m \approx 6.5\%$ and $P/K = 20$~dB. When $m$ is very small, spatial multiplexing is better than coded multicasting. On the other hand, when $m$ is moderate or large, coded multicasting is better. Further, when $m$ is larger than a certain ratio of the library, coded multicasting becomes optimal.

\begin{figure}[!h]
	\centering
	\subfigure[$P/K = 10$~dB]{
		\includegraphics[width=.45\textwidth]{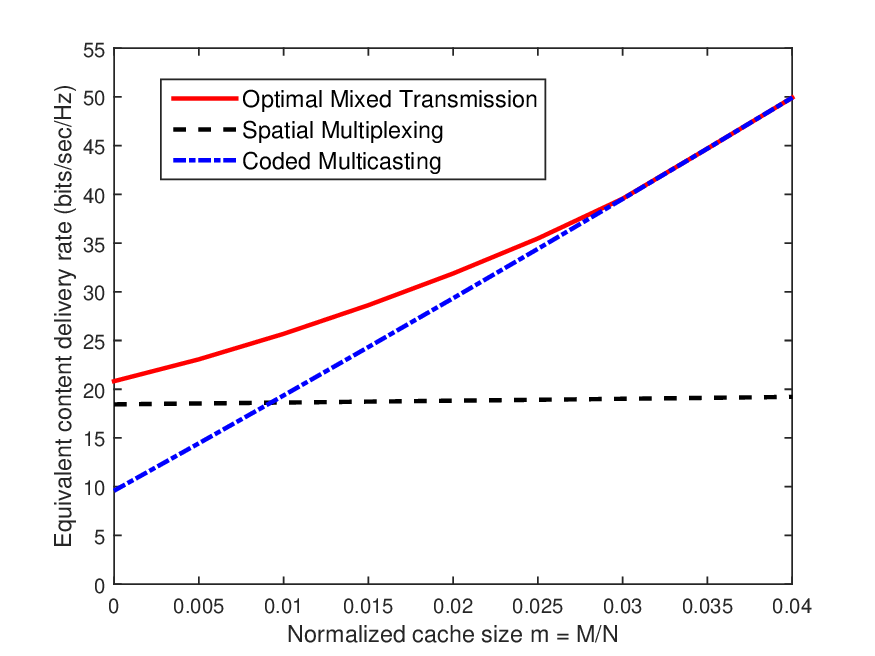}}
	\subfigure[$P/K = 20$~dB]{
		\includegraphics[width=.45\textwidth]{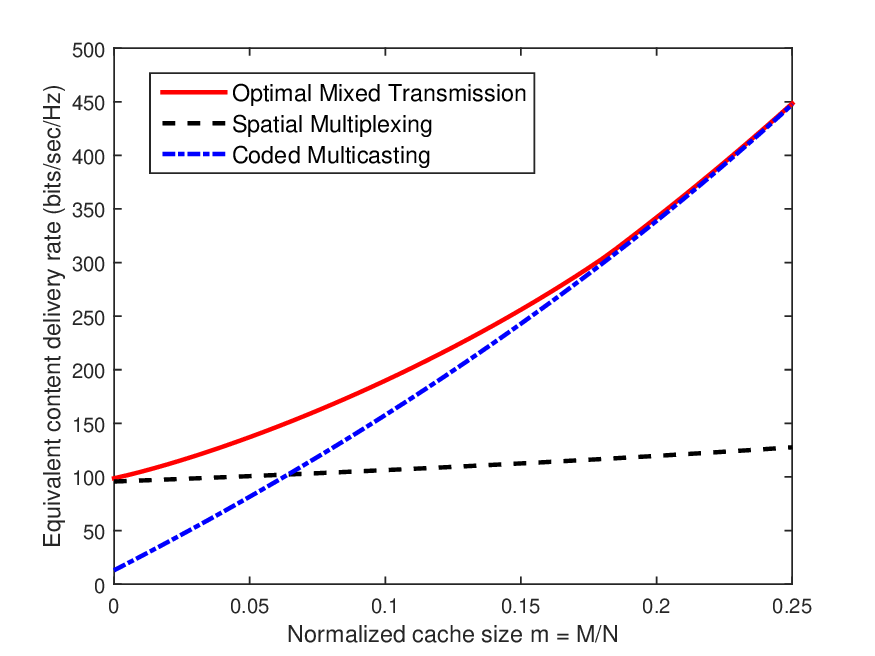}}
	\caption{The equivalent content delivery rate of optimal mixed transmission, spatial multiplexing and coded multicasting as a function of normalized cache size $m$ for $\nt = K = 100$, $P/K = 10,20$~dB, $\sigma^2 = \left(\frac{P}{K}\right)^{-1}$.}
	\label{fig:Rmul_uni_mix}
\end{figure}

Next, in Fig.~\ref{fig:Popt}, we plot the optimal common power fraction $P_0/P$ as a function of normalized cache size $m$ for different values of per-user power $P/K$. As $m$ increases, the figure suggests to allocate more power to multicasting, and even give all power to multicasting when $m$ is larger than a certain ratio of the library, namely, $3.5\%$ for $P/K = 10$~dB, $25\%$ for $P/K = 20$~dB, and $48\%$ for $P/K = 30$~dB. 
\begin{figure}[!h]
	\centering
	\includegraphics[width=.45\textwidth]{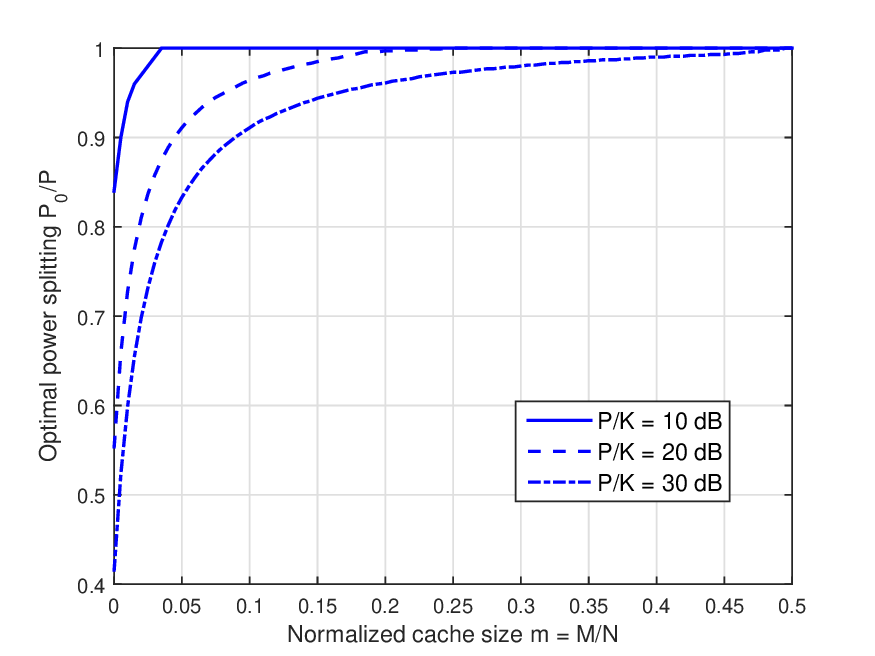}
	\caption{The optimal power splitting, represented by the common power fraction $P_0/P$, as a function of normalized cache size $m$ for $\nt =K = 100$, $P/K = 10, 20, 30$~dB, $\sigma^2 = \left(\frac{P}{K}\right)^{-1}$.}
	\label{fig:Popt}
\end{figure}
\begin{figure}[!h]
	\centering
	\includegraphics[width=.45\textwidth]{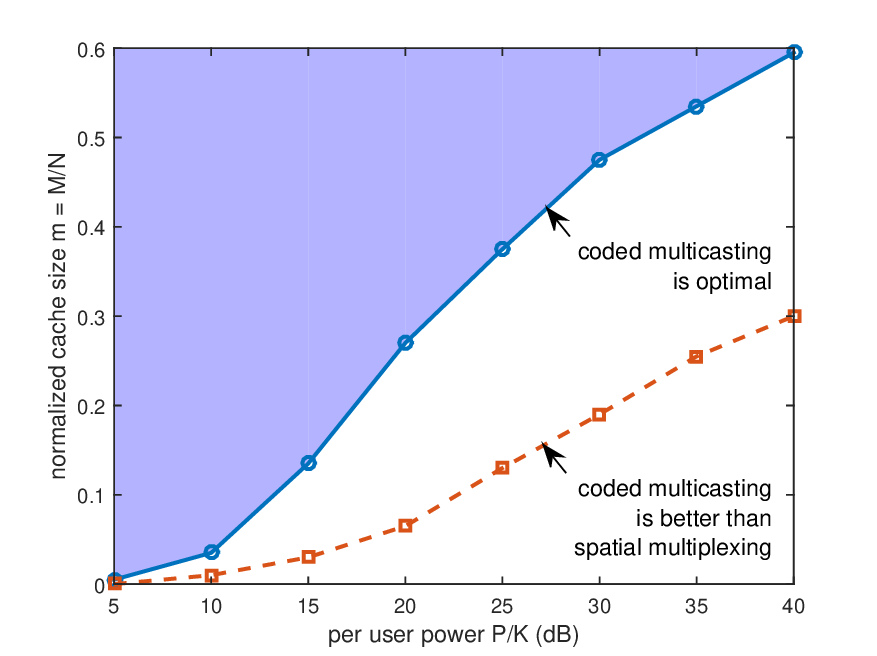}
	\caption{The preferable and optimal region (the set of pairs $(P/K,m)$) of coded multicasting for $\nt = K = 100$, $\sigma^2 = \left(\frac{P}{K}\right)^{-1}$: above the dashed line, coded multicasting is better than spatial multiplexing; above the solid line, coded multicasting is optimal.}
	\label{fig:m_opt}
\end{figure}

From the Fig.~\ref{fig:Rmul_uni_mix} and Fig.~\ref{fig:Popt}, we have observed that, for a given per-user power $P/K$, when the cache memory is sufficiently large, the optimal mixed transmission coincides with coded multicasting and there is no need for spatial multiplexing. This is further illustrated in Fig.~\ref{fig:m_opt}. For every pair $(P/K,m)$ in the shaded region (above the solid line) of the power-memory plane, coded multicasting is optimal, i.e., the optimal power split is $P_0/P = 1$. Besides, we also plot the values of normalized cache size $m$ over which coded multicasting outperforms spatial multiplexing and hence is preferable (the dashed line).


\section{Conclusion}
\label{sec:conclusion}

How to exploit multi-antenna downlink channels to achieve a scalable content delivery rate when the number of users goes to infinity? 
This is the main question that we have addressed in this work. Under various assumptions on the system configurations such as 
the number of transmit antennas, the number of coherence resource blocks, and the CSIT accuracy, we have investigated the multicast-based coded
caching schemes as well as the more conventional spatial multiplexing scheme. A general conclusion from the study is that multicast-based
coded caching is a more attractive option since linear rate scaling can be achieved without CSIT and with only sub-linear number of
transmit antennas with respect to the number of users. 
Based on rate splitting, we have also proposed to combine both multicast and spatial multiplexing to further improve the
performance. The effectiveness of such a combination has been confirmed with the numerical results. It is remarked that when the per-user power is small or the user cache memory is large enough, coded caching with multicasting is optimal and there is no need for spatial multiplexing. 

Due to the symmetry of the setting, we have obtained some simple analytical results in this work. Nevertheless, it would be interesting, in
the future, to consider the more general systems with different path loss, spatial correlation, and fairness constraints across users.

\appendix

\subsection{Proofs of the lemmas in Section~\ref{sec:preliminaries}}
\label{app:preliminaries}

\subsubsection{Proof of Lemma~\ref{lemma:CDFgamma}}
Since $\rv{X}\sim {\rm Gamma}\big({\nt,\frac{1}{\nt}}\big)$, $\rv{X}$ is equivalent to a sum of $\nt$ i.i.d.~exponential
random variables ${\rm Exp}(\nt)$. Thus, we can apply the Chernoff bound \eqref{eq:Chernoff} and obtain
\begin{align}
  \P[ \rv{X} \le x ] &\le e^{\nu x} \, \left(\E[e^{-\nu \rv{Z}}]\right)^{\nt} \\&= \frac{e^{\nu x}}{(1+\nu/\nt)^{\nt}}, \quad\forall\,\nu>0, \label{eq:tmp932}
\end{align}%
where $\rv{Z}\sim{\rm Exp}(\nt)$. It can be shown that for any $x<1$, $\nu^* = \nt(x^{-1}-1)>0$ minimizes the right hand
side of \eqref{eq:tmp932} which becomes $\P[ \rv{X} \le x ] \le e^{-\nt(x-1-\ln x)}$. Let $x = \const$ with
$\const\approx 0.1586$, we obtain \eqref{eq:CDFgamma}.

\subsubsection{Proof of Lemma~\ref{lemma:expectation}}
Let us define $\rv{Y}:= \min_{k\in[K]} \rv{X}_k$. It follows that the CDF of $\rv{Y}$ is $F_{\rv{Y}}(y) =
1-(1-F_{\rv{X}}(y))^K$. With Markov's inequality, we have for any $x_0>0$, $\E[\rv{Y}] \ge x_0 (1-F_{\rv{Y}}(x_0))$ from which
inequality \eqref{eq:Markov1} follows. If $F_{\rv{X}}(x)$ is strictly increasing, the inverse function
$F^{-1}_{\rv{X}}(x)$ exists. For any given $c>0$ and $K$ large enough, we have $\frac{c}{K}<1$ and let $x_0 =
F^{-1}_{\rv{X}}(\frac{c}{K})$. Then, applying \eqref{eq:Markov1}, we can prove \eqref{eq:Markov2} since 
  \begin{align}
    \frac{\E[ \min_{k\in[K]}\rv{X}_k]}{x_0} \ge
    \Bigl(1-\frac{c}{K}\Bigr)^{K} = e^{-c} + o(1), 
  \end{align}
when $K\to\infty$.

\subsubsection{Proof of Lemma~\ref{lemma:log-asympt}}
\underline{Case 1:} if $\E[\rv{X}_K] = \Theta(1)$, then there exists some $c>0$ and $1\ge \rho>0$ such that $\P[\rv{X}_K \ge c] \ge \rho$ when
$K\to\infty$. Otherwise, we would have $\E[\rv{X}_K] = o(1)$. Thus, with probability of at least $\rho$, we have
$\ln(1+\rv{X}_K) \ge \ln(1+c)$, from which $\E[\ln(1+\rv{X}_K)] \ge \rho \ln(1+c) = \Theta(1)$. This and the obvious upper bound $\E[\ln(1+\rv{X}_K)] \le \ln(1+\E[\rv{X}_K]) = \Theta(1)$ confirm $\E[\ln(1+\rv{X}_K)] = \Theta(1)$. 

\underline{Case 2:} if $\E[\rv{X}_K] = o(1)$ and $\E[\rv{X}_K^2] = o(\E[\rv{X}_K])$, then using $\ln(1+x) \ge x - \frac{x^2}{2}$ we can easily show that
$\E[\ln(1+\rv{X}_K)] \ge \E[\rv{X}_K] + o(\E[\rv{X}_K])$. Using Jensen's inequality, we also have $\E[\ln(1+\rv{X}_K)] \le
\ln(1+\E[\rv{X}_K]) \le \E[\rv{X}_K]$. Then $\E[\ln(1+\rv{X}_K)] \sim \E[\rv{X}_K]$.

\underline{Case 3:} $\E[\rv{X}_K] = \Theta(g(K))$ with $g(K) \to \infty$ and $\Pr\left(\rv{X}_K \ge g(K)\right) \ge \rho$ for some $\rho > 0$. From these conditions and Markov's inequality, it readily follows that $\E[\ln(1+\rv{X}_K)] \ge \ln(1+g(K)) \Pr\left(\rv{X}_K \ge g(K)\right) \ge \rho \ln(1+g(K))$. From Jensen's inequality, we also have 
$\E[\ln(1+\rv{X}_K)] \le \ln(1+\E[\rv{X}_K]) = \Theta(\ln(1+g(K)))$, which completes the proof.

\subsubsection{Proof of Lemma~\ref{lemma:SNR0}}
\label{proof:lemmaSNR0}

To prove the convergence of mean, it is enough to show the {\em convergence in distribution} and the {\em uniform integrability}
\cite[Theorem~3.5]{billingsley1999convergence}. Let us define $a_K \defeq \nt\left(\frac{K}{\nt!}\right)^{1/\nt}$. 

First, we shall
show that the sequences $\left\{a_K\min_{k\in[K]} \normh\right\}_K$ and $\left\{ \left(a_K\min_{k\in[K]} \normh \right)^2
\right\}_K$ converge in distribution to the random variables $\rv{Y}$ and $\rv{Y}^2$, respectively, where the random variable
$\rv{Y}$ has CDF $F_\rv{Y}(y) = 1-e^{-y^\nt}$. 
To that end, we focus on the convergence of $a_K\min_{k\in[K]} \normh$, from which the convergence of $\left(a_K\min_{k\in[K]}\normh \right)^2$
can be shown with the {\em continuous mapping theorem}. The proof follows essentially the footsteps of~\cite[Theorem 1]{Hassibi2008scalinglaw}
  and is provided here to be self-contained. Denote $\rv{X}_k \defeq \normh$, then $\rv{X}_k$ is i.i.d. ${\rm
  Gamma}\left(\nt,\frac{1}{\nt}\right)$ across $k$ with CDF $F_\rv{X}(x) = \frac{\gamma(\nt,\nt x)}{\Gamma(\nt)} = 1 - e^{-\nt x}
  \sum_{i = 0}^{\nt-1} \frac{(\nt x)^i}{i!}$ and $\rv{X}_{\min} \defeq \min_{k\in[K]} \rv{X}_k$ has the CDF $F_{\min}(x) =
  1-(1-F_\rv{X}(x))^K$. Expanding $F_\rv{X}(x)$ in Taylor series yields $F_\rv{X}(x) = \sum_{i = 0}^\infty F_\rv{X}^{(i)}(0)
  \frac{x^i}{i!}$, where $F_\rv{X}^{(i)}(0) = (-1)^{i-\nt}\nt^i \binom{i-1}{\nt-1}$ if $i\geq\nt$ and $0$ otherwise. 
	Then 
	\begin{align}
	F_{\min}(x) = 1 - \left[1-\frac{F_\rv{X}^{(\nt)}(0)}{\nt!}x^\nt - \underset{i>\nt}{\sum}\frac{F_\rv{X}^{(i)}(0)}{i!}x^i\right]^K.
	\end{align} 
	Replacing $x$ by $\left(\frac{\nt!}{KF_\rv{X}^{(\nt)}(0)}\right)^{\frac{1}{\nt}} x = \frac{1}{\nt}\left(\frac{\nt!}{K}\right)^{\frac{1}{\nt}} x = \frac{x}{a_K}$, we obtain
	\begin{align}
	&F_{\min}\left(\frac{x}{a_K}\right) \notag \\&= 1-\left[1-\frac{x^\nt}{K}-\sum_{i>\nt}(-1)^{i-\nt}\binom{i-1}{\nt-1}\frac{1}{i!}\left(\frac{\nt!}{K}\right)^{i/nt}x^{i}\right]^K \\
	&= 1-e^{-x^\nt} + o(1), 
	\end{align}
	since $\sum_{i>\nt}(-1)^{i-\nt}\binom{i-1}{\nt-1}\frac{1}{i!}\left(\frac{\nt!}{K}\right)^{i/nt}x^{i} =
        \Theta\left(K^{-1-\frac{1}{\nt}}\right)$ vanishes faster than $\frac{x^\nt}{K} = \Theta(K^{-1})$ and
        $\left(1-\frac{x^\nt}{K}\right)^K = e^{-x^\nt} + o(1)$. From this, a simple change of variable $\rv{Y}_K =
	    a_K\rv{X}_{\min}$ gives $F_\rv{Y_K}(y) \xrightarrow{K\to\infty} 1-e^{-y^\nt}$. 

        Then, we shall show both sequences
  $\left\{a_K{\displaystyle\min_{k\in[K]}}\normh \right\}_K$ and
  $\left\{\left(a_K{\displaystyle\min_{k\in[K]}} \normh \right)^2\right\}_K$ are uniformly integrable.
  \newcommand{\indU}[1]{{\mathbbm{1}_{\{\rv{U}_{#1}\ge \omega\}}}}
	Let $\rv{U}_K \defeq a_K\min_{k\in[K]} \normh$. It is enough to show that $\{\rv{U}_K\}_K$ satisfies $\lim\limits_{\omega \to \infty} \sup_K \E \left[\rv{U}_K \indU{K} \right] = 0$. 
        Indeed,	
	\begin{align}
	\E \left[\rv{U}_K \indU{K} \right] &= \int_{\omega}^{\infty} u dF_{\rv{U}_K}(u) \\
        &= \omega [1-F_{\rv{U}_K}(\omega)] + \int_{\omega}^{\infty} [1-F_{\rv{U}_K}(u)]du.
        \label{eq:tmp37} 
	\end{align}
        which cannot increase with $K$ since $1-F_{\rv{U}_K}(x) = 1-F_{\min_{k\in[K]} \normh}(\frac{x}{a_K}) =
        \left[\frac{\Gamma(\nt,\frac{\nt x}{a_K})}{\Gamma(\nt)}\right]^K$ is non-increasing with $K$. Therefore,
	\begin{align}
	\sup_K \E[\rv{U}_K \indU{K}] &= \E[\rv{U}_1 \indU{1}] \\&= a_1 \frac{\Gamma\left(\nt+1,\nt\omega/a_1\right)}{\Gamma(\nt+1)} 
	\xrightarrow{\omega\to\infty} 0,
	\end{align}
	which means that $\{\rv{U}_K\}$ is uniformly integrable. 
	\newcommand{\indUs}[1]{{\mathbbm{1}_{\{\rv{U}^2_{#1}\ge \omega\}}}}
        Similarly, 
	$\lim\limits_{\omega \to \infty} \sup_K \E[\rv{U}^2_K \indU{K}] = 0$, 
	 since 
	\begin{align}
	\sup_K \E[\rv{U}^2_K \indU{K}] &= \E[\rv{U}^2_1 \indU{1}] \\&= a_1^2 \frac{\Gamma\left(\nt+2,\nt\omega/a_1\right)}{\nt\Gamma(\nt+1)} 
	\xrightarrow{\omega\to\infty} 0.
	\end{align} 
	Thus $\{\rv{U}_K^2\}$ is also uniformly integrable. 
	
	Explicit calculation of $\E[\rv{Y}]$ and $\E[\rv{Y}^2]$ completes the proof of Lemma~\ref{lemma:SNR0}.

\subsubsection{Proof of Lemma~\ref{lemma:SNR01}}

We begin by proving the first part of the lemma. 
Since both $\E \left[\displaystyle\min_{k\in[K]} \normh\right]$ and  
$\E \left[\left(\displaystyle\min_{k\in[K]} \normh\right)^2\right]$ are upper bounded, which can be seen by removing the ``min'' inside the
expectation, it is enough the show that they are also lower bounded. Let us define $\rv{X}_k \defeq \frac{\| \rvVec{H}_k
\|^2}{\nt}\sim{\rm Gamma}(\nt,\frac{1}{\nt})$, $k\in[K]$, with common CDF $F_{\rv{X}}(x)$. From Lemma~\ref{lemma:CDFgamma}, we have
$F_{\rv{X}}(\const) \le e^{-\nt}$ with $\const \simeq 0.1586$. When $\ntgelnK$, we have $\nt \ge \ln(K)+c_0$ for some
$c_0>-\infty$ when $K$ is large enough. It follows that $F_{\rv{X}}(\const) \le e^{-\ln(K)-c_0} = \frac{e^{-c_0}}{K}$, and
consequently $F^{-1}_{\rv{X}}(\frac{e^{-c_0}}{K}) \ge \const$ due to the monotonicity of $F_{\rv{X}}(x)$. Then,
\begin{align}
  \hspace{-.5cm}\frac{\E[ \min_{k\in[K]}\rv{X}_k]}{\const} \ge \frac{\E[ \min_{k\in[K]}\rv{X}_k]}{F^{-1}_{\rv{X}}(\frac{c}{K})} \ge
  e^{-c} + o(1), \label{eq:tmp787}
\end{align}
when $K\to\infty$, where the second inequality is from \eqref{eq:Markov2} for $c:=e^{-c_0}$. This completes the proof of \eqref{eq:tmp888}. To prove
\eqref{eq:tmp889}, it suffices to apply $\E[\rv{X}^2]\ge \E[\rv{X}]^2$ and \eqref{eq:tmp787}, and we have
\begin{align}
  \frac{\E[ \left( \min_{k\in[K]}\rv{X}_k \right)^2]}{\const^2} \ge \frac{\E[
  \min_{k\in[K]}\rv{X}_k]^2}{\left(F^{-1}_{\rv{X}}(\frac{c}{K})\right)^2} \ge
  (e^{-c} + o(1))^2, 
\end{align}
when $K\to\infty$.

For the second part, it suffices to show that, when $\ln(K) = o(\nt)$, for any given $\epsilon>0$, both $\P[\min_{k\in[K]} \normh \ge
1+\epsilon]$ and $\P[\min_{k\in[K]} \normh \le 1-\epsilon]$ go to $0$ when $K\to\infty$. To that end, we first bound $\P[\min_{k\in[K]}
\normh \ge 1+\epsilon] \le \P[ \normh \ge 1+\epsilon]$ which is then upper bounded by $e^{-\nu (1+\epsilon)} (1-\frac{\nu}{\nt})^{-\nt}$
for any $0< \nu<\nt$ from the Chernoff bound \eqref{eq:Chernoff2}. Letting $v = \nt(1-(1+\epsilon)^{-1})$, the upper bound becomes
$e^{-\nt(\epsilon-\ln(1+\epsilon))}$ which goes to $0$ for any $\epsilon>0$. Now, let us consider $\P[\min_{k\in[K]} \normh \le
1-\epsilon]$ which can be rewritten as $1-\left(1-\P[\normh \le 1-\epsilon]\right)^K$.  From the Chernoff bound \eqref{eq:Chernoff}, we
have $\P[\normh \le 1-\epsilon] \le e^{\nu (1-\epsilon)} (1+\frac{\nu}{\nt})^{-\nt}$ for any $\nu > 0$. Letting $v = \nt((1-\epsilon)^{-1} - 1)$ which minimizes the upper bound, we obtain
$\P[\normh \le 1-\epsilon] \le e^{-\nt (-\ln(1-\epsilon) - \epsilon)} = e^{-\nt \delta_\epsilon}$ with $\delta_\epsilon :=
-\ln(1-\epsilon) - \epsilon>0$ for $\epsilon>0$. Therefore, we have 
\begin{align}
 \hspace{-.7cm}\P[\min_{k\in[K]} \normh \le 1-\epsilon] &\le 1 - (1-e^{-\nt\delta_\epsilon})^K \\
  &= 1 - e^{K \ln(1-e^{-\nt\delta_\epsilon})} \\
  &= 1 - e^{- K e^{-\nt\delta_\epsilon} + K o(e^{-\nt\delta_\epsilon})}. \label{eq:tmp221}
\end{align}%
Since $\ln(K) = o(\nt)$, $K e^{-\nt\delta_\epsilon} = e^{\ln(K) - \nt \delta_\epsilon} \to 0$ for any $\delta_\epsilon$. We have just proved that the upper
bound \eqref{eq:tmp221} goes to $0$, which completes the proof. 

\renewcommand*{\proofname}[1]{Proof}

\subsection{Proof of Lemma~\ref{lemma:SINRzf}}
	\label{proof:lemmaSINRzf}
We recall the definition of the precoding vector for user $k$, $	\rvVec{W}_k = \alpha_k \rvMat{U}_k \rvMat{U}_k^\H \hat{\rvVec{H}}_k^*$,
where the columns of $\rvMat{U}_k$ form an orthonormal basis of the null space of ${\rm span}(\{\hat{\rvVec{H}}^*_l\}_{l\ne k})$ and $\rvMat{U}_k$ is independent of $\hat{\rvVec{H}}_k$; $\alpha_k \defeq 1/\|\hat{\rvVec{H}}_k^\T \rvMat{U}_k\|$. With uniform power allocation, $\SINRzf(\rvMat{H}) \defeq \frac{{|\rv{G}_k|}^2}{p^{-1}+ \sum_{l\ne k} |\tilde{\rv{G}}_{k,l}|^2}$,
where $\rv{G}_{k} \defeq {\rvVec{H}}_k^\T \rvVec{W}_k = 
\hat{\rvVec{H}}_k^\T \rvVec{W}_k + \tilde{\rvVec{H}}_k^\T \rvVec{W}_k$ and
$\tilde{\rv{G}}_{k,l} \defeq \tilde{\rvVec{H}}_k^\T \rvVec{W}_l \sim \mathcal{CN}(0,	\sigma^2).$

First, consider $G_k$. Note that $\hat{\rvVec{H}}_k^\T \rvVec{W}_k = \|\hat{\rvVec{H}}_k^\T \rvMat{U}_k\|$ and $
\tilde{\rvVec{H}}_k^\T \rvVec{W}_k = \Bigl(\tilde{\rvVec{H}}_k^\T \rvMat{U}_k\Bigr) \Bigl(\hat{\rvVec{H}}_k^\T \rvMat{U}_k / \|\hat{\rvVec{H}}_k^\T \rvMat{U}_k\| \Bigr)^\H $. Since $\hat{\rvVec{H}}_k^\T$ and $\tilde{\rvVec{H}}_k^\T$ are independent and both contain i.i.d.~circularly
symmetric Gaussian variables, $\hat{\rvVec{H}}_k^\T \rvMat{U}_k$ and $\tilde{\rvVec{H}}_k^\T\rvMat{U}_k$ are also independent and have the same property. Further, the norm $\|\hat{\rvVec{H}}_k^\T \rvMat{U}_k\|$ and
the direction $\hat{\rvVec{H}}_k^\T \rvMat{U}_k / \|\hat{\rvVec{H}}_k^\T \rvMat{U}_k\|$ are independent for a vector of i.i.d.~circularly symmetric Gaussian variables. It readily follows that $\hat{\rvVec{H}}_k^\T \rvVec{W}_k$ and $\tilde{\rvVec{H}}_k^\T \rvVec{W}_k$ are indeed independent with 
\begin{align}
	\rv{A}_K &\defeq \tilde{\rvVec{H}}_k^\T \rvVec{W}_k \sim \mathcal{CN}(0,\sigma^2), \\
	\rv{B}_K &\defeq \frac{|\hat{\rvVec{H}}_k^\T \rvVec{W}_k|^2}{(\nt-K+1)(1-\sigma^2)} \\&\sim {\rm Gamma}\Bigl(\nt-K+1,\frac{1}{\nt-K+1}\Bigr), \label{eq:tmp166}
\end{align}%
where \eqref{eq:tmp166} is because $|\hat{\rvVec{H}}_k^\T \rvVec{W}_k|^2 \sim \Cc\Nc(0,(1-\sigma^2)\Id_{\nt-K+1})$ since $|\hat{\rvVec{H}}_k^\T \rvVec{W}_k|^2 = \|\hat{\rvVec{H}}_k^\T \rvMat{U}_k\|^2$ and $\rvMat{U}_k \in \CC^{\nt \times (\nt - K +1)}$ is independent of $\hat{\rvVec{H}}_k \sim \Cc\Nc(0,(1-\sigma^2) \Id_\nt)$.
If $\liminf\limits_{K\to\infty}\frac{\nt}{K} > 1$, we have 
$\rv{B}_K \xrightarrow{\text{a.s.}} 1$ by the strong law of large number.

Next, we consider the sum $\displaystyle\sum_{l\ne k} |\tilde{\rv{G}}_{k,l}|^2 =
\tilde{\rvVec{H}}_k^\T \Bigl(\sum_{l\ne k} \rvVec{W}_l\rvVec{W}_l^\H \Bigr) \tilde{\rvVec{H}}_k^*$. Let $\rv{C}_K \defeq \frac{1}{(K-1)\sigma^2}\displaystyle\sum_{l\ne k} |\tilde{\rv{G}}_{k,l}|^2$, then $\E[\rv{C}_K] = 1$. The
matrix $\rvMat{Q}_k:=\sum_{l\ne k} \rvVec{W}_l\rvVec{W}_l^\H $ is independent of
$\tilde{\rvVec{H}}_k^\T$ and has at most $K-1$ non-zero eigenvalues. Let $\rvMat{Q}_k = \rvMat{V}
\rvMat{\Lambda} \rvMat{V}^\H$ be the eigenvalue decomposition with $\rvMat{V}\in\mathbb{C}^{\nt\times (K-1)}$
being orthogonal and $\trace(\rvMat{\Lambda}) = K-1$. Then, $\sum_{l\ne k} |\tilde{\rv{G}}_{k,l}|^2 =
\breve{\rvVec{H}}_k^{\T} \rvMat{\Lambda} \breve{\rvVec{H}}_k^*$ where $\breve{\rvVec{H}}_k :=
\rvMat{V} \tilde{\rvVec{H}}_k$ contains $K-1$ i.i.d.~$\mathcal{CN}(0,\sigma^2)$ entries and is
independent of $\rvMat{\Lambda}$. With the assumption that $\liminf\limits_{K\to\infty}\frac{\nt}{K} > 1$, we
can show that the eigenvalues of $\rvMat{\Lambda}$ is bounded almost surely. To that end, let us write the precoding
matrix in an alternative form, namely, 
\begin{align}
  \rvMat{W} &:= [\rvVec{W}_1 \ \cdots \ \rvVec{W}_K] = \hat{\rvMat{H}}^\dagger \rvMat{D}
\end{align}%
where $\hat{\rvMat{H}}^\dagger := \hat{\rvMat{H}}^\H (\hat{\rvMat{H}}\hat{\rvMat{H}}^\H)^{-1}$ is the pseudo-inverse of the channel matrix
$\hat{\rvMat{H}}$ whereas $\rvMat{D}$ is a diagonal matrix with the $k$-th diagonal element, $\rv{D}_k$, normalizes the norm of
the $k$-th column of $\rvMat{H}^\dagger$. Since the norm of each column of $\hat{\rvMat{H}}^\dagger$ is lower bounded by the minimum
eigenvalue $\lambda_{\min}( (\hat{\rvMat{H}}^\dagger)^\H \hat{\rvMat{H}}^\dagger) = \lambda_{\min}((\hat{\rvMat{H}}\hat{\rvMat{H}}^\H)^{-1}) =
\lambda_{\max}(\hat{\rvMat{H}}\hat{\rvMat{H}}^\H)^{-1}$, we have 
\begin{align}
  \rv{D}_k &\le \lambda_{\max}(\hat{\rvMat{H}}\hat{\rvMat{H}}^\H), \quad \forall\,k \in [K]. 
\end{align}%
Consequently, we have 
\begin{align}
  \lambda_{\max}(\rvMat{W}^\H\rvMat{W}) &\le 
  \lambda_{\max}( (\hat{\rvMat{H}}^\dagger)^\H \hat{\rvMat{H}}^\dagger) \lambda_{\max}( \rvMat{D}^\H \rvMat{D} ) \le
  \frac{\lambda_{\max}(\hat{\rvMat{H}}\hat{\rvMat{H}}^\H)}{\lambda_{\min}(\hat{\rvMat{H}}\hat{\rvMat{H}}^\H)}
\end{align}%
which is upper bounded almost surely when $\liminf\limits_{K\to\infty}\frac{\nt}{K} > 1$ according to \cite{bai1998}. 
Following the footsteps in~\cite[Lemma 4]{WagnerSlock}, we can show that
\begin{align}
  \frac{1}{(K-1)\sigma^2}{\breve{\rvVec{H}}_k^\T} \rvMat{\Lambda}
  {\breve{\rvVec{H}}_k^*} - \frac{1}{K-1} \trace (\rvMat{\Lambda}) \xrightarrow{\text{a.s.}} 0,
\end{align}
which reads 
  $\rv{C}_K = \frac{1}{(K-1)\sigma^2}\sum_{l\ne k} |\tilde{\rv{G}}_{k,l}|^2 \xrightarrow{\text{a.s.}} 1$.

\subsection{Proof of Proposition~\ref{prop:mixRate}}
\label{app:mix}

\newcommand{\sK}{S_K}
\newcommand{\fK}{\rv{F}_{K,k}}

Since \eqref{eq:Rsymmix} follows readily from Proposition~\ref{prop:Rsym} by replacing $p$ by $\frac{P-P_0}{K}$, we focus on 
\eqref{eq:R0mix}. Due to the space limitation, we omit some of the technical details and only provide a sketch of proof. 

First, we have \eqref{eq:tmp575} and \eqref{eq:tmp576}.
\begin{figure*}[!t]
	\begin{align}
	\min_{k\in[K]} \SINR_k^{(0)} &\le \frac{P_0 \, {\displaystyle\max_{k\in[K]}} \{\frac{1}{n_t}\|\rvVec{H}_k\|^2\} }{1+ \frac{P-P_0}{K}
		\left( (\nt-K+1)\, {\displaystyle\min_{k\in[K]}} \{\frac{1}{\nt-K+1}{|\rv{G}_k|}^2\}  + {\displaystyle\max_{k\in[K]}} \{\sum_{l\ne k}
		|\tilde{\rv{G}}_{k,l}|^2 \}\right) }, \label{eq:tmp575} \\
	\min_{k\in[K]} \SINR_k^{(0)} &\ge \frac{P_0 \, {\displaystyle\min_{k\in[K]}} \{\frac{1}{n_t}\|\rvVec{H}_k\|^2\} }{1+ \frac{P-P_0}{K} \left( (\nt-K+1)\, {\displaystyle\max_{k\in[K]}} \{\frac{1}{\nt-K+1}{|\rv{G}_k|}^2\}  + {\displaystyle\max_{k\in[K]}} \{\sum_{l\ne k} |\tilde{\rv{G}}_{k,l}|^2 \}\right) }. \label{eq:tmp576}
	\end{align}%
	\setlength{\arraycolsep}{1pt}
	\hrulefill \setlength{\arraycolsep}{0.0em}
	\vspace*{1pt}
\end{figure*}
Then, from \eqref{eq:fast} in Lemma~\ref{lemma:SNR01}, we have $\min_{k\in[K]} \normh \xrightarrow{\text{p}} 1.$ In fact, following the
same proof of \eqref{eq:fast}, one can show that $\max_{k\in[K]} \normh \xrightarrow{\text{p}} 1$ since $\nt = \Omega(K)$. For the same reason,
we can show that ${\displaystyle\max_{k\in[K]}} \{\frac{1}{\nt-K+1}{|\rv{G}_k|}^2\}\xrightarrow{\text{p}} 1-\sigma^2$ and ${\displaystyle\min_{k\in[K]}} \{\frac{1}{\nt-K+1}{|\rv{G}_k|}^2\}\xrightarrow{\text{p}} 1-\sigma^2$ since $|\rv{G}_k|^2 =
|\rvVec{H}_k^\T \rv{W}_k|^2 \overset{\text{d}}{=} |\sigma \rv{A}_K + \sqrt{(\nt-K+1)(1-\sigma^2) \rv{B}_K}|^2$ with $\rv{A}_K$ and
$\rv{B}_K$ defined as in Lemma~\ref{lemma:SINRzf}. Indeed, we need to apply the assumption $(\nt-K+1)(1-\sigma^2)\to\infty$ to get rid of
the impact of $\rv{A}_K$ and the assumption $\nt-K+1 = \Omega(K)$ to obtain the convergence in probability. Therefore, both
the upper and lower bounds \eqref{eq:tmp575} and \eqref{eq:tmp576} tend to the same random variable in probability. This leads to the convergence in probability of $\ln (1+\min_{k\in[K]} \SINR_k^{(0)})$ by continuous mapping theorem.

Finally, we can prove that $\ln (1+\min_{k\in[K]} \SINR_k^{(0)})$ is uniformly integrable to get the convergence of mean, as is done
in Appendix~\ref{proof:lemmaSNR0}.

\bibliographystyle{IEEEtran}
\bibliography{IEEEabrv,./biblio}

\begin{thebibliography}{10}
\providecommand{\url}[1]{#1}
\csname url@samestyle\endcsname
\providecommand{\newblock}{\relax}
\providecommand{\bibinfo}[2]{#2}
\providecommand{\BIBentrySTDinterwordspacing}{\spaceskip=0pt\relax}
\providecommand{\BIBentryALTinterwordstretchfactor}{4}
\providecommand{\BIBentryALTinterwordspacing}{\spaceskip=\fontdimen2\font plus
\BIBentryALTinterwordstretchfactor\fontdimen3\font minus
  \fontdimen4\font\relax}
\providecommand{\BIBforeignlanguage}[2]{{%
\expandafter\ifx\csname l@#1\endcsname\relax
\typeout{** WARNING: IEEEtran.bst: No hyphenation pattern has been}%
\typeout{** loaded for the language `#1'. Using the pattern for}%
\typeout{** the default language instead.}%
\else
\language=\csname l@#1\endcsname
\fi
#2}}
\providecommand{\BIBdecl}{\relax}
\BIBdecl

\bibitem{Yang2016contentdelivery}
S.~Yang, K.-H. Ngo, and M.~Kobayashi, ``Content delivery with coded caching and
  massive {MIMO} in {5G},'' in \emph{9th International Symposium on Turbo Codes
  and Iterative Information Processing (ISTC)}, Brest, France, Sep. 2016, pp.
  370--374.

\bibitem{hoang2016complementary}
K.-H. Ngo, S.~Yang, M.~Kobayashi, and K.~Huang, ``On the complementary roles of
  massive {MIMO} and coded caching for content delivery,'' in
  \emph{International Conference on Advanced Technologies for Communications
  (ATC)}, Hanoi, Vietnam, Oct. 2016, pp. 237--242.

\bibitem{hoang2015cacheaided}
K.-H. Ngo, S.~Yang, and M.~Kobayashi, ``Cache-aided content delivery in {MIMO}
  channels,'' in \emph{54th Annual Allerton Conference on Communication,
  Control, and Computing (Allerton)}, IL, USA, Sep. 2016, pp. 93--100.

\bibitem{index2015global}
{Cisco Visual Networking Index}, ``Global mobile data traffic forecast update,
  2015-2020,'' \emph{Cisco white paper}, 2015.

\bibitem{larsson2014massive}
E.~Larsson, O.~Edfors, F.~Tufvesson, and T.~Marzetta, ``Massive {MIMO} for next
  generation wireless systems,'' \emph{{IEEE} Commun. Mag.}, vol.~52, no.~2,
  pp. 186--195, Feb. 2014.

\bibitem{maddah2013fundamental}
M.~A. Maddah-Ali and U.~Niesen, ``Fundamental limits of caching,'' \emph{IEEE
  Trans. Inf. Theory}, vol.~60, no.~5, pp. 2856--2867, May 2014.

\bibitem{maddahISIT2015Tutorial}
------, ``Cache networks: An information-theoretic view,'' \emph{IEEE
  Information Theory Society Newsletter (Tutorial at ISIT'2015)}, Dec. 2015.

\bibitem{bastug2014living}
E.~Ba{\c{s}}tu{\u{g}}, M.~Bennis, and M.~Debbah, ``{Living on the edge: The
  role of proactive caching in 5G wireless networks},'' \emph{{IEEE} Commun.
  Mag.}, vol.~52, no.~8, pp. 82--89, Aug. 2014.

\bibitem{paschos2016wireless}
G.~Paschos, E.~Ba{\c{s}}tu{\u{g}}, I.~Land, G.~Caire, and M.~Debbah,
  ``{Wireless caching: technical misconceptions and business barriers},''
  \emph{{IEEE} Commun. Mag.}, vol.~54, no.~8, pp. 16--22, Aug. 2016.

\bibitem{maddah2013decentralized}
M.~A. Maddah-Ali and U.~Niesen, ``Decentralized coded caching attains
  order-optimal memory-rate tradeoff,'' \emph{IEEE/ACM Trans. Netw.}, vol.~23,
  no.~4, pp. 1029--1040, Aug. 2015.

\bibitem{niesen2013coded}
U.~Niesen and M.~A. Maddah-Ali, ``Coded caching with nonuniform demands,''
  \emph{IEEE Trans. Inf. Theory}, vol.~63, no.~2, pp. 1146--1158, Feb. 2017.

\bibitem{pedarsani2013online}
R.~Pedarsani, M.~A. Maddah-Ali, and U.~Niesen, ``Online coded caching,''
  \emph{{IEEE/ACM} Trans. Netw.}, vol.~24, no.~2, pp. 836--845, Apr. 2016.

\bibitem{ji2015order}
\BIBentryALTinterwordspacing
M.~Ji, A.~M. Tulino, J.~Llorca, and G.~Caire, ``Order-optimal rate of caching
  and coded multicasting with random demands,'' \emph{CoRR}, vol.
  abs/1502.03124, 2015. [Online]. Available:
  \url{http://arxiv.org/abs/1502.03124}
\BIBentrySTDinterwordspacing

\bibitem{Shariatpanahi2017multiantenna}
S.~P. Shariatpanahi, G.~Caire, and B.~H. Khalaj, ``Multi-antenna coded
  caching,'' in \emph{2017 IEEE International Symposium on Information Theory
  (ISIT)}, Jun. 2017, pp. 2113--2117.

\bibitem{Elia2017}
J.~Zhang and P.~Elia, ``Fundamental limits of cache-aided wireless {BC}:
  Interplay of coded-caching and {CSIT} feedback,'' \emph{IEEE Trans. Inf.
  Theory}, vol.~63, no.~5, pp. 3142--3160, May 2017.

\bibitem{Piovano2017codedcachingMISO}
E.~Piovano, H.~Joudeh, and B.~Clerckx, ``On coded caching in the overloaded
  {MISO} broadcast channel,'' in \emph{2017 IEEE International Symposium on
  Information Theory (ISIT)}, Jun. 2017, pp. 2795--2799.

\bibitem{Shariatpanahi2016multiserver}
S.~P. Shariatpanahi, S.~A. Motahari, and B.~H. Khalaj, ``Multi-server coded
  caching,'' \emph{IEEE Trans. Inf. Theory}, vol.~62, no.~12, pp. 7253--7271,
  Dec. 2016.

\bibitem{jindal2006capacity}
N.~Jindal and Z.-Q. Luo, ``Capacity limits of multiple antenna multicast,'' in
  \emph{IEEE International Symposium on Information Theory}, Jul. 2006, pp.
  1841--1845.

\bibitem{Asma2017opportunistic}
A.~Ghorbel, K.-H. Ngo, R.~Combes, M.~Kobayashi, and S.~Yang, ``Opportunistic
  content delivery in fading broadcast channels,'' in \emph{IEEE Global
  Communications Conference (GLOBECOM)}, Singapore, Dec. 2017.

\bibitem{zhang2015coded}
J.~Zhang, F.~Engelmann, and P.~Elia, ``Coded caching for reducing
  {CSIT}-feedback in wireless communications,'' in \emph{53rd Annual Allerton
  Conference on Communication, Control, and Computing (Allerton)}, IL, USA,
  Sep. 2015, pp. 1099--1105.

\bibitem{yang2013degrees}
S.~Yang, M.~Kobayashi, D.~Gesbert, and X.~Yi, ``Degrees of freedom of time
  correlated {MISO} broadcast channel with delayed {CSIT},'' \emph{IEEE Trans.
  Inf. Theory}, vol.~59, no.~1, pp. 315--328, Jan. 2013.

\bibitem{dai2015rate}
M.~Dai, B.~Clerckx, D.~Gesbert, and G.~Caire, ``A rate splitting strategy for
  massive {MIMO} with imperfect {CSIT},'' \emph{IEEE Trans. Wireless Commun.},
  vol.~15, no.~7, pp. 4611--4624, Jul. 2016.

\bibitem{billingsley1999convergence}
P.~Billingsley, \emph{{Convergence of Probability Measures}}, 2nd~ed., ser.
  Probability and Statistics.\hskip 1em plus 0.5em minus 0.4em\relax Wiley,
  August 1999.

\bibitem{Hassibi2008scalinglaw}
T.~Y. Al-Naffouri, A.~F. Dana, and B.~Hassibi, ``Scaling laws of multiple
  antenna group-broadcast channels,'' \emph{IEEE Trans. Wireless Commun.},
  vol.~7, no.~12, pp. 5030--5038, Dec. 2008.

\bibitem{bai1998}
Z.-D. Bai and J.~W. Silverstein, ``No eigenvalues outside the support of the
  limiting spectral distribution of large-dimensional sample covariance
  matrices,'' \emph{Annals of probability}, pp. 316--345, 1998.

\bibitem{WagnerSlock}
S.~Wagner, R.~Couillet, M.~Debbah, and D.~T.~M. Slock, ``Large system analysis
  of linear precoding in correlated {MISO} broadcast channels under limited
  feedback,'' \emph{IEEE Trans. Inf. Theory}, vol.~58, no.~7, pp. 4509--4537,
  Jul. 2012.

\end{thebibliography}
\end{document}